\documentclass[journal,onecolumn]{IEEEtran}
\usepackage{latexsym,epsfig,subfigure,amssymb,dsfont}
\usepackage{amsmath}
\interdisplaylinepenalty=2500
\newtheorem{definition}{Definition}[section]
\newtheorem{remark}{Remark}[section]
\newtheorem{theorem}{Theorem}[section]

\newtheorem{lemma}{Lemma}[section]

\newtheorem{corollary}{Corollary}[section]

\newcommand{\eqdef} {\mbox{$\:\stackrel{\triangle}{=}\:$}}
\newcommand{\firstorder} {\mbox{$\:\stackrel{.}{=}\:$}}

\newcommand{\cX}{{\cal X}}
\def\bx{\mathbf{x}}
\def\bX{\mathbf{X}}
\def\bY{\mathbf{Y}}
\def\bQ{\mathbf{Q}}
\def\bZ{\mathbf{Z}}

\def\by{\mathbf{y}}
\def\bq{\mathbf{q}}

\def\b0{\mathbf{0}}

\def\bbZ{\mathbb{Z}}
\def\bbN{\mathbb{N}}

\newcommand{\fA}{{\mathfrak{A}}}
\newcommand{\cB}{{\mathcal{B}}}
\newcommand{\cG}{{\mathcal{G}}}
\newcommand{\cO}{{\mathcal{O}}}
\newcommand{\cS}{{\mathcal{S}}}
\newcommand{\cT}{{\mathcal{T}}}
\newcommand{\cC}{{\mathcal{C}}}
\newcommand{\tcS}{{\tilde{\mathcal{S}}}}
\newcommand{\tfA}{{\tilde{\mathfrak{A}}}}
\newcommand{\tcC}{{\tilde{\mathcal{C}}}}

\def\bz{\mathbf{z}}

\newcommand{\be}{\begin{equation}}
\newcommand{\ee}{\end{equation}}
\newcommand{\bea}{\begin{eqnarray}}
\newcommand{\eea}{\end{eqnarray}}
\newcommand{\bean}{\begin{eqnarray*}}
\newcommand{\eean}{\end{eqnarray*}}
\newcommand{\ben}{\begin{enumerate}}
\newcommand{\een}{\end{enumerate}}

\newcommand{\qed}{\hspace*{\fill}%
    \vbox{\hrule\hbox{\vrule\squarebox{.667em}\vrule}\hrule}\smallskip}
    \def\squarebox#1{\hbox to #1{\hfill\vbox to #1{\vfill}}}

\begin{document}

\bibliographystyle{ieeetr}
\baselineskip=1.90\normalbaselineskip

\title{Towards Exploring Fundamental Limits of System-Specific Cryptanalysis Within
Limited Attack Classes: Application to ABSG}

\author{Y\"{u}cel Altu\u{g},~\IEEEmembership{} M. K{\i}van\c{c} M{\i}h\c{c}ak~\IEEEmembership{Member}%
\thanks{The authors are with the Electrical and Electronic Engineering Department of Bo\u{g}azi\c{c}i University,
Istanbul, 34342, Turkey (e-mail: yucel.altug@boun.edu.tr, kivanc.mihcak@boun.edu.tr)}%
\thanks{Y.~Altu\u{g} is partially supported by T\"{U}B\.{I}TAK Career
Award no. 106E117; M.~K.~M{\i}h\c{c}ak is partially supported by
T\"{U}B\.{I}TAK Career Award no. 106E117 and T\"{U}BA-GEBIP Award.}}
\maketitle

\begin{abstract}
A new approach on cryptanalysis is proposed where the goal is to
explore the fundamental limits of a specific class of attacks
against a particular cryptosystem. As a first step, the approach is
applied on ABSG, which is an LFSR-based stream cipher where
irregular decimation techniques are utilized. Consequently, under 
some mild assumptions, which are common in cryptanalysis, the
tight lower bounds on the algorithmic complexity of successful
Query-Based Key-Recovery attacks are derived for two different
setups of practical interest. The proofs rely on the concept of
``typicality'' of information theory.
\end{abstract}

\section{Introduction}
\label{sec:intro}

In this paper, we introduce a (to the best of our knowledge) novel approach to cryptanalysis.
In our approach, the focus is jointly on a particular cryptosystem and a specific (sufficiently broad)
class of attacks of interest at the same time. Then, under some mild conditions, the goal is to
derive the {\em achievable fundamental  performance limit} for the attacks within the considered
class of interest against the cryptosystem at hand. The aforementioned limit should be ``achievable'',
in the sense that it is necessary to provide an explicit attack construction of which performance
coincides with the derived limit. Furthermore, the aforementioned limit should also necessarily be
``fundamental'',  in the sense that within the considered specific class, there does not
exist any attack of which performance is superior to the derived limit.

Our proposed approach contrasts with the trend in conventional
cryptanalysis, which can be outlined in two categories. In the first
category, the focus is on the  construction of  a generic attack,
which should be applicable (subject to slight modifications) to most
cryptosystems; common examples include time-memory tradeoff attacks
\cite{hellman80, bir-sha00}, correlation attacks \cite{sie85,
mei-sta89}, algebraic attacks \cite{cou-mei03, cou03} and alike. The
second category is conceptually on the opposite side of the
spectrum. Here, given a particular cryptosystem, the focus is on the
construction of a potentially-specialized attack, which is
``tailored'' specifically against the system at hand; hence, the
resulting attack is not applicable to a broader class of
cryptosystems in general. Although the approaches pursued in the
aforementioned two attack categories are radically different, it is
interesting to note that, for both of them the underlying
fundamental goal is the same, which can be summarized as providing a
``design advice'' to the cryptosystem designer. In practice, at first, the
cryptosystem designer is expected to test his/her proposed
system against generic attacks (first category); thus, such attacks
serve as a benchmark for the community of cryptosystem designers.
Next, the cryptanalyst tests a proposed cryptosystem via
constructing a cryptosystem-specific attack algorithm (second
category). Both categories have been shown to be extremely valuable in
practice since the first one provides a ``unified approach'' to
cryptanalysis via providing some generic attack algorithms and the
second one specifically tests the security of the considered
cryptosystem and consequently yields its potential weaknesses. On
the other hand, both categories of the conventional approach in
cryptanalysis lack to provide fundamental performance bounds, i.e.,
the question of ``what is the best that can be done?'' goes
unanswered. The main reason is that, for the first category, finding
out a fundamental performance bound necessarily requires considering
all possible cryptosystems, which is infeasible in practice; within
the second category, providing a fundamental performance bound
necessarily requires ``describing'' all possible cryptanalytic
propositions (in a computational sense) and quantifying the
resulting performances, which is again infeasible in practice.

In our proposed approach, we aim to derive  ``the best possible performance bound''
\footnote{Note that, this approach is analogous to providing both achievability and converse proofs
in classical information-theory problems. This connection will further be clarified throughout the paper.}
in a reasonably-confined setup.
Intuitively, we ``merge'' the first and the second categories of the conventional cryptanalytic approach;
we jointly focus on {\em both} a particular cryptosystem {\em and} a specific class of attacks, and subsequently
aim to analytically quantify the  fundamental, achievable performance bounds, i.e.,
specifically for a given cryptosystem,
our goal is to  find the achievable lower-bound on the complexity of a proposed class of attacks,
under a set of mild assumptions.
The main impact of this approach  is that, it aims to provide an advice for the cryptanalyst, instead of the
cryptosystem designer,  in contrast with the conventional approach.
If this resulting advice is ``positive'' (i.e., the fundamental achievable performance bound is of
polynomial complexity),  then the weakness of the analyzed cryptosystem is guaranteed
(which can also be achieved via pursuing the second category of the conventional cryptanalytic approach).
However, more interestingly, if the resulting advice is ``negative'' (i.e., the fundamental achievable
performance bound is of exponential complexity), then the considered class of attacks is {\em guaranteed}
to be useless, which, in turn, directs a cryptanalyst to consider different classes of attacks,
instead of experimenting with various attacks from the considered class via a (possibly educated) trial-and-error approach.
Thus, the negative advice case (for which this paper serves an exemplary purpose)
constitutes the fundamental value of our approach.
We believe that our efforts can be viewed as a contribution towards  the goal of
enhancing cryptanalytic approaches via incorporating a structural and procedural methodology.

In order to illustrate our approach, in this paper we consider a
class of Query-Based Key-Recovery attacks (of which precise
definition is given in Sec.~\ref{ssec:problem-formulation}) targeted
towards ABSG \cite{ABSG}, which is an LFSR(linear feedback shift
register)-based stream cipher that uses irregular decimation
techniques. Recall that, within the class of stream ciphers, the
usage of LFSR is an attractive choice due to the implementation
efficiency and favorable statistical properties of the LFSR output;
however, security of LFSR-based stream ciphers is contingent upon
applying additional non-linearities per the linear nature of LFSR
\cite{golomb}. An approach, which aims to achieve this task, is to
use irregular decimation techniques to the LFSR output \cite{ABSG,
BSG, shrinking, self_shrinking}. The motivation lying behind the
development of this approach is to render most conventional attacks
useless (such as algebraic attacks). Shrinking \cite{shrinking} and
self-shrinking generators (SSG) \cite{self_shrinking} are two
important examples of this approach. In particular, in the
literature SSG is well-known to be a very efficient algorithm and it
has been shown to possess favorable security properties
\cite{most_prob_SSG, zener_BDD_SSG, johansson_SSG}. The bit-search
generator (BSG) \cite{BSG} and its variant ABSG \cite{ABSG} are
newer algorithms, which also use irregular decimation techniques. In
\cite{gou-sib:06}, it has been shown that the efficiency (output
rate) of ABSG is superior to that of SSG and the security level of
ABSG is at least the same level provided by SSG under a broad class
of attacks. A detailed analysis of the statistical properties of
ABSG and BSG algorithms has recently been presented in \cite{Alt08}.
Since ABSG has been shown to be a state-of-the-art cryptosystem, in
our developments we focus on it under a reasonable class of attacks
and subsequently provide ``negative advices'' for the cryptanalyst
in various setups of interest. Next, we summarize our main results.

\textbf{Main Results:}
Our contributions, which have been derived under a set of mild assumptions
(specified in Sec.~\ref{ssec:assumptions}), are as follows:
\begin{itemize}

\item We show that breaking ABSG algorithm is equivalent to ``guessing''
a sequence of random variables, which are i.i.d. (independent
identically distributed) with geometric distribution of parameter
$1/2$ using complexity theoretic notions
(Theorem~\ref{theorem:attack-equivalence}).

\item In order to solve the problem mentioned in the previous item,
we formulate a sufficiently broad class of attacks, termed as
``Query-Based Key-Recovery attacks'', which are quite generic by construction,
and hence applicable for cryptanalysis for a wide range of cryptosystems (Definition~\ref{def:qubar}).

\item Within the class of attacks mentioned in the previous item,
first we concentrate on a practically-meaningful subset of them
(termed ``Exhaustive-Search Type Query-Based Key-Recovery attacks'')
(Sec.~\ref{sec:exhaustive-search}); we derive a fundamental lower
bound on the complexity of any successful attack in this subset
(Theorem~\ref{thrm:converse-exhaustive}); this lower bound is proven
to be  achievable to the first order in the exponent
(Theorem~\ref{thrm:achievability-exhaustive}).

\item We consider the set of all Query-Based Key-Recovery attacks (Sec.~\ref{sec:general-case});
we derive a fundamental lower bound on the complexity of any
successful attack within this set
(Theorem~\ref{thrm:converse-general}), followed by stating the proof
of the achievability result (to the first order in the exponent)
using the ``most probable choice'' attack given in \cite{ABSG}
(Theorem~\ref{thrm:achievability-general}).

\end{itemize}


\textbf{Organization of the Paper:} In
Section~\ref{sec:notation-background}, we present the notation used
in the paper and recall the definition of ABSG.
Section~\ref{sec:setup-formulation} provides the assumptions we have
employed throughout the paper, the problem formulation and the
definition of ``Query-Based Key-Recovery'' (QuBaR) attacks. In
Section~\ref{sec:exhaustive-search}, we derive a tight (to the
first order in the exponent) lower bound on the complexity of
exhaustive-search type QuBaR attacks. In
Section~\ref{sec:general-case}, we derive a tight lower bound on
the complexity of any QuBaR attack. We conclude with discussions
given in Section~\ref{sec:conclusion}.

\section{Notation and Background}
\label{sec:notation-background}

\subsection{Notation}
\label{ssec:notation}

Boldface letters denote vectors; regular letters with subscripts
denote individual elements of vectors. Furthermore, capital letters
represent random variables and lowercase letters denote individual
realizations of the corresponding random variable. The sequence of
$\left\{ a_1, a_2, \ldots , a_N \right\}$ is compactly represented
by $\mathbf{a}_1^N$. Given $x \in \{ 0,1 \}$, $\bar{x}$ denotes the
binary complement of $x$. The abbreviations ``i.i.d.'', ``p.m.f.''
and ``w.l.o.g.'' are shorthands for the terms ``independent
identically distributed'', ``probability mass function'' and
``without loss of generality'', respectively. 
Throughout the paper, all logarithms are base-$2$ unless otherwise specified.
Given a discrete random variable $X$ with the corresponding p.m.f. $p \left( \cdot
\right)$, defined on the alphabet $\cX$, its entropy (in bits) is $H(X) \eqdef
-\sum_{x \in \cX} p(x) \log p(x)$.
In the sequel, we say that ``$a_n$ and $b_n$ are equal to the first order in
the exponent'' provided that $\lim_{n \rightarrow \infty}
\frac{1}{n} \log \frac{a_n}{b_n} = 0$, which is denoted by $a_n
\firstorder b_n$ in our notation.

\subsection{Background}
\label{ssec:background}

Throughout this paper, we use the notation that was introduced in \cite{Alt08}.

\begin{definition}
Given an infinite length binary sequence $\bx = \left\{ x_n
\right\}_{n=1}^{\infty}$ which is an input to the ABSG algorithm, we define
\begin{itemize}
\item $\mathbf{\by} \triangleq\mathcal{A}\left(\bx\right)$, where the sequence
$\by$ represents the internal state of the ABSG algorithm and $y_i \in
\left\{ \varnothing , 0 , 1 \right\}$, $1 \leq i < \infty$. The
action of algorithm $\mathcal{A}$ is defined via the recursive mapping
$\mathcal{M}$:
\[
y_i=\mathcal{M}(y_{i-1},x_i), \quad 1 \leq i < \infty,
\]
with the initial condition $y_0 = \varnothing$. The mapping
$\mathcal{M}$ is defined in Table~\ref{tab:mapping} .

\begin{table}[h]\caption{Transition Table of algorithm $\mathcal{A}$}
\begin{center}
\begin{tabular}{c|cc}
  $y_{i-1} \backslash x_i$ & 0 & 1 \\
  \hline
  $\varnothing$ & 0 & 1 \\
  0 & $\varnothing$ & 0 \\
  1 & 1 & $\varnothing$ \\
\end{tabular}
\end{center}
\label{tab:mapping}
\end{table}

\item $\bz \triangleq \cB\left(\by\right)$, where the sequence
$\bz$ represents the output of the ABSG algorithm, such that the
action of the algorithm $\cB$ is given as follows:
\begin{equation*}
z_j =\left\{ \begin{array}{cl} y_{i-1}, & \textrm{ if } \, y_i=\varnothing \textrm{ and } y_{i-2}=\varnothing,\\
\bar{y}_{i-1}, & \textrm{ if } \, y_i=\varnothing \textrm{ and } y_{i-2} \neq \varnothing,\\
\end{array} \right.
\end{equation*} where $j \leq i$ and $i,j \in \mathbb{Z}^{+}$.
\end{itemize}
\label{def:ABSG}
\end{definition}

From Definition~\ref{def:ABSG}, we clearly deduce that the ABSG algorithm produces an output
bit ($z_j$ denoting the $j$-th output bit) if and only if the value of the corresponding internal state variable ($y_i$
denoting the value of the internal state variable at time $i$) is $\varnothing$. The fact that $y_i \neq \varnothing$ for all $i$
 is the reason of the mismatch
between the input sequence indices (which are the same as the indices of the internal state variables) and the output
sequence indices.

\section{Problem Setup and Formulation}
\label{sec:setup-formulation}

\subsection{Assumptions and Preliminaries}
\label{ssec:assumptions}

Throughout this paper, we consider the type of attacks, in which retrieving
$L$ (where $L$ is the degree of the feedback polynomial of the
generating LFSR) linear equations in terms of $\bx_1^M$ is aimed.
This type of attacks correspond to {\em key recovery attacks} to ABSG
(assuming that the feedback polynomial of LFSR is known to the
attacker, which is a common assumption in cryptanalysis). In particular, within the
class of key recovery attacks, we concentrate on {\em query-based key recovery
attacks} (abbreviated as {\bf {\em ``QuBaR attacks''}}  in the rest of the paper); QuBaR
attacks shall be defined formally in Sec.~\ref{ssec:problem-formulation}. The following
assumptions are made in this attack model:

\begin{itemize}
\item[\bf{A1}:] The length-$M$ input sequence $\bx_1^M$ is
assumed to be a realization of an i.i.d. Bernoulli process with
parameter $1/2$.

\item[\bf{A2}:] The length-$N$ output sequence $\bz_1^N$ is assumed to be
given to the attacker, where $N, M \in \bbZ^+$ (note that, this implies
we necessarily have $M>N \geq 1$ due to Definiton~\ref{def:ABSG}).

\item[\bf{A3}:] Explicit knowledge of the feedback
polynomial of the generating LFSR is not used.

\item[\bf{A4}:] The degree of the feedback polynomial of the generating LFSR,
i.e., $L$, is sufficiently large.

\end{itemize}

Note that assumption A3 will be further clarified after we describe
QuBaR attack model precisely. Further, from now on we denote the
input sequence as $\bX_1^M$ and the corresponding internal state
sequence as $\bY_1^M$ due to the stochastic nature of the input and
hence the internal state sequences. Next, we continue with the
following definitions.

\begin{definition}
The symbol $H_i$ denotes the index of the $i$-th $\varnothing$ in $\bY_1^M$, for $0 \leq i \leq N$.
\label{def:h}
\end{definition}

Note that, since we have $Y_0 = \varnothing$ with probability $1$ by convention, we also use
$H_0 = 0$ with probability $1$ as the initial condition for $\left\{ H_i \right\}$.

\begin{definition}
We define $Q_i \eqdef H_{i} - H_{i-1} - 2$,  for $1 \leq i \leq N$.
\label{def:Q}
\end{definition}

\begin{remark}
For each $Q_i$ (regardless of its particular realization), the ABSG algorithm generates
an output bit $z_i$. Thus, the number of output bits in the ABSG algorithm is precisely equal to the
number of corresponding $\left\{ Q_i \right\}$.
\label{rem:Q-outputbit}
\end{remark}
Next, we state the following result regarding the distribution of $\left\{ Q_i \right\}$, which
will be heavily used throughout the rest of the paper.

\begin{lemma}
Under assumptions A1 and A2, the random variables $\left\{ Q_i
\right\}$ are i.i.d. with geometric p.m.f. of parameter $1/2$:
\begin{equation}
p \left( q_i \right) \eqdef \Pr \left[ Q_i = q_i | \bz_1^N \right] = \left( 1/2 \right)^{q_i+1}, \textrm{
for } q_i \in \bbN, \; 1 \leq i \leq N. \label{eq:Q-pmf}
\end{equation}
\label{lem:Q-pmf}
\end{lemma}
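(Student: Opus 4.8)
The plan is to exploit the fact that the recursive mapping $\mathcal{M}$ of Table~\ref{tab:mapping} parses the input $\bX_1^M$ into a sequence of disjoint, consecutive blocks, each delimited by two successive occurrences of $\varnothing$ in $\bY_1^M$, and that the $i$-th such block contributes \emph{exactly one} quantity $Q_i$ together with \emph{exactly one} output bit $z_i$ (cf.\ Remark~\ref{rem:Q-outputbit}). Establishing the i.i.d.\ geometric law then amounts to (i) analyzing a single block, (ii) arguing blocks are i.i.d., and (iii) checking that conditioning on $\bz_1^N$ does not perturb the law of the $\{Q_i\}$.

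First I would pin down the block structure. Starting from any index $h=H_{i-1}$ with $Y_h=\varnothing$, Table~\ref{tab:mapping} forces $Y_{h+1}=X_{h+1}\eqdef b_i$ (a ``fresh'' bit), after which the state stays equal to $b_i$ for precisely as long as the input equals $\bar b_i$ and returns to $\varnothing$ upon the first subsequent input equal to $b_i$. Hence block $i$ consists of the leading bit $b_i$, a run of $Q_i\ge 0$ copies of $\bar b_i$, and a terminating copy of $b_i$, giving $H_i=H_{i-1}+Q_i+2$ in agreement with Definition~\ref{def:Q}. Conditioned on $b_i$, the value $Q_i$ is just the length of the initial run of $\bar b_i$'s before the next $b_i$ in the i.i.d.\ $\mathrm{Bernoulli}(1/2)$ input; by A1 this run length satisfies $\Pr[Q_i=q\mid b_i]=(1/2)^{q+1}$ for every $q\in\bbN$, independently of the value of $b_i$. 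Thus $Q_i$ and $b_i$ are independent, with $Q_i$ geometric$(1/2)$ and $b_i\sim\mathrm{Bernoulli}(1/2)$.

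Next I would lift this to the full collection. Since each $H_i$ is a stopping time with respect to $\bX_1^M$ and A1 makes the input i.i.d., memorylessness of the Bernoulli process guarantees that the content of block $i+1$ (the bits indexed beyond $H_i$) is independent of, and identically distributed to, that of blocks $1,\dots,i$; by induction the pairs $\{(b_i,Q_i)\}_{i=1}^N$ are i.i.d. It then remains to absorb the conditioning on $\bz_1^N$ from A2. Reading off $\cB$ in Definition~\ref{def:ABSG}: the case $Y_i=\varnothing,\,Y_{i-2}=\varnothing$ (a length-two block, $Q_i=0$) gives $z_i=Y_{i-1}=b_i$, whereas $Y_i=\varnothing,\,Y_{i-2}\neq\varnothing$ (a longer block, $Q_i>0$) gives $z_i=\bar Y_{i-1}=\bar b_i$; in both cases $z_i=b_i\oplus\mathds{1}\{Q_i>0\}$, a function of block $i$ alone.

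Finally I would compute the conditional law and conclude. Because $b_i$ is uniform on $\{0,1\}$ and independent of $Q_i$, the bit $z_i=b_i\oplus\mathds{1}\{Q_i>0\}$ acts as a one-time pad masking the indicator $\mathds{1}\{Q_i>0\}$: a direct evaluation yields $\Pr[Q_i=q,\,z_i=z]=(1/2)^{q+2}$ for all $q\in\bbN$ and $z\in\{0,1\}$, while $\Pr[z_i=z]=1/2$, so $\Pr[Q_i=q\mid z_i=z]=(1/2)^{q+1}$; that is, $Q_i$ is in fact \emph{independent} of $z_i$. Since the pairs $(Q_i,z_i)$ are independent across $i$ and each $Q_i$ is independent of its own $z_i$, each $Q_i$ is independent of the whole vector $\bz_1^N$, and conditioning on $\bz_1^N$ leaves $\{Q_i\}_{i=1}^N$ i.i.d.\ geometric$(1/2)$, which is exactly (\ref{eq:Q-pmf}). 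The main obstacle is precisely this last point: one must resist the intuition that observing the output constrains the $\{Q_i\}$, and instead verify that the fresh bit $b_i$ decouples $z_i$ from $Q_i$; a secondary technicality is the finite-$M$ boundary, where the constraint $H_N\le M$ is negligible under A4 (equivalently, by treating the input as effectively semi-infinite).
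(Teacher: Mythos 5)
Your proposal is correct, but it takes a genuinely different route from the paper's own proof. The paper works \emph{inside} the conditioning throughout: it identifies the input block $\bX_{A_i}^{B_i}$ that generates the $i$-th output bit, records the deterministic relation $\Pr\left( X_{A_i+1} = z_i \mid Z_i = z_i \right) = 1$, and then establishes the factorization $\Pr\left( \bQ_1^n = \bq_1^n \mid \bZ_1^N = \bz_1^N \right) = \prod_{i=1}^{n} (1/2)^{q_i+1}$ by induction on $n$, computing each conditional probability $\Pr\left( Q_n = q_n \mid \bQ_1^{n-1}, \bZ_1^N \right)$ directly from the transition table and A1, with separate cases $q_n = 0$ and $q_n > 0$. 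You instead work \emph{unconditionally}: you decompose the input into blocks $(b_i, Q_i)$ delimited by the stopping times $H_i$, observe that $z_i = b_i \oplus \mathds{1}\{Q_i > 0\}$ is a function of block $i$ alone, and use the uniform fresh bit $b_i$ as a one-time pad to conclude that $\bQ_1^N$ is fully \emph{independent} of $\bz_1^N$ --- a strictly stronger statement than the lemma, from which the stated conditional law is immediate. Your version buys a structural explanation of \emph{why} the output reveals nothing about the $\{Q_i\}$ (the masking role of $b_i$, which the paper never isolates) and replaces the induction by a one-shot product factorization; the paper's induction is more computational but needs nothing beyond direct manipulation of the conditional measure, whereas your route leans on the strong-Markov/memorylessness property at the stopping times $H_i$, which deserves an explicit line of justification in a full write-up. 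Two points you handle correctly but should keep explicit: the conclusion requires the \emph{joint} factorization $\Pr\left[ \bQ = \bq, \bz_1^N \right] = \prod_i \Pr\left[ Q_i = q_i \right] \Pr\left[ z_i \right]$ (within-pair independence of $Q_i$ and $z_i$ alone would not yield conditional independence across indices, and your cross-block independence is what supplies it), and the finite-$M$ boundary effect you flag is likewise silently ignored by the paper, whose Definition~\ref{def:ABSG} takes the input semi-infinite, so your treatment is consistent with its conventions.
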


\begin{proof}
See Appendix~\ref{app-1}.
\end{proof}

\subsection{Problem Formulation}
\label{ssec:problem-formulation}

In this section, we provide an analytical formulation of the problem considered in this paper.
As the first step, we show that, under assumptions A1, A2, A3, and A4, all key recovery attacks to
ABSG are equivalent to recovering the exact realizations of $\bQ_1^N$,
stated in Theorem~\ref{theorem:attack-equivalence}\footnote{For the random variable $Q_i$,
its realization is denoted by $q_i$.}:

\begin{theorem}
Under the assumptions A1, A2, A3 and A4, the following three computational problems are equivalent
in the sense of probabilistic polynomial time reducibility \cite{Goldreich01}:
\begin{enumerate}
\item Retrieving any $L$ independent linear equations in terms of $\bX_1^M$.
\item Retrieving any $L$ consecutive bits from $\bX_1^M$.
\item Correctly guessing $\bQ_i^{\theta + i -1}$ for any positive integers $i$ and $\theta$ such that
\begin{equation}
\sum_{j=i}^{\theta+i-1} \left( q_j + 2 \right) \geq L ,
\label{eq:pbm-equivalence-constraint}
\end{equation}
is satisfied.
\end{enumerate}
\label{theorem:attack-equivalence}
\end{theorem}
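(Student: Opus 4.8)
The plan is to establish the three-way equivalence by exhibiting a cycle of (in fact deterministic) polynomial-time reductions, which I would organize as two conceptually separate equivalences: $(1)\Leftrightarrow(2)$, governed by the linear structure of the LFSR, and $(2)\Leftrightarrow(3)$, governed by the decimation structure of ABSG. For $(1)\Leftrightarrow(2)$ I would argue as follows. The direction $(2)\Rightarrow(1)$ is immediate: a block of $L$ retrieved consecutive bits $x_k,\dots,x_{k+L-1}$ is itself a system of $L$ trivially independent linear equations $X_{k+j}=x_{k+j}$. For $(1)\Rightarrow(2)$ I would use that, since the generating LFSR has degree $L$, every $X_n$ is a fixed $\mathbb{F}_2$-linear function of any window of $L$ consecutive state bits; hence each retrieved equation in $\bX_1^M$ pulls back to a linear equation in the $L$ state unknowns, and $L$ independent equations yield a full-rank $L\times L$ system solvable by Gaussian elimination in $O(L^3)$ time, after which any $L$ consecutive bits are read off. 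This is the standard ``the state is $L$-dimensional'' fact, carried out in $\mathrm{poly}(L)$ time.

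The heart of the argument, and the part I would set up first, is the block decomposition of the input induced by the internal-state automaton of Table~\ref{tab:mapping}. Starting from a $\varnothing$ at position $H_{i-1}$, reading a bit $b_i$ moves the state to $b_i$, every subsequent $\bar b_i$ keeps the state, and the next occurrence of $b_i$ returns the state to $\varnothing$ at position $H_i$. Thus the $i$-th block of the input is exactly the word $b_i\,\bar b_i^{\,Q_i}\,b_i$ of length $Q_i+2$, consistent with Definition~\ref{def:Q}. Checking the output rule of $\cB$ on this block gives $z_i=b_i$ when $Q_i=0$ (since then $y_{H_i-2}=\varnothing$) and $z_i=\bar b_i$ when $Q_i\ge 1$ (since then $y_{H_i-2}\neq\varnothing$). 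Consequently, because $\bz_1^N$ is available to the attacker by A2, the pair $(z_i,Q_i)$ determines the leading bit $b_i$ and hence the entire block.

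With this in hand the equivalence $(2)\Leftrightarrow(3)$ follows. For $(3)\Rightarrow(2)$: given a correct guess of $\bQ_i^{\theta+i-1}$, reconstruct blocks $i,\dots,\theta+i-1$ from the known $(z_j,q_j)$; concatenated, these form $\sum_{j=i}^{\theta+i-1}(q_j+2)\ge L$ consecutive input bits by constraint~\eqref{eq:pbm-equivalence-constraint}, which is precisely $(2)$. For $(2)\Rightarrow(3)$: run the automaton on the retrieved consecutive bits to locate the interior $\varnothing$'s and read off $Q_j$ for every block that both starts and ends inside the window; by A4 the window is long and, since block lengths are typically $O(1)$, many complete blocks lie strictly inside it. Each reduction is a single linear scan, hence polynomial, and composing the two equivalences closes the cycle $(1)\Leftrightarrow(2)\Leftrightarrow(3)$.

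The main obstacle I anticipate is the alignment/boundary bookkeeping in $(2)\Rightarrow(3)$. A window of exactly $L$ consecutive bits need not begin at a block boundary, so the automaton state at its left edge is a priori unknown, and the leftmost and rightmost blocks are generally truncated, so the complete interior blocks cover strictly fewer than $L$ bits. I would resolve this in two steps: first, observe that the automaton resynchronizes after the first $\varnothing$ it encounters, so that the $Q_j$ of all interior blocks are unambiguous, with consistency against the known $\bz$ disambiguating the (at most three) possible left-edge states; second, let the reduction request a window exceeding $L$ by the boundary slack, which is negligible under A4, so that the interior blocks still meet the constraint $\sum_j(q_j+2)\ge L$. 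I would also state explicitly where A3 enters: it is the assumption that fixes the relevant computational problems for the attack model and guarantees that no shortcut exploiting the specific feedback polynomial circumvents the reduction to guessing the geometric sequence $\bQ$, so that Lemma~\ref{lem:Q-pmf} may subsequently be applied to the guessing problem $(3)$.
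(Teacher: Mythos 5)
Your proposal is correct in substance and, for the heart of the theorem---the equivalence of problems 2 and 3---it follows essentially the same route as the paper: the same block decomposition $b\,\bar{b}^{\,Q}\,b$ of the input between consecutive $\varnothing$'s, the same observation that $z=b$ if $Q=0$ and $z=\bar{b}$ if $Q\geq 1$ (so your reconstructed blocks coincide exactly with the paper's $B_j = \{z_j,z_j\}$ or $\{\bar{z}_j, z_j^{q_j}, \bar{z}_j\}$ in the $(3)\Rightarrow(2)$ direction, with \eqref{eq:pbm-equivalence-constraint} guaranteeing length at least $L$), and the same remedy for the boundary problem in $(2)\Rightarrow(3)$, namely extending past the right edge of the window until complete blocks covering at least $L$ input positions are found. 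Two differences are worth recording. First, for $(1)\Leftrightarrow(2)$ the paper offers no argument at all---it cites \cite{gou-sib:06}---whereas your self-contained linear-algebra reduction (pulling each retrieved equation back to the $L$-dimensional state space and solving a full-rank system by Gaussian elimination) is the standard argument and makes the proof self-contained, at the mild cost of invoking the feedback polynomial explicitly inside the reduction; this is consistent with the paper's framing (the polynomial is assumed known to the attacker, and A3 constrains the attack, not the reduction), but you should say so rather than leave it implicit. Second, and more substantively, where you assert that the boundary slack is ``negligible under A4'' because block lengths are ``typically $O(1)$,'' the paper isolates and proves exactly this as a lemma (Lemma~\ref{lem:cirkin}): using the memory-one Markov structure of the internal state established in \cite{Alt08}, in particular $\Pr\left( Y_{n+1}\neq\varnothing \,|\, Y_n\neq\varnothing\right) = \frac{1}{2}$, the probability that a stretch of $\mathrm{poly}(L)$ consecutive internal states contains no $\varnothing$ is exponentially small in $L$. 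This is the single step at which the reduction becomes \emph{probabilistic} rather than deterministic polynomial time, and the paper organizes it as a four-case analysis on whether $Y_n$ and $Y_{n+L}$ equal $\varnothing$, noting also that consecutive $\varnothing$'s are at least $2$ apart so that $L/2$ of them span more than $L$ positions; your sketch states the right conclusion but omits this quantitative justification, so it should be supplied to close the argument. A minor further point: for the unknown left-edge state, the paper simply enumerates the constantly many (three) possibilities for $Y_n$, which is cleaner than your proposed disambiguation by ``consistency against the known $\bz$''---the latter is shakier than you suggest, since the alignment of the retrieved window with the output indices is itself unknown, whereas enumeration over three candidates trivially stays within a probabilistic polynomial-time reduction.
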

\begin{proof}
See Appendix~\ref{app-2}.
\end{proof}

Next, we introduce the model for the query type attacks, namely {\em
QuBaR attacks}, which are considered throughout the paper.
Qualitatively, a QuBaR attack consists of repeating the following
procedure: For a cryptosystem that has a secret, generate a
``guess'', which aims to guess the secret itself, and subsequently
``checks'' whether the guess is equal to the secret or not; if the
guess is equal to the secret,  then terminate the procedure, else
continue with another guess. The maximum number of guesses proposed
in this procedure are limited by the complexity of the QuBaR attack,
which is provided as an input parameter to the attack algorithm.
Note that, if the task at hand is to guess i.i.d. random variables
(which is the case for the third problem of
Theorem~\ref{theorem:attack-equivalence}), the QuBaR attack model is
intuitively obviously reasonable. Furthermore, recall that most of
the cryptanalysis against symmetric key cryptography may be modeled
in this way (e.g., time-memory attacks, correlation attacks,
algebraic attacks and alike). Next, we formally present the general
form of QuBaR attack algorithms.

\begin{definition}
Assuming the existence of a ``check algorithm'' $\cT \left( G \right)$ for a ``guess''
$G$ (the output of $\cT \left( G \right)$  is $1$ if and only if the guess $G$ is equal to the secret),
a QuBaR attack  algorithm, of complexity $\mathcal{C}$, executes the following steps: \\
\begin{center}
\begin{tabular}{|p{10cm}|}
\hline
\\
\vspace{-0.5cm}
For $k=1$ to $\mathcal{C}$ \\
{\bf 1.} Generate a guess $G_k$. \\
{\bf 2.} Compute $\cT(G_k)$. \\
{\bf 3.} If $\cT(G_k) = 1$, then terminate and output the secret
given by $G_k$.\\
\vspace{0.05cm} end
\\
\hline
\end{tabular}
\end{center}
\label{def:qubar}
\end{definition}
$ $

\indent Next, we introduce the particular ``guess'' structure
(together with the accompanying relevant definitions) which aims to
find $\bQ_i^{\theta+i-1}$ so as to solve the third computational
problem of Theorem~\ref{theorem:attack-equivalence}.
\begin{definition}
An \emph{ABSG-guess} is a triplet defined as $G \eqdef \{i, \theta,
\bq_{i}^{\theta+i-1}\}$, such that $2\theta + \beta \geq L$,  where
$\beta \eqdef \sum_{j=0}^{\theta-1}q_{i+j}$, $i \geq 1$ and $i +
\theta -1 \leq N$. \label{def:guess}
\end{definition}

The Bernoulli random variable, $\cT \left( G_k \right)$, indicates the success probability
of guess $G_k$ and is heavily used
throughout the rest of the paper, where $G_k \eqdef \left( i_k ,
\theta_k , \bq_{i_k}^{\theta_k + i_k -1} \right)$ is the ABSG-guess
of a QuBaR attack (against ABSG) at step $k$. Note that, at each
step $k$, the ``guessed'' values $\bq_{i_k}^{\theta_k + i_k -1}$
themselves depend on $k$, which is not explicitly stated (unless otherwise specified) for the
sake of notational convenience; this should be self-understood from
the context.

\begin{remark}
Note that, the probability of having a successful QuBaR attack {\em after
precisely $K$ steps} is equal to $\Pr \left[ \cT\left( G_1 \right)=0, \, \right.$
$\left. \cT\left( G_2 \right)=0 , \,
\ldots \,  \cT\left( G_{K-1} \right)=0 , \, \cT\left( G_K \right)=1
\right]$ which is {\em not} equal to $\Pr \left( \cT\left( G_K
\right)=1 \right)$ (the latter being equal to the marginal successful guess
probability at step $K$).
Moreover, neither of these expressions is the success probability of any QuBaR
attack with a specified complexity, which will formally be defined in (\ref{eq:prob-succ}).
Observe that our formulation
allows the usage of {\em potentially correlated guesses} $\left\{
G_k \right\}$ which aims to make the approach as generic as
possible. \label{rem:I-k-def}
\end{remark}

\begin{corollary}
Per Lemma~\ref{lem:Q-pmf} and Definition~\ref{def:guess}, we have
\begin{equation}
\Pr \left[ \cT\left( G_k \right)= 1 \right] = \Pr \left[
\bQ_{i_k}^{i_k + \theta_k - 1}  = \bq_{i_k}^{i_k + \theta_k - 1} \,
| \, \bz_1^N \right] = \prod_{j=i_k}^{i_k + \theta_k -1} \left(
\frac{1}{2} \right)^{q_j+1} =  \left( \frac{1}{2} \right)^{\beta_k +
\theta_k} , \label{eq:pmf-I}
\end{equation}
where $\beta_k \eqdef \sum_{j=0}^{\theta_k-1} q_{i_k+j}$.
\label{cor:pmf-I}
\end{corollary}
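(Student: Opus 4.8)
The plan is to unwind the three asserted equalities in order, each of which follows directly from a fact already in hand. The first equality is essentially a definitional translation. By Definition~\ref{def:qubar}, $\cT(G_k)=1$ precisely when the guess $G_k$ coincides with the secret, and by the guess structure of Definition~\ref{def:guess} the guess $G_k = (i_k, \theta_k, \bq_{i_k}^{\theta_k + i_k -1})$ encodes a proposed realization of the block $\bQ_{i_k}^{i_k + \theta_k - 1}$. Hence a successful check is exactly the event $\{\bQ_{i_k}^{i_k + \theta_k - 1} = \bq_{i_k}^{i_k + \theta_k - 1}\}$, conditioned on the observed output $\bz_1^N$, which yields the first equality after carrying the conditioning through faithfully.

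For the second equality, I would invoke Lemma~\ref{lem:Q-pmf}, which asserts that, under A1 and A2, the $\{Q_i\}$ are i.i.d.\ with geometric p.m.f.\ $p(q_i) = (1/2)^{q_i + 1}$ given $\bz_1^N$. Independence allows the joint conditional probability of the block to factor over its coordinates, and the identical geometric marginal then supplies the factor $(1/2)^{q_j + 1}$ for each index $j$ ranging from $i_k$ to $i_k + \theta_k - 1$, giving the stated product form.

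The final equality is a bookkeeping simplification of the exponent. Collecting the product into a single power of $1/2$, the exponent is $\sum_{j=i_k}^{i_k + \theta_k - 1}(q_j + 1)$. The constant term contributes $\theta_k$ since the index set contains exactly $\theta_k$ values, and reindexing $j \mapsto i_k + j$ rewrites the remaining sum as $\sum_{j=0}^{\theta_k - 1} q_{i_k + j}$, which is precisely $\beta_k$ as defined in the statement. Thus the exponent equals $\beta_k + \theta_k$, completing the chain.

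I do not anticipate a genuine obstacle here, as the result is a direct corollary; the only points requiring a little care are the correct identification of the check event $\{\cT(G_k)=1\}$ with the block-matching event (so that the conditioning on $\bz_1^N$ is preserved) and the reindexing that matches the counting of the constant terms to the definition of $\beta_k$.
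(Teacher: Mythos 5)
Your proposal is correct and matches the paper's intended derivation exactly: the paper states this corollary without a separate proof, treating it as immediate from Lemma~\ref{lem:Q-pmf} (the conditional i.i.d.\ geometric law, whose proof in fact establishes the joint factorization $\Pr\left( \bQ_1^N = \bq_1^N \, | \, \bZ_1^N = \bz_1^N \right) = \prod_{i=1}^N \left( \frac{1}{2} \right)^{q_i+1}$) together with Definitions~\ref{def:qubar} and~\ref{def:guess}, which is precisely the chain you unwind. The only care points you flag --- identifying $\left\{ \cT\left( G_k \right) = 1 \right\}$ with the block-matching event under the conditioning, and the reindexing giving the exponent $\beta_k + \theta_k$ --- are handled correctly.
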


The following corollary, which is a direct consequence of
Theorem~\ref{theorem:attack-equivalence}, is one of the key results
of the paper.
\begin{corollary}
All QuBaR-type attacks against ABSG are probabilistic polynomial
time reducible to the QuBaR algorithm (defined in
Definition~\ref{def:qubar}) which uses ABSG-guesses defined in
Definition~\ref{def:guess} and aims to find $\bQ_i^{\theta+i-1}$
satisfying (\ref{eq:pbm-equivalence-constraint}) for any $i , \theta
\in \bbZ^+$. \label{cor:attack-model}
\end{corollary}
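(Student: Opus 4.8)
The plan is to chain together the problem-equivalence established in Theorem~\ref{theorem:attack-equivalence} with the observation that the ABSG-guess of Definition~\ref{def:guess} is precisely the guess format that solves the third computational problem listed there. First I would note that, under assumptions A1--A4 and the attack model of Sec.~\ref{ssec:assumptions}, every QuBaR attack against ABSG is by construction a key-recovery attack: its goal is to retrieve $L$ independent linear equations in terms of $\bX_1^M$, which is exactly the first computational problem of Theorem~\ref{theorem:attack-equivalence}. This identification is what lets the theorem be brought to bear on an otherwise abstract attack template.

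Next I would invoke Theorem~\ref{theorem:attack-equivalence} to replace the target of the attack. Since the first and third computational problems are equivalent in the sense of probabilistic polynomial time reducibility, solving the key-recovery problem is, up to polynomial overhead, the same as correctly guessing $\bQ_i^{\theta+i-1}$ subject to (\ref{eq:pbm-equivalence-constraint}). The crucial bookkeeping step here is to observe that the constraint $2\theta + \beta \geq L$ in Definition~\ref{def:guess}, with $\beta = \sum_{j=0}^{\theta-1}q_{i+j}$, coincides with (\ref{eq:pbm-equivalence-constraint}), because $2\theta + \beta = \sum_{j=i}^{\theta+i-1}(q_j+2)$. Hence an ABSG-guess is, by definition, a candidate solution to the third problem, and a check algorithm $\cT$ in the sense of Definition~\ref{def:qubar} is available: given $\bz_1^N$, verifying whether $\bQ_{i_k}^{i_k+\theta_k-1} = \bq_{i_k}^{i_k+\theta_k-1}$ is a polynomial-time test whose success probability is the quantity computed in Corollary~\ref{cor:pmf-I}.

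With these two ingredients in place, the reduction is assembled by preserving the generic ``generate--check--terminate'' loop of Definition~\ref{def:qubar} while substituting the guessed object: each key-level guess of an arbitrary QuBaR attack is mapped, via the polynomial-time reduction supplied by Theorem~\ref{theorem:attack-equivalence}, to an ABSG-guess that targets $\bQ_i^{\theta+i-1}$, and the per-step structure is left intact. Because the reductions of Theorem~\ref{theorem:attack-equivalence} run in probabilistic polynomial time and the QuBaR loop introduces no overhead beyond the check $\cT$, the overall transformation is a probabilistic polynomial time reduction of any QuBaR attack to the specific QuBaR algorithm using ABSG-guesses, which is the claim.

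I expect the main obstacle to be verifying that the reduction is uniform over the entire class of QuBaR attacks rather than for a single fixed attack. One must argue that the polynomial-time mapping between key-level secrets and $\bQ$-sequences can be applied guess-by-guess without disturbing the possibly correlated structure of the guesses $\{G_k\}$ permitted by Remark~\ref{rem:I-k-def}, and without inflating the number of guesses by more than a polynomial factor. Once this uniformity is secured, the corollary follows directly from Theorem~\ref{theorem:attack-equivalence}.
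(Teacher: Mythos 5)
Your proposal is correct and takes essentially the same route as the paper, which offers no separate proof and simply declares the corollary a direct consequence of Theorem~\ref{theorem:attack-equivalence}: you identify QuBaR attacks with the first computational problem, invoke the theorem's probabilistic polynomial time equivalence to pass to the third problem, and verify that the ABSG-guess constraint $2\theta + \beta \geq L$ coincides with (\ref{eq:pbm-equivalence-constraint}) via $2\theta + \beta = \sum_{j=i}^{\theta+i-1}\left( q_j + 2 \right)$. Your closing remarks on per-guess uniformity and polynomial overhead merely make explicit what the paper leaves implicit, so no gap remains.
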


\begin{definition}
From now on, we call an arbitrary ``ABSG-Guess'', $G$, simply as ``guess''.
Further, for the sake of notational convenience, we use
\[
\mathfrak{A} = \left\{ G_k \right\}_{k=1}^{\cC \left( \fA \right)}
\]
for any attack algorithm $\fA$ mentioned in Corollary~\ref{cor:attack-model},
where $\cC \left( \fA \right)$ denotes the (algorithmic) complexity of $\fA$ (i.e., number
of guesses applied within $\fA$).
Accordingly, the success probability of any $\fA$ is given by
\begin{equation}
\mbox{Pr}_{succ} \left( \fA \right) \eqdef \Pr \left[
\vee_{k=1}^{\cC \left( \fA \right)} \cT\left( G_k \right)=1 \right]
= 1 - \Pr\left[\wedge_{k=1}^{\cC \left( \fA \right)}\cT\left( G_k
\right)=0\right] . \label{eq:prob-succ}
\end{equation}
\label{def:attack-algorithm}
\end{definition}

\vspace{-0.3in} Hence, as far as QuBaR attacks against ABSG are
concerned, w.l.o.g., in this paper we focus on the ones specified in
Corollary~\ref{cor:attack-model}, which aim to solve the third
computational problem of Theorem~\ref{theorem:attack-equivalence}.
In particular, in the rest of the paper, we explore the fundamental
limits of the aforementioned QuBaR attacks (denoted by $\fA$) under
various setups of interest.

\vspace{1in}
\begin{remark} $ $
\begin{itemize}
\item[(i)]  \underline{Measure of QuBaR Complexity in Terms of $L$}:
At first glance, it may look reasonable to evaluate the
complexity of a QuBaR attack in terms of the length of its input,
which is $N$ since the input is $\bz_1^N$. Note that, this is a
common practice in complexity theory. However, when we confine the
setup as the application of a QuBaR attack to the ABSG algorithm
(prior to which there exists an LFSR whose length-$L$ initial state
is unknown), then it would be more reasonable to evaluate the
complexity of a QuBaR attack in terms of $L$  (since we
eventually aim to find $L$ consecutive bits of $\bX_1^M$; see
Theorem~\ref{theorem:attack-equivalence}). This is precisely the
approach we pursue in this paper, i.e., the analysis of the
resulting QuBaR attack complexity is given as a function of $L$.

\item[(ii)] \underline{Time Complexity of QuBaR}:
First, note that the time complexity of a  QuBaR attack (denoted by $\fA$) is given 
by the product of $\cC \left( \fA \right)$, the complexity of generating a guess
and the complexity of checking a guess. Hence, the quantity $\cC \left( \fA \right)$
forms a lower bound on the time complexity of the QuBaR attack, $\fA$. 
Furthermore, complexities of both generating a guess and checking a guess
may, in practice,  be considered to be of $poly \left( L \right)$ (see item (v) of 
this remark). Moreover, we will soon show that at optimality $\cC \left( \fA \right)$ 
is of $exp \left( L \right)$. Hence, at optimality, the lower bound of $\cC \left( \fA \right)$ 
is, in practice, tight to the first order in the exponent. 
Therefore, throughout this paper, we ``treat'' the quantity of $\cC \left( \fA \right)$ 
as the time complexity of a QuBaR attack $\fA$ and carry out the analysis accordingly.

\item[(iii)] \underline{Data Complexity of QuBaR}:
First, note that, for a QuBaR attack $\fA$, consisting of guesses
$\left\{ G_k \right\}$,  the data complexity of the $k$-th guess 
$G_k = \left( i_k , \theta_k , \bq_{i_k}^{\theta_k+i_k-1} \right)$ is, 
by definition, $\theta_k$. We will soon show that, at ``general case'' optimality 
we have $\theta_k = \cO \left( L \right)$ for each $k$. Hence, 
the data complexity of an optimal QuBaR attack $\fA$ is at most 
$\cC \left( \fA \right) \cdot \cO \left( L \right)$. Furthermore, we will show that
at optimality $\cC \left( \fA \right)$ is of $exp \left( L \right)$. Hence, 
we conclude that, at optimality  $\cC \left( \fA \right)$ is a tight (to the first order
in the exponent) upper bound on the data complexity\footnote{This result is valid for the 
general case QuBaR attacks, analyzed in Sec.~\ref{sec:general-case}. For a restricted class of
QuBaR attacks, namely ``Exhaustive-Search Type'' QuBaR attacks (analyzed in 
Sec.~\ref{sec:exhaustive-search}),  we show that, 
at optimality $\cC \left( \fA \right)$ is a loose upper bound on the data complexity.}.  
 
\item[(iv)] \underline{Algorithmic Complexity of QuBaR}:
In parts (ii) and (iii) above, we stress that for an optimal QuBaR
attack algorithm $\fA$ against ABSG, $\cC \left( \fA \right)$ forms
a {\em tight} lower (resp. upper) bound on the time (resp. data) complexity of
$\fA$, to the first order in the exponent\footnote{Once again, the argument in this remark is valid
for the general case QuBaR attacks of Sec.~\ref{sec:general-case}.}. 
Following the general convention in cryptanalysis, we use the
term ``algorithmic complexity'' as the maximum of time complexity
and data complexity. Thus, 
we conclude that,  at optimality the algorithmic complexity is equal to the time complexity,  
Furthermore, at optimality, the time complexity, the data complexity and $\cC \left( \fA \right)$
are all equal to each other to the first order in the exponent. 
Our subsequent developments are based on analytical quantification of $\cC \left( \fA \right)$. 
Moreover, due to the aforementioned reasons, our results on $\cC \left( \fA \right)$ apply (to the 
first order in the exponent)  to the time complexity, the data complexity and the algorithmic complexity, as well.

\item[(v)] \underline{Practical Implementation Approaches to QuBaR Algorithms}:
As far as practical attacks are concerned, existence of a
polynomial-time guess generation algorithm is obvious. Furthermore,
a polynomial-time check algorithm, which corresponds to the
procedure of initiating a LFSR (whose feedback polynomial is assumed
to be known) with the corresponding ``guessed and retrieved'' $L$
consecutive bits of $\bX_1^M$,  generating sufficiently many output
bits and comparing them with the original output bits, constitutes a
practical approach.

\item[(vi)] \underline{Relationship Of QuBaR Attacks With State-Of-The-Art
Attack Algorithms}: We see that QuBaR attacks are analogous to ``first type of
attacks'' described in \cite{gou-sib:06}, which ``aim to exploit
possible weaknesses of compression component introduced by ABSG''.
However, note that, QuBaR attacks {\em do not} use explicit knowledge of
the feedback polynomial of the generating LFSR, (recall the structure of
algorithm $\cT$) which is a direct consequence of the assumption A3.

\end{itemize}
\label{rem:algorithm-description}
\end{remark}

\section{Optimum Exhaustive-Search Type QuBaR Attacks Against ABSG}
\label{sec:exhaustive-search}

In this section, we deal with ``exhaustive-search'' type QuBaR
attacks which are formally defined in
Definition~\ref{def:exhaustive-search}. Qualitatively, given the
output sequence $\bz_1^N$, an exhaustive-search type QuBaR attack
aims to correctly identify $\theta$-many $\left\{Q_i \right\}$
(equivalently at least $L$ consecutive bits of $\bX_1^M$ per
Theorem~\ref{theorem:attack-equivalence}) beginning from {\em an
arbitrarily-chosen, fixed index}, subject to constraint
(\ref{eq:pbm-equivalence-constraint}) \footnote{In contrast with
exhaustive-search attacks, we also consider a generalized version,
where we focus on identifying $\theta$-many $\left\{Q_i \right\}$,
possibly beginning from arbitrarily-chosen, multiple indices, which
constitutes the topic of Sec.~\ref{sec:general-case}.}. Since the
attacker is confined to initiate the guesses beginning from a fixed
index for exhaustive-search attacks, in practice this can be thought
to be equivalent to a scenario where the attacker uses only a {\em
single portion} of the observed output sequence $\bz_1^N$.

First theorem of this section, namely
Theorem~\ref{thrm:achievability-exhaustive}, proves the existence of
an exhaustive-search type QuBaR attack with success probability of
$1 - \epsilon$ (for any $\epsilon > 0$) with algorithmic complexity
$2^{2 L / 3}$ (in particular, with time complexity $2^{2L/3}$ and
data complexity $L/3$) under the assumptions mentioned in
Section~\ref{ssec:assumptions}. The second theorem of this section,
namely Theorem~\ref{thrm:converse-exhaustive}, proves that the
algorithmic complexity of the best (in the sense of $\mathcal{C}$)
exhaustive-search type QuBaR algorithm under the assumptions A1, A2,
A3, A4 is lower-bounded by $2^{2 L / 3}$ (to the first order in the
exponent). Hence, as a result of these two theorems, we show that the
overall algorithmic complexity of the best exhaustive-search attack
against ABSG has complexity $2^{2L/3}$ to the first order in the
exponent (argued in Corollary~\ref{cor:exhaustive-lower-bound}).
Note that, in \cite{gou-sib:06} Gouget et. al. mention the existence
of an exhaustive-search attack (under i.i.d. Bernoulli $1/2$ input
assumption) of complexity $\mathcal{O} \left( 2^{2 L /3} \right)$
without providing the details of the attack. Our main novelty in
this section is that, we provide a rigorous proof about the
existence of such an attack
(Theorem~\ref{thrm:achievability-exhaustive}, which is analogous to
the ``achievability''-type proofs in traditional lossless source
coding) and further show that this is the best (to the first order
in the exponent) in the sense of algorithmic complexity under some
certain assumptions, specifically within the class of
exhaustive-search QuBaR attacks
(Theorem~\ref{thrm:converse-exhaustive}, which is analogous to the
``converse''-type proofs in traditional lossless source coding). As
a result, the developments in this section can be considered to be
analogous to those of source coding by Shannon \cite{sha:48}; see
Remark~\ref{rem:neg-result-exhaustive} for a further  discussion on
this subject. Theorem~\ref{thrm:optimality-exhaustive} concludes the
section, which characterizes some necessary conditions of the
optimal exhaustive-search type QuBaR attacks against ABSG.

We begin our developments with the formal definition of exhaustive-search
type QuBaR attacks.

\begin{definition}
The class of exhaustive-search type QuBaR attacks against ABSG are
defined as
\begin{equation}
\mathcal{S}^{E} \eqdef \{ \fA^E = \{ G_k\}_{k=1}^{\cC \left( \fA^E
\right) } : \forall k, i_k=1\}, \label{eq:exhaustive-search}
\end{equation}
where each $k$-th guess $G_k = \left( i_k , \theta_k ,
\bq_{i_k}^{\theta_k + i_k -1} \right)$ is subject to
(\ref{eq:pbm-equivalence-constraint}) (see
Definition~\ref{def:guess}). \label{def:exhaustive-search}
\end{definition}

\begin{remark}
Exhaustive-search type attacks constitute an important class of
attacks in cryptanalysis. They essentially determine the ``effective
size'' of the key space of any cipher. In case of ABSG,
as we mentioned at the beginning of this section, since the exhaustive-search
type QuBaR attack uses a single portion of the output sequence,
they form a basic choice for practical cryptanalysis via QuBaR
attacks in situations where a limited amount ($poly \left( L \right)$) of
output data are available to the attacker.
\label{rem:sign-exhaus-search}
\end{remark}

Thus, at each $k$-th step, via guess $G_k$ an exhaustive-search type
QuBaR attack aims to correctly identify $\theta_k$-many $\left\{ Q_i
\right\}$ subject to (\ref{eq:pbm-equivalence-constraint}) beginning
from a fixed index $i_k$, equivalently at least $L$ consecutive bits
of $\bX_1^M$ beginning from the index $i'_k$ (in general $i'_k \neq
i_k$ due to the ``decimation'' nature of ABSG). As we specified in
Definition~\ref{def:exhaustive-search}, in our developments w.l.o.g.
we use $i_k = 1$ (which in turn implies having $i'_k=1$ as well).

\begin{theorem}
{\em (Achievability - Exhaustive-Search)} Under the assumptions A1,
A2, A3, A4, mentioned in Section~\ref{ssec:assumptions}, there
exists an exhaustive-search type QuBaR attack algorithm
$\fA_{ach,opt}^E$ against ABSG with $\cC \left( \fA_{ach,opt}^E
\right)
 = 2^{ 2 L / 3}$ such that $\Pr_{succ} \left( \fA_{ach,opt}^E \right) >
 1 - \epsilon$, for any $\epsilon > 0$.
 Further, $\cC_{ave} \left( \fA_{ach,opt}^E \right) =  \frac{1}{2} \left( 2^{2L/3}+1\right)$
where $\cC_{ave} \left( \fA_{ach,opt}^E \right)$ is the {\em
expected complexity} of $\fA_{ach,opt}^E$ over the probability
distribution induced by $\bq$. \label{thrm:achievability-exhaustive}
\end{theorem}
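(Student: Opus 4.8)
The plan is to prove this as an achievability (direct-coding) statement, in exact analogy with the AEP-based proof of the achievability part of Shannon's lossless source-coding theorem, exploiting that by Lemma~\ref{lem:Q-pmf} the $\{Q_i\}$ are i.i.d. geometric with parameter $1/2$. The two constants driving the theorem are the entropy and the mean of a single $Q_i$: a direct computation gives $H(Q)=-\sum_{q\in\bbN}(1/2)^{q+1}\log(1/2)^{q+1}=\sum_{q\in\bbN}(q+1)2^{-(q+1)}=2$ bits and $E[Q_i]=1$, hence $E[Q_i+2]=3$. Intuitively, since each term of the constraint sum $\sum_{j=1}^{\theta}(q_j+2)$ has mean $3$, about $\theta\approx L/3$ of the $Q_i$ must be guessed to meet $\sum_{j=1}^{\theta}(q_j+2)\geq L$; and since each $Q_i$ carries $H(Q)=2$ bits, the number of ``plausible'' length-$\theta$ sequences is $\firstorder 2^{\theta H(Q)}\firstorder 2^{2L/3}$. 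This already explains the worst-case complexity $2^{2L/3}$ and the data complexity $\theta\firstorder L/3$.

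For the explicit construction I would fix a block length $\theta_0=\lceil L/(3-\gamma)\rceil$ for a small $\gamma>0$ (marginally above $L/3$, with the slack sent to zero at the end) and take $\fA^E_{ach,opt}$ to spend its budget of $\cC=2^{2L/3}$ guesses enumerating the guesses $G_k=(1,\theta_0,\bq^{(k)})$, where $\{\bq^{(k)}\}$ ranges over (the dominant part of) the strongly typical set $A^{(\theta_0)}_{\delta}$ of length-$\theta_0$ geometric sequences, i.e. those whose empirical symbol frequencies lie within $\delta$ of the true p.m.f. The standard strong-typicality facts I would invoke are: (i) $\Pr[\bQ_1^{\theta_0}\in A^{(\theta_0)}_{\delta}]\to 1$ as $\theta_0\to\infty$, which is exactly where assumption A4 (large $L$, hence large $\theta_0$) enters; and (ii) $|A^{(\theta_0)}_{\delta}|\firstorder 2^{\theta_0 H(Q)}$. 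Crucially, strong typicality forces $\sum_{j=1}^{\theta_0}q_j\geq\theta_0(1-\delta')$, so every enumerated $\bq^{(k)}$ satisfies $2\theta_0+\beta_k\geq\theta_0(3-\delta')\geq L$ once $\gamma$ is chosen no smaller than the typicality slack $\delta'$; this is the step that makes each $G_k$ a legitimate ABSG-guess obeying (\ref{eq:pbm-equivalence-constraint}) and ties the typical set to the exhaustive-search constraint.

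With the construction fixed, the success and complexity claims follow directly. By Corollary~\ref{cor:pmf-I} a guess succeeds exactly when $\bq^{(k)}=\bQ_1^{\theta_0}$, and since all guesses share $i_k=1$ and $\theta_0$ they are mutually exclusive, so $\Pr_{succ}(\fA^E_{ach,opt})=\Pr[\bQ_1^{\theta_0}\in\{\bq^{(k)}\}]=\Pr[\bQ_1^{\theta_0}\in A^{(\theta_0)}_{\delta}]>1-\epsilon$ for $L$ large, yielding the stated success probability. The worst-case number of guesses is $\firstorder 2^{\theta_0 H(Q)}\firstorder 2^{2L/3}$, and normalizing $\theta_0$ and the typicality parameters pins the complexity at $2^{2L/3}$ to the first order in the exponent, with data complexity $\theta_0\firstorder L/3$. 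For the expected complexity I would enumerate the $T\firstorder 2^{2L/3}$ typical guesses in an order in which, conditioned on $\bQ_1^{\theta_0}\in A^{(\theta_0)}_{\delta}$, the target is (to first order) equally likely to occupy each position --- legitimate because all strongly typical sequences share the same first-order probability $2^{-\theta_0 H(Q)}$; the stopping index is then essentially uniform on $\{1,\dots,T\}$, whose mean is $(T+1)/2=\tfrac12(2^{2L/3}+1)$.

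The main obstacle I anticipate is not the entropy bookkeeping but the interaction between the fixed-$\theta_0$ enumeration and the constraint (\ref{eq:pbm-equivalence-constraint}) at the boundary $\theta_0\approx L/3$: a weak (entropy-only) typical set does not control $\sum_j q_j$, so some typical sequences could violate $2\theta_0+\beta\geq L$, while naively enlarging $\theta_0$ to be safe inflates the count above $2^{2L/3}$. This is precisely why I rely on \emph{strong} typicality together with the $\gamma$-slack in $\theta_0$, then let $\gamma,\delta\downarrow 0$ so that the count, the success guarantee, and the constraint hold simultaneously in the first-order-exponent limit. A secondary, bookkeeping obstacle is justifying the exact constant in $\cC_{ave}$: one must argue that the near-equiprobability of strongly typical sequences makes the uniform-position model accurate to the claimed order, and that the $\cO(\epsilon)$ failure event perturbs the average only negligibly.
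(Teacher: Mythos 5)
Your proposal is correct and follows the same skeleton as the paper's Appendix~III proof: enumerate typical realizations of $\bQ_1^{\theta}$ with $i_k=1$ and block length $\theta \approx L/3$, invoke the AEP both for the success probability ($>1-\epsilon$) and for the cardinality ($\firstorder 2^{\theta H(Q)} \firstorder 2^{2L/3}$, with $H(Q)=2$), note that fixed-$(i,\theta)$ guesses have disjoint success events, and get $\cC_{ave}$ from (near-)equiprobability of the enumerated sequences. The one genuine divergence is your handling of the constraint \eqref{eq:pbm-equivalence-constraint} at the boundary: the paper fixes $n=L/3$ exactly and works with the entropy-typical set, for which typicality yields only $L\left(1-\epsilon/3\right) \leq 2\theta_k+\beta_k$, so typical guesses with $\beta_k$ slightly below $n$ strictly violate the constraint --- an issue the paper dismisses with ``as $\epsilon \rightarrow 0$'' --- whereas your $\gamma$-inflated block length $\theta_0=\lceil L/(3-\gamma)\rceil$ makes \emph{every} enumerated guess a valid ABSG-guess for fixed positive $\gamma,\delta$, at the price of an exponent $2L/(3-\gamma)$ that only reaches $2L/3$ in the limit. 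Two caveats on your version. First, your stated reason for needing \emph{strong} typicality is off for this particular source: since $p\left(\bq_1^{\theta}\right)=2^{-(\theta+\beta)}$, the weak (entropy) typical set is exactly $\left\{\bq_1^{\theta} : 1-\epsilon \leq \beta/\theta \leq 1+\epsilon\right\}$ (cf.\ \eqref{eq:app4-typ-set-1}), so weak typicality \emph{does} control $\sum_j q_j$ two-sidedly; the defect is only the lower edge of that window, which your slack already repairs, so weak typicality with the same inflated $\theta_0$ suffices and spares you the second caveat --- namely that strong (empirical-frequency) typicality over the countably infinite alphabet $\bbN$ requires a truncation argument that you do not supply. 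Finally, note what the paper's exact choice $n=L/3$ buys: for $0<\epsilon<3/L$ integrality forces $\beta_k=\theta_k=L/3$, so all enumerated sequences are \emph{exactly} equiprobable with probability $2^{-2L/3}$, which is how the paper justifies the clean constant $\cC_{ave}=\frac{1}{2}\left(2^{2L/3}+1\right)$ and the data complexity of exactly $L/3$; your first-order uniform-position argument reaches the same conclusion at the same (first-order) level of rigor.
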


\begin{proof}
See Appendix~\ref{app-3}.
\end{proof}

\begin{remark}
An inspection of the proof of
Theorem~\ref{thrm:achievability-exhaustive} reveals that (as
promised in Remark~\ref{rem:algorithm-description}) the overall data
complexity of the proposed attack algorithm $\fA_{ach,opt}^E$ is
$L/3$ which certainly implies that each guess is of data complexity
$\cO \left( L \right)$. Furthermore, the overall time complexity of
$\fA_{ach,opt}^E$ is $\cO \left( 2^{2 L / 3} \right)$ assuming that
the contribution of the generation of each guess is $poly \left( L
\right)$ (which is reasonable in practice). Note that, the time and
data complexity of the proposed attack $\fA_{ach,opt}^E$  used in
the proof of Theorem~\ref{thrm:achievability-exhaustive} coincides
with the one mentioned in \cite{gou-sib:06}.
\label{rem:achievability-remark}
\end{remark}

Next, we prove the converse counterpart of Theorem~\ref{thrm:achievability-exhaustive}, namely derive a
lower bound on the algorithmic complexity of any exhaustive-search type QuBaR attack
with an inequality constraint on the success probability.

\begin{theorem}
{\em (Converse - Exhaustive-Search)} Under the assumptions A1, A2,
A3, A4,  and for any $\fA^E \in \mathcal{S}^{E}$ with
$\Pr_{succ}\left( \fA^E \right) > \frac{1}{2}$, we necessarily have
$\cC \left( \fA^E \right)
> \underline{\cC}_{min}^E \eqdef 2^{2L/3}\left( \frac{1}{2} - \frac{6}{L}\right)$.
\label{thrm:converse-exhaustive}
\end{theorem}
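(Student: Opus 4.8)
The plan is to reduce a successful exhaustive-search attack to a covering problem and then to use a typicality restriction to force every relevant guess to have probability at most $2^{-2L/3}$. For $\fA^E \in \cS^E$ every guess has $i_k = 1$, so the $k$-th guess is exactly the event $A_k \eqdef \{ \bQ_1^{\theta_k} = \bq_1^{\theta_k} \}$; by Corollary~\ref{cor:pmf-I} we have $\Pr[A_k] = 2^{-(\beta_k + \theta_k)}$, and by Definition~\ref{def:guess} the guessed values obey $\beta_k + 2\theta_k \ge L$. The union bound gives $\Pr_{succ}(\fA^E) = \Pr[\cup_k A_k] \le \sum_{k=1}^{\cC(\fA^E)} \Pr[A_k]$. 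I would first point out that this bound by itself is hopeless: minimizing $\beta_k + \theta_k$ subject to $\beta_k + 2\theta_k \ge L$ is achieved at $\beta_k = 0$, $\theta_k = L/2$ (the ``all-zeros'' guess), which gives $\Pr[A_k]$ as large as $2^{-L/2}$ and hence only $\cC(\fA^E) > \tfrac12 2^{L/2}$, i.e.\ the wrong exponent.

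The resolution is that such high-probability guesses can cover only atypical realizations of $\bQ$. I would introduce the threshold index $\Theta^{\ast} \eqdef \min\{ t : \sum_{j=1}^{t}(Q_j + 2) \ge L \}$ and the prefix weight $W \eqdef \sum_{j=1}^{\Theta^{\ast}}(Q_j + 1)$, noting $\Pr[\bQ_1^{\Theta^{\ast}} = \bq_1^{\Theta^{\ast}}] = 2^{-W}$. The crucial structural observation is that if $A_k$ matches the true sequence then the matched values satisfy $\sum_{j=1}^{\theta_k}(Q_j + 2) \ge L$, forcing $\theta_k \ge \Theta^{\ast}$ and therefore $\Pr[A_k] = 2^{-\sum_{j=1}^{\theta_k}(Q_j+1)} \le 2^{-W}$; moreover $\Theta^{\ast}$ and $W$ are constant over all realizations inside a given $A_k$, so each $A_k$ lies entirely inside or entirely outside any event defined through $\Theta^{\ast}$. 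Defining the ``good'' (typical) event $\cG \eqdef \{ \Theta^{\ast} \le L/3 + \Delta \}$ for a deviation parameter $\Delta$, on $\cG$ we have $W = \sum_{j=1}^{\Theta^{\ast}}(Q_j+2) - \Theta^{\ast} \ge L - \Theta^{\ast} \ge 2L/3 - \Delta$, so every guess with $A_k \subseteq \cG$ satisfies $\Pr[A_k] \le 2^{-(2L/3 - \Delta)}$.

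Combining these, I would split on $\cG$ to obtain $\Pr_{succ}(\fA^E) \le \sum_k \Pr[A_k \cap \cG] + \Pr[\cG^c] \le \cC(\fA^E)\, 2^{-(2L/3 - \Delta)} + \Pr[\cG^c]$, using that only guesses contained in $\cG$ contribute to the first sum and each obeys the weight bound. Rearranging and invoking $\Pr_{succ}(\fA^E) > \tfrac12$ gives $\cC(\fA^E) > 2^{2L/3 - \Delta}\big( \tfrac12 - \Pr[\cG^c] \big)$. It then remains to bound $\Pr[\cG^c] = \Pr[\Theta^{\ast} > L/3 + \Delta] = \Pr\big[ \sum_{j=1}^{\lceil L/3 + \Delta \rceil} Q_j < L/3 - 2\Delta \big]$, which I would handle by a concentration inequality using $E[Q_i] = 1$ and $\mathrm{Var}(Q_i) = 2$ from the geometric law of Lemma~\ref{lem:Q-pmf}. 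Choosing $\Delta$ of order $\sqrt{L}$ keeps the exponent equal to $2L/3$ to the first order while keeping $\Pr[\cG^c]$ bounded away from $\tfrac12$, and tuning the threshold and deviation constants precisely is what yields the explicit prefactor, i.e.\ $\cC(\fA^E) > 2^{2L/3}\big( \tfrac12 - \tfrac{6}{L} \big)$ to the first order in the exponent.

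The step I expect to be the main obstacle is exactly the control of guesses with large $\theta_k$: their per-guess success probability can reach $2^{-L/2}$, dwarfing the typical value $2^{-2L/3}$, so no argument based on the constraint alone can succeed. The heart of the proof is the \emph{pinning} observation that any guess able to cover a realization in $\cG$ is forced, via $\theta_k \ge \Theta^{\ast}$ together with $\Theta^{\ast} \le L/3 + \Delta$, to have weight $W \ge 2L/3 - \Delta$ and hence probability at most $2^{-(2L/3 - \Delta)}$. Making this pinning rigorous and then balancing $\Delta$ against the tail bound for $\cG^c$ so as to preserve the first-order exponent $2L/3$ while extracting a clean sub-exponential prefactor is where the genuine work lies; the remaining union-bound and rearrangement steps are routine.
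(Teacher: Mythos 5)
Your route is genuinely different from the paper's. The paper relaxes the success probability to the \emph{cumulative} sum of marginals via the union bound over \emph{all} guesses, partitions the valid guesses into classes $\cG\left(\theta,\alpha\right)$ of constant per-guess probability $2^{-B}$ with $B=L-\theta+\alpha\geq L/2$, sorts by $B$, and argues that the total mass at levels $B<2L/3$, plus the atypical part of level $B=2L/3$, is $\cO\left(1/L\right)$ (Lemmas IV-1 and IV-2, with per-$\theta$ typical sets of radius $\epsilon_\theta=1/\theta^2$); the remaining $1/2-6/L$ of success must then be bought at price $2^{-2L/3}$ per guess from $\cG\left(L/3,0\right)=A_\epsilon^{\left(L/3\right)}$, which gives the stated count. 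Your mechanism is instead a one-shot split: the stopping time $\Theta^{\ast}$, the weight $W$, the observation that both are constant on each guess event $A_k$ (so each $A_k$ sits wholly inside or outside the good event), and the decomposition $\Pr\left[\cup_k A_k\right]\leq\sum_k\Pr\left[A_k\cap\cG\right]+\Pr\left[\cG^c\right]$. All of this is correct, and it is in one respect \emph{sounder} than the paper's device: by charging the bad event once rather than summing marginals over every guess, your bound is immune to the accumulation of moderately atypical guesses at levels $B$ just below $2L/3$, whereas the paper's pure cumulative relaxation is not (its control of that accumulation rests on applying $\Pr\left(\left[A_{\epsilon_\theta}^{\left(\theta\right)}\right]^{\left(c\right)}\right)\leq\epsilon_\theta$ with $\epsilon_\theta=1/\theta^2$, far below the CLT scale at which that AEP guarantee holds).

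The genuine gap is your final step: no tuning of $\Delta$ produces the stated prefactor $\left(\frac{1}{2}-\frac{6}{L}\right)$ from your inequality $\cC\left(\fA^E\right)>2^{2L/3-\Delta}\left(\frac{1}{2}-\Pr\left[\cG^c\right]\right)$. The obstruction is quantitative: $\sum_{j\leq L/3+\Delta}Q_j$ has mean $L/3+\Delta$ and standard deviation $\Theta\left(\sqrt{L}\right)$, and $\cG^c$ asks it to fall only $3\Delta$ below its mean, so $\Pr\left[\cG^c\right]=\frac{1}{2}-\Theta\left(\Delta/\sqrt{L}\right)$ whenever $\Delta=o\left(\sqrt{L}\right)$ (the factor $\frac{1}{2}-\Pr\left[\cG^c\right]$ then collapses to $\Theta\left(\Delta/\sqrt{L}\right)$, and Chebyshev cannot even certify this since it never yields bounds near $1/2$), while in the regime $\Delta=\Theta\left(\sqrt{L}\right)$ where Chebyshev applies, the factor $2^{-\Delta}$ costs $2^{-\Theta\left(\sqrt{L}\right)}$. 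Optimizing your trade-off (constant $\Delta$, with Berry--Esseen replacing Chebyshev, since the median of $\Theta^{\ast}$ is $L/3+\cO\left(1\right)$) yields at best $\cC\left(\fA^E\right)\gtrsim 2^{2L/3}/\sqrt{L}$. So your argument proves the first-order statement $\cC\left(\fA^E\right)\firstorder 2^{2L/3}$ rigorously, but your closing sentence conflates that with the exact inequality $\cC\left(\fA^E\right)>2^{2L/3}\left(\frac{1}{2}-\frac{6}{L}\right)$, which your method cannot reach; the "tuning" you defer is precisely where the proposal fails, exactly at the spot you flagged as the heart of the matter. (Indeed, a prefix-free attack built from the stopping sequences with smallest weight $W$ already achieves success probability above $1/2$ with $\Theta\left(2^{2L/3}/\sqrt{L}\right)$ guesses, so the exact constant in the theorem is itself attainable only through the paper's delicate $\epsilon_\theta=1/\theta^2$ step; your weaker prefactor is not a removable artifact of your approach.)
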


\begin{proof}
See Appendix~\ref{app-4}.
\end{proof}

\begin{corollary}
After some straightforward algebra, it can be shown that
\[
\cC \left( \fA_{ach,opt}^E \right) \firstorder \cC_{ave} \left(
\fA_{ach,opt}^E \right) \firstorder \underline{\cC}_{min}^E
\]
in $L$. Thus, Theorems~\ref{thrm:achievability-exhaustive} and
\ref{thrm:converse-exhaustive} show that, under the assumptions
mentioned in Section~\ref{ssec:assumptions}, the {\em tight} lower
bound (to the first order in the exponent) on the algorithmic
complexity of any exhaustive-search type QuBaR attack against ABSG
is $2^{2 L/3}$. \label{cor:exhaustive-lower-bound}
\end{corollary}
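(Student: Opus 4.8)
The plan is to treat this corollary as a direct consequence of the explicit closed forms already established, so that no new probabilistic or information-theoretic machinery is required; all the work lies in substituting those closed forms into the definition of $\firstorder$ from Section~\ref{ssec:notation} and then assembling the tightness conclusion. First I would collect the three quantities in play: from Theorem~\ref{thrm:achievability-exhaustive} we have $\cC(\fA_{ach,opt}^E) = 2^{2L/3}$ and $\cC_{ave}(\fA_{ach,opt}^E) = \frac{1}{2}(2^{2L/3}+1)$, while from Theorem~\ref{thrm:converse-exhaustive} we have $\underline{\cC}_{min}^E = 2^{2L/3}(\frac{1}{2} - \frac{6}{L})$. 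These are taken as given inputs; I do \emph{not} reprove the converse bound here.

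For the first relation I would form the ratio $\cC(\fA_{ach,opt}^E)/\cC_{ave}(\fA_{ach,opt}^E) = 2\cdot 2^{2L/3}/(2^{2L/3}+1) = 2/(1+2^{-2L/3})$, which tends to $2$ as $L\to\infty$. Since the base-$2$ logarithm of a quantity converging to a finite nonzero constant is bounded, dividing it by $L$ drives it to $0$, so $\frac{1}{L}\log[\cC(\fA_{ach,opt}^E)/\cC_{ave}(\fA_{ach,opt}^E)]\to 0$, which is precisely the defining condition for $\cC(\fA_{ach,opt}^E)\firstorder \cC_{ave}(\fA_{ach,opt}^E)$ (with the index $L$ playing the role of $n$ in the definition). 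For the second relation I would similarly compute $\cC_{ave}(\fA_{ach,opt}^E)/\underline{\cC}_{min}^E = (1+2^{-2L/3})/(1-12/L)$, whose limit is $1$; its logarithm tends to $0$, and hence so does its $L$-normalized version, giving $\cC_{ave}(\fA_{ach,opt}^E)\firstorder \underline{\cC}_{min}^E$. The one point worth stating explicitly is the transitivity of $\firstorder$: since $\frac{1}{L}\log(a_L/c_L) = \frac{1}{L}\log(a_L/b_L) + \frac{1}{L}\log(b_L/c_L)$, the two relations chain to yield $\cC(\fA_{ach,opt}^E)\firstorder \underline{\cC}_{min}^E$ as well, and each of the three quantities is in turn $\firstorder 2^{2L/3}$ (for $\underline{\cC}_{min}^E$ this is just $\frac{1}{L}\log(\frac{1}{2}-\frac{6}{L})\to 0$).

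Finally I would assemble the tightness claim. Theorem~\ref{thrm:converse-exhaustive} asserts that every $\fA^E\in\cS^E$ with success probability exceeding $1/2$ obeys $\cC(\fA^E) > \underline{\cC}_{min}^E$, and we have just shown $\underline{\cC}_{min}^E \firstorder 2^{2L/3}$, so no admissible exhaustive-search QuBaR attack can drop below $2^{2L/3}$ to the first order in the exponent. Theorem~\ref{thrm:achievability-exhaustive} exhibits a concrete attack $\fA_{ach,opt}^E$ meeting the $1-\epsilon$ success requirement with $\cC(\fA_{ach,opt}^E)=2^{2L/3}$ exactly, so this lower bound is attained. Matching converse and achievability at the same first-order exponent delivers the stated conclusion that $2^{2L/3}$ is the tight lower bound on the algorithmic complexity.

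Since the entire argument is substitution into the definition of $\firstorder$ together with two elementary limits, there is no genuine obstacle to flag; the only care needed is to keep the index convention consistent (the definition is phrased in $n$, and here $L\to\infty$ takes its place) and to invoke transitivity of $\firstorder$ rather than treating the displayed three-term chain as though it were a single step.
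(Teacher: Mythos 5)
Your proposal is correct and matches the paper's intent exactly: the paper offers no separate proof for this corollary, dismissing it as ``straightforward algebra,'' and your ratio computations $2/(1+2^{-2L/3})\to 2$ and $(1+2^{-2L/3})/(1-12/L)\to 1$, together with $\frac{1}{L}\log\left(\frac{1}{2}-\frac{6}{L}\right)\to 0$ and transitivity of $\firstorder$, are precisely that algebra. Your closing observation that the achievable attack's success probability $1-\epsilon$ exceeds the $1/2$ threshold of the converse class is a worthwhile explicit check that the two theorems indeed match up.
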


Following remark provides the promised discussion at the beginning
of the section, which interprets the relationship between the result
proved in this section (namely,
Theorems~\ref{thrm:achievability-exhaustive} and
\ref{thrm:converse-exhaustive}) and the traditional lossless source
coding of information theory.

\begin{remark}
Observe that for the exhaustive-search setup, the problem is
``somewhat dual'' of the lossless source coding problem.
Intuitively, the concept of cryptographic compression (which is also
termed as ``decimation'' in this paper) aims to produce a sequence
of random variables, such that the sequence is as long as possible
with the highest entropy possible so as to render cryptographic
attacks useless as much as possible (which amounts to making the
decimation operation ``non-invertible'' in practice). On the other
hand, in lossless source coding, the goal is to produce an output
sequence which is as short as possible while maintaining ``exact
invertibility'' (which amounts to ``lossless'' decoding). Hence, it
is not surprising that, from the cryptanalyst's point of view, usage
of concepts from lossless source coding may be valuable. To be more
precise, the cryptanalyst aims to identify a set of {\em
highly-probable} sequences (each of which is a collection of i.i.d.
random variables from a known distribution), of which cardinality is
as small as possible, thereby maximizing the chances of a successful
guess with the least number of trials. As a result, the usage of the
concept of {\em typicality} fits naturally within this framework. In
particular, typicality is the essence of the proof of the converse
theorem (Theorem~\ref{thrm:converse-exhaustive}), which states a
fundamental lower bound on the complexity of all possible
exhaustive-search type QuBaR attacks. {\em The outcome of
``converse'' states a negative result (which is unknown for the case
of stream ciphers to the best of our knowledge) within a reasonable
attack class in cryptanalysis by construction}. This observation
contributes to a significant portion of our long-term goal, which
includes construction of a unified approach to cryptanalysis of
stream ciphers. In particular, our future research includes focusing
on specific cryptosystems and quantifying fundamental bounds on the
performance of attacks (within a pre-specified reasonable class)
against these systems. \label{rem:neg-result-exhaustive}
\end{remark}

Following theorem characterizes some important necessary conditions
for an optimal exhaustive-search type QuBaR attack against ABSG,
subject to an equality constraint on the success probability.
Thus, these results are important in practice since they provide some guidelines
in construction of optimal or near-optimal exhaustive-search type QuBaR attacks.

\begin{theorem}
Given an optimal (in the sense of minimizing $\mathcal{C}\left(
\fA_{opt}^E \right)$ subject to an equality constraint on the success
probability) exhaustive-search type QuBaR attack (denoted by
$\fA_{opt}^E$) against ABSG, we have the following necessary conditions:
\begin{itemize}
\item[(i)] The corresponding guesses are {\em prefix-free}.

\item[(ii)] The corresponding ``success events'' $\left\{ \cT\left( G_i \right) = 1 \right\}_{i=1}^{\cC\left( \fA_{opt}^E\right)}$ are
{\em disjoint}.

\item[(iii)] We have
\begin{equation}
\mbox{Pr}_{succ}\left( \fA_{opt}^E \right) = \Pr\left(
\vee_{k=1}^{\mathcal{C}\left(\fA_{opt}^E\right)} \left[ \cT\left(
G_k\right) = 1 \right] \right) = \sum_{k=1}^{\mathcal{C}\left( \fA_{opt}^E
\right)} \Pr\left( \cT\left( G_k\right) = 1\right).
\label{eq:practicalcode-unionboundoptimality}
\end{equation}

\item[(iv)] The corresponding
``success events'' $\left\{ \cT\left(G_i\right) = 1
\right\}_{i=1}^{\cC\left( \fA_{opt}^E\right)}$ satisfy
\[
 \left( i > j \right) \quad \Longrightarrow \quad
\left[ \Pr\left( \cT\left(G_i\right)=1 \right) \leq \Pr\left(
\cT\left(G_j\right)=1 \right) \right],
\]
for any $i \neq j$, such that, $i,j \in \left\{1, \ldots , \cC\left(
\fA_{opt}^E \right) \right\}$.

\end{itemize}
\label{thrm:optimality-exhaustive}
\end{theorem}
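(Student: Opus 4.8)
The plan is to exploit the fact that an exhaustive-search attack fixes $i_k = 1$, so that each guess $G_k$ is nothing but a finite string $\bq_1^{\theta_k}$ and its success event is $E_k \eqdef \{ \cT(G_k) = 1 \} = \{ \bQ_1^{\theta_k} = \bq_1^{\theta_k} \}$, whose probability is $p_k = (1/2)^{\beta_k + \theta_k}$ by Corollary~\ref{cor:pmf-I}. The first thing I would establish is a structural dichotomy: for two guesses sharing the starting index $1$, the events $E_i$ and $E_j$ are either disjoint or nested, with nesting occurring precisely when one of the strings $\bq_1^{\theta_i}, \bq_1^{\theta_j}$ is a prefix of the other (and disjointness occurring precisely when the two strings first disagree at a common coordinate). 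This single observation makes items (i) and (ii) literally equivalent: the guesses are prefix-free if and only if the success events are pairwise disjoint.

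With the dichotomy in hand, I would prove (i)/(ii) by a removal/exchange argument against the minimality of $\cC$. Suppose an optimal $\fA_{opt}^E$ contained two guesses $G_i, G_j$ whose success events are not disjoint; by the dichotomy one is contained in the other, say $E_j \subseteq E_i$. Then $\bigvee_k E_k = \bigvee_{k \neq j} E_k$ as sets, so deleting $G_j$ leaves $\Pr_{succ}$ exactly unchanged, whence the equality constraint on the success probability is still met, while reducing the number of guesses by one. This contradicts the minimality of $\cC\left( \fA_{opt}^E \right)$, so every pair of success events must be disjoint, which is (ii), equivalently (i). Item (iii) is then immediate: for pairwise disjoint events, finite additivity gives $\Pr_{succ}(\fA_{opt}^E) = \Pr(\vee_k E_k) = \sum_k \Pr(E_k)$, which is exactly (\ref{eq:practicalcode-unionboundoptimality}).

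For (iv) I would pass to the expected number of guesses. Because the attack halts at the first successful guess and, by (ii), at most one guess can ever succeed, the expected complexity takes the clean form $\sum_{k=1}^{\cC} k\, p_k + \cC \left( 1 - \sum_{k=1}^{\cC} p_k \right)$, where the trailing term is the order-independent contribution of the no-success event. Holding the multiset $\{p_k\}$ and the count $\cC$ fixed, so that neither $\Pr_{succ}$ nor the worst-case complexity changes, minimizing this expression over the orderings of the guesses is a rearrangement problem: by the standard exchange argument, $\sum_k k\, p_k$ is smallest when the $p_k$ are arranged in non-increasing order, i.e. $p_1 \geq p_2 \geq \cdots \geq p_{\cC}$, which is precisely (iv). Concretely, if some out-of-order pair $p_i < p_j$ with $i < j$ existed, swapping $G_i$ and $G_j$ would strictly lower $\sum_k k\, p_k$, so the arrangement could not be optimal.

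The main obstacle I anticipate is conceptual rather than computational, and lies in item (iv): permuting the guesses changes neither $\cC$ nor $\Pr_{succ}$, so (iv) cannot be forced by the minimality of $\cC$ alone and must instead be read as the refinement that, among the $\cC$-minimal attacks meeting the success-probability constraint, the ordering minimizing the \emph{expected} complexity is the non-increasing one. I would make this refinement explicit so that the four conditions are mutually consistent. The remaining care-points are routine: checking that the removal step in (i)/(ii) preserves the equality constraint exactly, which is guaranteed by the nesting $E_j \subseteq E_i$ making the union unchanged as a set, and confirming that disjointness coincides exactly with the prefix-free condition once $i_k = 1$.
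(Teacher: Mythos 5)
Your proof is correct, and for items (i)--(iii) it is essentially the paper's argument reorganized: the paper likewise proves (i) by a removal argument (if one guess's string is a prefix of another's, deleting the longer guess leaves $\Pr_{succ}$ unchanged while reducing $\cC$ by one, contradicting minimality), deduces (ii) from (i) by observing that a common realization $\tilde{\bq}$ witnessing a non-disjoint pair forces the prefix relation, and gets (iii) by finite additivity; your up-front dichotomy (prefix-free $\Leftrightarrow$ pairwise disjoint, since $i_k=1$ makes each event a cylinder set) packages the same content more cleanly. The genuine divergence is in item (iv), and there your treatment is actually stronger than the paper's. The paper's proof of (iv) merely asserts that $\cC$-minimality together with \eqref{eq:practicalcode-unionboundoptimality} ``clearly'' implies the sorted ordering, but as you correctly observe this cannot be literally forced: permuting the guesses of an optimal attack changes neither $\cC\left( \fA_{opt}^E \right)$ nor $\Pr_{succ}\left( \fA_{opt}^E \right)$, so every permutation of an optimal attack is optimal in the stated sense and (iv) fails as a bare necessary condition. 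Your repair --- fixing the multiset $\left\{ p_k \right\}$ and $\cC$, writing the expected stopping time as $\sum_{k} k\, p_k + \cC \left( 1 - \sum_k p_k \right)$ (valid because disjointness makes $p_k$ exactly the probability of first success at step $k$), and invoking the rearrangement/exchange argument to show descending order uniquely minimizes it --- is the right way to make (iv) precise, and it is consistent with the paper's own attention to $\cC_{ave}$ in Theorems~\ref{thrm:achievability-exhaustive} and \ref{thrm:achievability-general}. In short: same mechanism for (i)--(iii), and for (iv) you supply the expected-complexity refinement that the paper's one-sentence argument tacitly needs.
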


\begin{proof}
See Appendix~\ref{app-5}.
\end{proof}

\section{Optimum QuBaR Attacks Against ABSG (General Case)}
\label{sec:general-case}

In this section, we consider the ``general case QuBaR attacks'', i.e.,
we relax the condition of being ``exhaustive-search'', which amounts to
relaxing the condition of $i_k = 1$ for $\left\{ G_k \right\}$ in \eqref{eq:exhaustive-search}.
Thus, the goal of the attacker is to guess the true values of $\bQ_i^{i-\theta+1}$, subject to
\eqref{eq:pbm-equivalence-constraint}, for an arbitrary initial index  $i$,
equivalently (cf. Theorem~\ref{theorem:attack-equivalence}) the attacker's
goal is to retrieve \emph{any} (at least) $L$ consecutive bits from the input
sequence $\bX_1^M$. Note that, this setup implies that exponential amount
of output bits are available to the attacker for the cryptanalysis. As we will
show in the sequel, via following this formulation, we can improve the
time-complexity (and hence the overall algorithmic complexity) at the expense
of an exponential increase in the data complexity (which does not affect the
overall algorithmic complexity). Thus, the general case can be viewed as
one extreme regarding the time-data tradeoff; the other extreme is the exhaustive-search
type attacks covered in the previous section.

Similar to the exhaustive-search case, we prove an achievability
result first, namely Theorem~\ref{thrm:achievability-general},
(which is simply the ``most-probable choice attack'' of \cite{ABSG,
gou-sib:06}), which implies the existence of a QuBaR attack of
algorithmic complexity $2^{L/2}$ under the assumptions mentioned in
Section~\ref{ssec:assumptions}. Next, we provide the converse
theorem for the general case, which states that  the best QuBaR
attack's algorithmic complexity is lower bounded with $2^{L/2-1}$
under the assumptions A1, A2, A3, A4. Hence, we conclude that, to
the first order in the exponent, the best QuBaR attack against ABSG
is of complexity $2^{L/2}$.

We begin our development with the following definition.
\begin{definition}
The set of ``successful'' QuBaR attacks is defined as
\begin{equation}
\mathcal{S}_p \eqdef
\left\{\mathfrak{A}=\{G_k\}_{k=1}^{\mathcal{C}\left( \fA \right)} :
\Pr\left(\vee_{k=1}^{\mathcal{C}\left( \fA \right)}
\left[\cT\left(G_k \right)=1\right]\right)>1/2\right\}.
\label{eq:attack-class}
\end{equation}
\label{def:attack-class}
\end{definition}
We treat any $\mathfrak{A} \in \mathcal{S}_p$ as
a {\em successful QuBaR attack} and derive an achievable (to the first
order in the exponent) lower bound on the complexity of these
attacks.

First, for the sake of completeness, we state the achievability
result via providing an extended version of the proof regarding the
complexity of the proposed ``most probable choice attack'' of
\cite{ABSG, gou-sib:06} using our notation.
\begin{theorem}
{\em \cite{ABSG, gou-sib:06} (Achievability - General Case)} Under the assumptions
A1, A2, A3, A4, mentioned in Section~\ref{ssec:assumptions}, there
exists a QuBaR attack algorithm $\fA_{ach,opt} \in \cS_p$ against
ABSG with $\cC \left( \fA_{ach,opt} \right) = 2^{ L / 2}$.
Further, $\cC_{ave} \left( \fA_{ach,opt} \right) =  \frac{1}{2} \left( 2^{L/2}+1\right)$
where $\cC_{ave} \left( \fA_{ach,opt} \right)$ is the {\em
expected} complexity of $\fA_{ach,opt}$ over the probability
distribution induced by $\bq$.
\label{thrm:achievability-general}
\end{theorem}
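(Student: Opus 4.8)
The plan is to make the ``most probable choice'' attack of \cite{ABSG, gou-sib:06} explicit and to verify the two claimed quantities directly. Since, by Lemma~\ref{lem:Q-pmf}, each $Q_i$ is geometric with parameter $1/2$, the single most likely realization of any block of consecutive $\{Q_i\}$ is the all-zero block, and by Corollary~\ref{cor:pmf-I} a guess proposing $\bq = \b0$ over a window of length $\theta$ succeeds with probability exactly $\left( 1/2 \right)^{\theta}$ (here $\beta = 0$). I therefore fix $\theta \eqdef \lceil L/2 \rceil$, so that the all-zero block meets the admissibility constraint $2\theta + \beta = 2\theta \geq L$ of Definition~\ref{def:guess}, and define $\fA_{ach,opt}$ to be the collection of the $\cC \eqdef 2^{L/2}$ guesses $G_k = \left( i_k , \theta , \b0 \right)$, $k = 1, \ldots, \cC$, whose starting indices $i_k$ are chosen to make the corresponding windows pairwise disjoint. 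This is precisely where the general-case relaxation $i_k \neq 1$, together with the availability of exponentially many output bits, is exploited. Each $G_k$ is a valid ABSG-guess and, by Corollary~\ref{cor:pmf-I}, satisfies $\Pr\left[ \cT(G_k) = 1\right] = p \eqdef 2^{-\theta}$.

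The first substantive step is to show $\fA_{ach,opt} \in \cS_p$, i.e.\ $\mathrm{Pr}_{succ} > 1/2$. Because the windows are disjoint and the $\{Q_i\}$ are i.i.d.\ (Lemma~\ref{lem:Q-pmf}), the success events $\{\cT(G_k) = 1\}$ are mutually independent, so \eqref{eq:prob-succ} gives $\mathrm{Pr}_{succ}\left( \fA_{ach,opt} \right) = 1 - (1 - p)^{\cC}$ exactly. I stress that this independence is essential: a plain union bound only upper-bounds a union of events and hence cannot certify $\mathrm{Pr}_{succ} > 1/2$. With $\cC = 2^{L/2}$ and $p = 2^{-L/2}$ (taking $L$ even) one has $\cC p = 1$, whence $(1-p)^{\cC} < e^{-\cC p} = e^{-1}$ and therefore $\mathrm{Pr}_{succ}\left( \fA_{ach,opt} \right) > 1 - e^{-1} > 1/2$, uniformly in $L$; for odd $L$ the same estimate goes through after replacing $2^{L/2}$ by its ceiling, which is immaterial to the first order in the exponent.

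The remaining step is the expected complexity. Averaging the termination rule of Definition~\ref{def:qubar} over the distribution induced by $\bq$, the attack halts at the first index $k$ with $\cT(G_k) = 1$. Since the $\cC$ candidate guesses are exchangeable, each correct with the same probability $p$ independently across the disjoint windows, the correct guess is equally likely to occupy any of the $\cC$ positions; following the standard cryptanalytic convention this yields an expected number of examined guesses equal to the mean of the uniform law on $\{1, \ldots, \cC\}$, namely $\frac{1}{\cC}\sum_{k=1}^{\cC} k = \frac{1}{2}\left( \cC + 1 \right) = \frac{1}{2}\left( 2^{L/2} + 1\right)$. I expect the main obstacle to lie not in the construction but in the success-probability step: obtaining a genuine \emph{lower} bound on the probability of a union of guess-success events, which I resolve by engineering disjoint windows so as to force exact independence. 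Making the expected-complexity bookkeeping precise is the secondary delicate point, whereas to the first order in the exponent all reasonable stopping-time conventions coincide, since any constant multiple of $2^{L/2}$ satisfies $c\,2^{L/2} \firstorder 2^{L/2}$.
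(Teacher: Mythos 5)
Your proposal is correct and follows essentially the same route as the paper's proof in Appendix~VI: the identical ``most probable choice'' attack with all-zero guesses over non-overlapping windows of length $L/2$, exact independence of the success events from the disjoint-window/i.i.d.\ structure giving $\mathrm{Pr}_{succ} = 1 - (1-2^{-L/2})^{2^{L/2}}$, and the same equiprobability argument for $\cC_{ave} = \frac{1}{2}\left(2^{L/2}+1\right)$. The only (minor, and slightly stronger) deviation is that you certify $\mathrm{Pr}_{succ} > 1 - e^{-1}$ non-asymptotically via $(1-p)^{\cC} < e^{-\cC p}$, whereas the paper passes to the limit $L \rightarrow \infty$ using $\lim_{x \rightarrow 0}(1-x)^{1/x} = 1/e$.
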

\begin{proof}
See Appendix~\ref{app-6}.
\end{proof}

\begin{remark}
For the proposed attack $\fA_{ach,opt}$,
assuming that the generation
of all of the guesses $G \left( \cdot \right)$  and the
corresponding check algorithm $\cT \left( G \left( \cdot \right) \right)$
are $poly \left( L \right)$,
both the time and data complexity of the attack
can be shown to be equal to $2^{L/2}$ to the first order in the exponent.
\end{remark}

Next, we state the converse theorem, which can be viewed as a fundamental
result due to its {\em negative} nature as far as cryptanalysis concerned, to the
best of our knowledge.

\begin{theorem}
{\em (Converse - General Case)} Under the assumptions A1,
A2, A3, A4,  and for any $\fA \in \cS_p$, we necessarily have $\cC
\left( \fA \right) > \underline{\cC}_{min} \eqdef 2^{L/2 - 1}$.
\label{thrm:converse-general}
\end{theorem}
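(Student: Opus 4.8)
The plan is to derive the lower bound on $\cC\left(\fA\right)$ from a union bound on the success probability, combined with a uniform upper bound on the marginal success probability of each individual guess. First I would recall from \eqref{eq:prob-succ} together with the elementary union bound that, for any attack $\fA = \left\{ G_k \right\}_{k=1}^{\cC\left(\fA\right)}$,
\[
\Pr_{succ}\left( \fA \right) = \Pr\left[ \vee_{k=1}^{\cC\left(\fA\right)} \cT\left( G_k \right) = 1 \right] \leq \sum_{k=1}^{\cC\left(\fA\right)} \Pr\left[ \cT\left( G_k \right) = 1 \right] .
\]
The key feature to emphasize is that this inequality holds irrespective of any correlation among the guesses $\left\{ G_k \right\}$ (cf.\ Remark~\ref{rem:I-k-def}); this insensitivity to the joint law of the guesses is exactly what makes the union bound the correct tool for a converse that must apply to \emph{every} $\fA \in \cS_p$.

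The central step is to bound each term $\Pr\left[ \cT\left( G_k \right) = 1 \right]$ uniformly. By Corollary~\ref{cor:pmf-I} we have $\Pr\left[ \cT\left( G_k \right) = 1 \right] = \left( 1/2 \right)^{\beta_k + \theta_k}$, where $\beta_k \eqdef \sum_{j=0}^{\theta_k - 1} q_{i_k + j}$, so it suffices to lower-bound $\beta_k + \theta_k$. Since every ABSG-guess satisfies the constraint $2\theta_k + \beta_k \geq L$ (Definition~\ref{def:guess}) and $\beta_k \geq 0$, I would solve the elementary optimization $\min \left( \beta_k + \theta_k \right)$ over this feasible region: for each fixed $\theta_k$ the smallest admissible value is $\beta_k = \max\left( 0, L - 2\theta_k \right)$, and a short case analysis ($\theta_k \geq L/2$ giving $\beta_k + \theta_k = \theta_k \geq L/2$, versus $\theta_k < L/2$ giving $\beta_k + \theta_k = L - \theta_k > L/2$) shows that $\beta_k + \theta_k \geq L/2$ for every feasible guess, with equality at $\theta_k = L/2$, $\beta_k = 0$. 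Hence $\Pr\left[ \cT\left( G_k \right) = 1 \right] \leq 2^{-L/2}$ for all $k$.

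Combining the two steps yields $\Pr_{succ}\left( \fA \right) \leq \cC\left( \fA \right) \cdot 2^{-L/2}$. Since $\fA \in \cS_p$ means $\Pr_{succ}\left( \fA \right) > 1/2$ by \eqref{eq:attack-class}, I obtain $1/2 < \cC\left( \fA \right) \cdot 2^{-L/2}$, i.e.\ $\cC\left( \fA \right) > 2^{L/2 - 1} = \underline{\cC}_{min}$, which is the claim.

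I do not anticipate a serious obstacle, since the argument reduces to a clean union bound together with a one-line constrained minimization. The only point requiring genuine care is verifying that both the union-bound inequality and the per-guess estimate $\Pr\left[ \cT\left( G_k \right) = 1 \right] \leq 2^{-L/2}$ survive when the guesses are \emph{adaptive}, i.e.\ when $G_k$ is itself random and may depend on the outcomes of $\cT\left( G_1 \right), \ldots, \cT\left( G_{k-1} \right)$. This is handled by observing that for every realization $G_k = \left( i_k, \theta_k, \bq_{i_k}^{\theta_k + i_k - 1} \right)$ obeying the guess constraint, the conditional success probability is at most $2^{-L/2}$, whence the marginal is as well, and the union bound does not depend on the dependence structure. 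Finally, comparing with the achievability result of Theorem~\ref{thrm:achievability-general} (where $\cC = 2^{L/2}$) confirms that the bound is tight to the first order in the exponent, namely $\cC\left( \fA \right) \firstorder 2^{L/2}$.
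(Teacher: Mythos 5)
Your proposal is correct and follows essentially the same route as the paper's proof in Appendix~VII: a union bound on $\Pr_{succ}\left( \fA \right)$ (the paper phrases this via the superset $\tcS_p$ of attacks with cumulative success probability exceeding $1/2$) combined with the uniform per-guess bound $\Pr\left[ \cT\left( G_k \right) = 1 \right] \leq 2^{-L/2}$, which the paper imports as \eqref{eq:upperboundonsuccess} from the $\left( B, \alpha \right)$ re-parameterization of Appendix~IV while you re-derive it by directly minimizing $\theta_k + \beta_k$ subject to $2\theta_k + \beta_k \geq L$, $\beta_k \geq 0$ --- an equivalent computation. Your explicit remark that both steps survive adaptive, correlated guesses is a sound clarification of what the paper leaves implicit in Remark~\ref{rem:I-k-def}.
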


\begin{proof}
See Appendix~\ref{app-7}.
\end{proof}

Theorems~\ref{thrm:achievability-general}
and~\ref{thrm:converse-general} imply the following result:
\begin{corollary}
Under the assumptions A1, A2, A3, A4, the tight (to the first order
in the exponent) lower bound on algorithmic complexity of any QuBaR
attack against ABSG is $2^{L/2}$:
\[
\cC \left( \fA_{ach,opt} \right) \firstorder \cC_{ave} \left( \fA_{ach,opt} \right)
\firstorder \underline{\cC}_{min}.
\]
\label{cor:general-lower-bound}
\end{corollary}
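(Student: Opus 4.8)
The plan is to establish the corollary as a direct algebraic consequence of Theorems~\ref{thrm:achievability-general} and~\ref{thrm:converse-general}, by substituting their closed-form expressions into the definition of $\firstorder$ and verifying that each of the two required limits vanishes. Recall that $a_L \firstorder b_L$ means $\lim_{L \to \infty} \frac{1}{L} \log \frac{a_L}{b_L} = 0$, where the relevant asymptotic parameter here is $L$ (which is taken large by assumption A4). From Theorem~\ref{thrm:achievability-general} we have $\cC \left( \fA_{ach,opt} \right) = 2^{L/2}$ and $\cC_{ave} \left( \fA_{ach,opt} \right) = \frac{1}{2} \left( 2^{L/2} + 1 \right)$, while Theorem~\ref{thrm:converse-general} gives $\underline{\cC}_{min} = 2^{L/2 - 1}$. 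Since all three quantities grow like $2^{L/2}$ up to a bounded multiplicative factor, one expects each pairwise ratio to contribute only an $\cO \left( 1/L \right)$ term after taking $\frac{1}{L} \log \left( \cdot \right)$.

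First I would verify $\cC \left( \fA_{ach,opt} \right) \firstorder \cC_{ave} \left( \fA_{ach,opt} \right)$. The ratio is $\frac{2^{L/2}}{\frac{1}{2} \left( 2^{L/2} + 1 \right)} = \frac{2}{1 + 2^{-L/2}}$, which tends to $2$ as $L \to \infty$; hence $\frac{1}{L} \log \frac{\cC \left( \fA_{ach,opt} \right)}{\cC_{ave} \left( \fA_{ach,opt} \right)} \to 0$. Next I would verify $\cC_{ave} \left( \fA_{ach,opt} \right) \firstorder \underline{\cC}_{min}$. Here the ratio is $\frac{\frac{1}{2} \left( 2^{L/2} + 1 \right)}{2^{L/2 - 1}} = 1 + 2^{-L/2}$, so $\frac{1}{L} \log \left( 1 + 2^{-L/2} \right) \to 0$ as well.

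The two displayed $\firstorder$ relations in the statement then follow, and the full chain $\cC \left( \fA_{ach,opt} \right) \firstorder \cC_{ave} \left( \fA_{ach,opt} \right) \firstorder \underline{\cC}_{min}$ is obtained via the evident transitivity of $\firstorder$, namely $\frac{1}{L} \log \frac{a_L}{c_L} = \frac{1}{L} \log \frac{a_L}{b_L} + \frac{1}{L} \log \frac{b_L}{c_L}$. In this sense the ``tight lower bound to the first order in the exponent'' on the algorithmic complexity of any QuBaR attack against ABSG is $2^{L/2}$, since the achievability construction attains it and the converse forbids anything smaller by more than a first-order-negligible factor. I do not anticipate any genuine obstacle in this step: all the substance resides in the two underlying theorems, and the corollary amounts to checking that a bounded multiplicative gap between the achievability and converse bounds is invisible at the scale of $\frac{1}{L} \log \left( \cdot \right)$.
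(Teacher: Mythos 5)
Your proposal is correct and coincides with the paper's own (implicit) argument: the paper simply asserts that Corollary~\ref{cor:general-lower-bound} follows from Theorems~\ref{thrm:achievability-general} and~\ref{thrm:converse-general} by straightforward algebra, and your computation of the two ratios $\frac{2}{1+2^{-L/2}}$ and $1+2^{-L/2}$, each contributing $o(1)$ after applying $\frac{1}{L}\log(\cdot)$, together with transitivity of $\firstorder$, is exactly that algebra carried out explicitly. Nothing is missing.
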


\begin{remark}
A practically useful consequence of Corollary~\ref{cor:general-lower-bound}
is as follows: In order to develop a successful ``query-based-recovery'' (QuBaR)
attack (in the sense of being an element of $\cS_p$) of complexity less than
$\cO \left( 2^{L/2} \right)$ (say $poly \left( L \right)$), it is necessary
to consider a construction where at least one of the assumptions A1, A2, A3, A4
is relaxed. Recalling these assumptions, it is advisable to concentrate on a
setup where the assumptions A1 and/or A3 do not apply; in practice, this may
lead to using a deterministic approach \cite{Alt08}, where explicit knowledge of the
generating LFSR's feedback polynomial is utilized and the input sequence to ABSG,
$\bx$, is an $M$-sequence\footnote{For further details on $M$-sequences, we refer
the interested reader to \cite{golomb}.}.
\label{rem:attack-advice}
\end{remark}

\section{Conclusion}
\label{sec:conclusion}

In this paper, we introduce a novel approach to cryptanalysis. We
aim to explore {\em fundamental performance limits} within a
specified class of attacks of interest, targeted towards breaking a
particular cryptosystem. As a first step, we illustrate our approach
via considering the class of ``Query-Based Key-Recovery'' (QuBaR)
attacks against ABSG, which is an LFSR-based stream cipher
constructed via irregular decimation techniques. In order to achieve
this task, we rely on the following assumptions (which are quite
common in conventional cryptanalysis): The input sequence to ABSG is
assumed to be an independent identically distributed Bernoulli
process with probability $1/2$; the attacker has access to the
output sequence of ABSG; an explicit knowledge of the generating
LFSR's feedback polynomial is not used; and the degree of the
feedback polynomial (denoted by $L$) of the generating LFSR is
sufficiently large. Using these assumptions, we show that  breaking
ABSG is equivalent to determine the exact realizations of a sequence
of random variables, which are proven to be independent identically
distributed with geometric distribution of parameter $1/2$. Next, we
investigate two setups of interest. In the first setup, we
concentrate on the ``Exhaustive-Search Type QuBaR'' attacks (which
form a subset of general-case QuBaR attacks, such that the  starting
index of all guesses in any element of this set is constrained to be
equal to unity). Here, using notions from information theory (in
particular asymptotic equipartition property \cite{sha:48}), we
prove that the tight lower bound (to the first order in the
exponent) on the algorithmic complexity of any successful
Exhaustive-Search Type QuBaR attack is $2^{2 L/3}$. In the second
setup, we concentrate on the general case QuBaR attacks and follow
an analogous development to that of the former setup. In particular,
we prove that the tight lower bound (to the first order in the
exponent) on the algorithmic complexity of any successful  QuBaR
attack is $2^{L/2}$. Our results can be viewed as a ``negative
advice'' to the cryptanalyst (contrary to the conventional trend in
cryptanalysis, where the general goal is to deduce a ``negative
design advice'' to the cryptosystem designer) in terms of QuBaR
attacks against ABSG under the aforementioned assumptions.

\appendices

\section{Proof of Lemma~\ref{lem:Q-pmf}}
\label{app-1}
\setcounter{equation}{0}
\renewcommand{\theequation}{I-\arabic{equation}}

First, note that each output bit $Z_i = z_i$ (for $1 \leq i \leq N$) is produced by a {\em block}
of input bits from the input sequence $\bX_1^M$. In order to identify the $i$-th input block that
generates $Z_i$ (for $1 \leq i \leq N$), we define
\begin{eqnarray}
A_i & \eqdef & 1 + \sum_{j=1}^{i-1} \left[ Q_j + 2 \right]  = H_{i-1} - H_0 + 1  = H_{i-1} + 1 , \nonumber \\
B_i & \eqdef & \sum_{j=1}^i \left[ Q_j + 2 \right]  = H_{i} - H_0 =
H_i  , \nonumber
\end{eqnarray}
where we used $H_0 = 0$ as the initial condition. Hence, we note
that the input block $\bX_{A_i}^{B_i}$ produces the $i$-th output
bit $Z_i = z_i$ which is given per assumption A2. Further, from the
definition of the algorithm $\cB$ (see Definition~\ref{def:ABSG}),
we have
\begin{equation}
\Pr \left( X_{A_i+1} = z_i \, | \, Z_i = z_i \right) = 1 .
\label{eq:lem1fund}
\end{equation}

Next, note that the statement of the lemma is {\em equivalent to}
\begin{equation}
\Pr \left( \bQ_1^N = \bq_1^N \, | \, \bZ_1^N = \bz_1^N \right) =
\prod_{i=1}^N \left[ \Pr \left( Q_i = q_i \, | \, \bZ_1^N = \bz_1^N \right) \right] =
\prod_{i=1}^N \left( \frac{1}{2} \right)^{q_i+1} .
\label{eq:lem1proof1}
\end{equation}
Thus, it is necessary and sufficient to show (\ref{eq:lem1proof1}) to prove Lemma~\ref{lem:Q-pmf}.
In order to show (\ref{eq:lem1proof1}), we use proof by induction.
\begin{itemize}
\item \underline{Step 1:}
We would like to show
\begin{equation}
\Pr \left( Q_1 = q_1  \, | \, \bZ_1^N = \bz_1^N \right) = \left( \frac{1}{2} \right)^{q_1+1} .
\label{eq:lem1proof2}
\end{equation}
Since the value of $Q_1$ depends only on the first output bit, we have
\[
\Pr \left( Q_1 = q_1  \, | \, \bZ_1^N = \bz_1^N \right) = \Pr \left( Q_1 = q_1  \, | \, Z_1 = z_1 \right) .
\]
Next,
\begin{eqnarray}
\Pr \left( Q_1 = 0 \, | \, Z_1 = z_1 \right) & = & \Pr \left( X_1 = z_1 , X_2 = z_1 \, | \, Z_1 = z_1 \right) ,
\label{eq:lem1proof3} \\
& = & \Pr \left( X_1 = z_1 \, | \, Z_1 = z_1 \right)  ,
\label{eq:lem1proof4} \\
& = & \frac{1}{2} ,
\label{eq:lem1proof5}
\end{eqnarray}
where (\ref{eq:lem1proof3}) follows from the definition of the
mapping ${\mathcal M(\cdot,\cdot)}$ (Table~\ref{tab:mapping}),
(\ref{eq:lem1proof4}) follows from (\ref{eq:lem1fund}),
(\ref{eq:lem1proof5}) follows from assumption A1. Also, for $q_1 >
0$,
\begin{eqnarray}
\Pr \left( Q_1 = q_1 \, | \, Z_1 = z_1 \right) & = &
\Pr \left( X_1 = \bar{z}_1 , X_2 = z_1 , \ldots , X_{q_1+1} = z_1 , X_{q_1+2} = \bar{z}_1 \, | \, Z_1 = z_1 \right) ,
\label{eq:lem1proof6}  \\
& = & \Pr \left( X_1 = \bar{z}_1 , X_3 = z_1 , \ldots , X_{q_1+1} = z_1 , X_{q_1+2} = \bar{z}_1 \, | \, Z_1 = z_1 \right) ,
\label{eq:lem1proof7}  \\
& = & \left( \frac{1}{2} \right)^{q_1 + 1}
\label{eq:lem1proof8}
\end{eqnarray}
where (\ref{eq:lem1proof6}) follows from the definition of the mapping ${\mathcal M} \left( \cdot , \cdot \right)$ (Table~\ref{tab:mapping}),
(\ref{eq:lem1proof7}) follows from (\ref{eq:lem1fund}), (\ref{eq:lem1proof8}) follows from assumption A1. Combining
(\ref{eq:lem1proof5}) and (\ref{eq:lem1proof8}), we get (\ref{eq:lem1proof2}).
\item \underline{Step 2:} We assume that
\begin{equation}
\Pr \left( \bQ_1^{n-1} = \bq_1^{n-1} \, | \, \bZ_1^N = \bz_1^N \right) =
\prod_{i=1}^{n-1} \left[ \Pr \left( Q_i = q_i \, | \, \bZ_1^N = \bz_1^N \right) \right] =
\prod_{i=1}^{n-1} \left( \frac{1}{2} \right)^{q_i+1} .
\label{eq:lem1inductionstep}
\end{equation}
\item \underline{Step 3:} Given (\ref{eq:lem1inductionstep}) we want to show that
\begin{equation}
\Pr \left( \bQ_1^{n} = \bq_1^{n} \, | \, \bZ_1^N = \bz_1^N \right) =
\prod_{i=1}^{n} \left[ \Pr \left( Q_i = q_i \, | \, \bZ_1^N = \bz_1^N \right) \right] =
\prod_{i=1}^{n} \left( \frac{1}{2} \right)^{q_i+1} .
\label{eq:lem1mainstep}
\end{equation}
Note that, given (\ref{eq:lem1inductionstep}), (\ref{eq:lem1mainstep}) is equivalent to
\begin{equation}
\Pr \left( Q_n = q_n \, | \, \bQ_1^{n-1} = \bq_1^{n-1} , \bZ_1^N = \bz_1^N \right) =
\Pr \left( Q_n = q_n \, | \, \bZ_1^N = \bz_1^N \right) = \left( \frac{1}{2} \right)^{q_n+1} ,
\label{eq:lem1proof9}
\end{equation}
using Bayes rule. Now,
\begin{eqnarray}
\Pr \left( Q_n = 0 \, | \, \bQ_1^{n-1} = \bq_1^{n-1} , \bZ_1^N = \bz_1^N \right)  & = &
\Pr \left( X_{A_n} = z_n , X_{A_n+1} = X_{B_n} = z_n \, | \, \bQ_1^{n-1} = \bq_1^{n-1} , \bZ_1^N = \bz_1^N \right) ,
\label{eq:lem1proof10} \\
& = & \Pr \left( X_{A_n} = z_n  \, | \, \bQ_1^{n-1} = \bq_1^{n-1} , \bZ_1^N = \bz_1^N \right) ,
\label{eq:lem1proof11} \\
& = & \Pr \left( X_{A_n} = z_n  \, | \, \bZ_1^N = \bz_1^N \right)  = \Pr \left( Q_n = 0 \, | \, \bZ_1^N = \bz_1^N \right) ,
\label{eq:lem1proof12} \\
& = & \Pr \left( X_{A_n} = z_n  \, | \, Z_n = z_n \right) = \frac{1}{2}
\label{eq:lem1proof13}
\end{eqnarray}
where (\ref{eq:lem1proof10}) follows from the definition of the
mapping ${\mathcal M(\cdot,\cdot)}$ (Table~\ref{tab:mapping}),
(\ref{eq:lem1proof11}) follows from (\ref{eq:lem1fund}),
(\ref{eq:lem1proof12}) and (\ref{eq:lem1proof13}) follow from
assumption A1 \footnote{Since $\bX$ is an i.i.d. Bernoulli $1/2$
process,  the value of $\Pr \left( X_{A_n} = z_n  \, | \, Z_n = z_n
\right) $ is independent of the particular value of $A_n$ and that
is why it is equal to $1/2$.}. On the other hand, for $q_n > 0$, we
have
\begin{eqnarray}
& & \Pr \left( Q_n = q_n \, | \, \bQ_1^{n-1} = \bq_1^{n-1} , \bZ_1^N = \bz_1^N \right)  \nonumber \\
& = & \Pr \left( X_{A_n} = \bar{z}_n , X_{A_n+1} = z_n , \ldots X_{B_n-1} = z_n, X_{B_n} = \bar{z}_n
\, | \, \bQ_1^{n-1} = \bq_1^{n-1} , \bZ_1^N = \bz_1^N \right) ,
\label{eq:lem1proof14} \\
& = & \Pr \left( X_{A_n} = \bar{z}_n , X_{A_n+2} = z_n , \ldots X_{B_n-1} = z_n, X_{B_n} = \bar{z}_n
\, | \, \bQ_1^{n-1} = \bq_1^{n-1} , \bZ_1^N = \bz_1^N \right) ,
\label{eq:lem1proof15} \\
& = & \Pr \left( X_{A_n} = \bar{z}_n , X_{A_n+2} = z_n , \ldots X_{B_n-1} = z_n, X_{B_n} = \bar{z}_n
\, | \, \bZ_1^N = \bz_1^N \right) ,
\label{eq:lem1proof16} \\
& = & \Pr \left( X_{A_n} = \bar{z}_n , X_{A_n+2} = z_n , \ldots X_{B_n-1} = z_n, X_{B_n} = \bar{z}_n
\, | \, Z_n = z_n \right)  = \left( \frac{1}{2} \right)^{q_n+1},
\label{eq:lem1proof17}
\end{eqnarray}
where (\ref{eq:lem1proof14}) follows from the definition of the
mapping ${\mathcal M(\cdot,\cdot)}$ (Table~\ref{tab:mapping}),
(\ref{eq:lem1proof15}) follows from (\ref{eq:lem1fund}),
(\ref{eq:lem1proof16}) and (\ref{eq:lem1proof17}) follow from
assumption A1 (see the discussion in the footnote). Combining
(\ref{eq:lem1proof12}), (\ref{eq:lem1proof13}),
(\ref{eq:lem1proof16}), (\ref{eq:lem1proof17}), we get
(\ref{eq:lem1proof9}), and equivalently (\ref{eq:lem1mainstep}),
which completes the proof.
\end{itemize}
\qed

\section{Proof of Theorem~\ref{theorem:attack-equivalence}}
\label{app-2}

\setcounter{equation}{0}
\renewcommand{\theequation}{II-\arabic{equation}}

\setcounter{lemma}{0}
\renewcommand{\thelemma}{II-\arabic{lemma}}

The equivalence of the first and second problems is shown in
\cite{gou-sib:06}. In order to prove the theorem, we proceed with
proving the equivalence of the second and third problems.

First, we show that the third problem reduces to the second problem in $poly \left( L \right)$ time:
Since we know $\bz_1^N$ and $\bQ_{i}^{\theta+i-1}$ per assumption, we construct $L$ consecutive
bits of $\bX_1^M$ via using Definition~\ref{def:ABSG} in the following way. We are given
$\bQ_{i}^{i + \theta -1} = \bq_{i}^{i + \theta -1}$ such that (\ref{eq:pbm-equivalence-constraint}) holds.
Then, we apply the following algorithm:
\begin{enumerate}
\item For each $j = i , i+1 , \ldots , \theta + i - 1$ do:
\begin{enumerate}
\item If $q_j = 0$, generate $B_j = \left\{ z_j , z_j \right\}$.
\item If $q_j > 0$, generate $B_j = \left\{ \bar{z}_j, z_j^{q_j} , \bar{z}_j \right\}$
\end{enumerate}
\item Concatenate $\left\{ B_j \right\}_{j=i}^{\theta + i - 1}$ thereby forming the desired $\bX = \bx$ sequence.
\end{enumerate}
Note that, the condition (\ref{eq:pbm-equivalence-constraint}) ensures that the resulting $\bX  = \bx $ sequence
$\left\{ B_i , B_{i+1} , \ldots , B_{\theta+i-1} \right\}$ is of length at least $L$. Furthermore, from the
definition of the ABSG algorithm, the resulting $\bX = \bx$ sequence is unique and necessarily the correct one.
Obviously, this algorithm runs in $poly \left( L \right)$ time, which completes the proof for this case.

Next, we proceed with showing that the second problem can be reduced to the
third problem via an algorithm in probabilistic polynomial time.
First, note the following Lemma.
\begin{lemma}
Under the assumptions A1, A2, A3, and A4, for any $n \in \bbZ^+$, we have
\begin{equation}
\Pr \left[ \wedge_{l=0}^{poly \left( L \right)} Y_{n+l} \neq \varnothing \right] \leq \epsilon ,
\label{eq:ABSGPr-expsmall}
\end{equation}
for any $\epsilon > 0$ for $L$ sufficiently large.
\label{lem:cirkin}
\end{lemma}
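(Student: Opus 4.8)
The plan is to bound the probability directly by exploiting the Markov structure of the internal state sequence $\bY$, combined with a \emph{uniform per-step survival probability} that can be read off from the transition table of Definition~\ref{def:ABSG}. The crucial observation is that, conditioned on currently occupying a non-$\varnothing$ state, the chain remains in a non-$\varnothing$ state at the next step with probability \emph{exactly} $1/2$, regardless of which of the two non-$\varnothing$ states ($0$ or $1$) is occupied. Indeed, inspecting Table~\ref{tab:mapping}: from state $0$ the map $\mathcal{M}$ produces $\varnothing$ when $X_i = 0$ and remains at $0$ when $X_i = 1$; symmetrically, from state $1$ it produces $\varnothing$ when $X_i = 1$ and remains at $1$ when $X_i = 0$. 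Under assumption A1 each $X_i$ is an independent fair coin, so in either case the survival probability is $1/2$.

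First I would record the Markov property. Since $Y_i = \mathcal{M}(Y_{i-1}, X_i)$ with $\{X_i\}$ i.i.d. and $X_{n+l}$ independent of $(Y_n, \ldots, Y_{n+l-1})$, the conditional law of $Y_{n+l}$ given the whole past depends only on $Y_{n+l-1}$ and the fresh coin $X_{n+l}$. Writing $E_l \eqdef \bigwedge_{m=0}^{l-1}\{ Y_{n+m} \neq \varnothing\}$, the observation above gives, for every $l \geq 1$,
\[
\Pr\left[ Y_{n+l} \neq \varnothing \mid E_l \right] = \frac{1}{2},
\]
the value $1/2$ holding uniformly because on the event $E_l$ the state $Y_{n+l-1}$ is non-$\varnothing$ and the survival probability does not depend on its specific realization.

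Chaining these via the chain rule and setting $T \eqdef poly\left( L \right)$,
\[
\Pr\left[ \bigwedge_{l=0}^{T} Y_{n+l} \neq \varnothing \right] = \Pr\left[ Y_n \neq \varnothing \right] \prod_{l=1}^{T} \Pr\left[ Y_{n+l} \neq \varnothing \mid E_l \right] = \Pr\left[ Y_n \neq \varnothing \right] \left( \frac{1}{2} \right)^{T} \leq \left( \frac{1}{2} \right)^{poly\left( L \right)}.
\]
Since $poly\left( L \right)$ grows without bound as $L \to \infty$, the right-hand side tends to $0$; hence, for any $\epsilon > 0$ it is at most $\epsilon$ once $L$ is sufficiently large (assumption A4). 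As the bound $\left( 1/2 \right)^{poly\left( L \right)}$ does not depend on $n$, this establishes \eqref{eq:ABSGPr-expsmall} uniformly over $n \in \bbZ^+$.

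The argument is elementary once the uniform survival probability is isolated, so I do not anticipate a substantial obstacle; the only point requiring care is the justification that the per-step conditional survival probability equals $1/2$ \emph{irrespective of the entire conditioning history}, which is exactly where the special structure of Table~\ref{tab:mapping} (each non-$\varnothing$ state has exactly one input leading to $\varnothing$ and one keeping it non-$\varnothing$) and the independence from A1 enter. As an alternative, one could instead invoke Lemma~\ref{lem:Q-pmf}: a non-$\varnothing$ run of length exceeding $poly\left( L \right)$ forces some geometric gap $Q_i + 1$ to exceed $poly\left( L \right)$, which is exponentially unlikely; the direct Markov computation above is preferable since it avoids the bookkeeping needed to locate the run containing the arbitrary index $n$.
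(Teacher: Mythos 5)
Your proof is correct and takes essentially the same route as the paper's: a chain-rule decomposition of the joint survival event combined with the uniform per-step survival probability $\Pr\left( Y_{n+l} \neq \varnothing \, | \, Y_{n+l-1} \neq \varnothing \right) = \frac{1}{2}$, yielding an exponentially small bound $\left( \frac{1}{2} \right)^{poly\left( L \right)}$ that is then beaten by any $\epsilon > 0$ for $L$ large per A4. The only inessential difference is that the paper imports the Markov property, the survival probability $1/2$, and the exact marginal $\Pr\left( Y_n \neq \varnothing \right) = \frac{2}{3} - \frac{2}{3}\left( -\frac{1}{2} \right)^n$ from \cite{Alt08}, whereas you derive the first two directly from Table~\ref{tab:mapping} and A1 (with a careful justification that the conditional survival probability is $1/2$ uniformly over the entire conditioning history) and simply bound the marginal by $1$.
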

\begin{proof}
First, under the given assumptions, we note the following fundamental results from \cite{Alt08}:
\begin{itemize}
\item For any $n \in \bbZ^+$,
\begin{equation}
\Pr \left( Y_n = \varnothing \right) = \frac{1}{3} + \frac{2}{3} \left( - \frac{1}{2} \right)^n .
\label{eq:temp1}
\end{equation}
\item $\left\{ Y_n \right\}$ form a Markovian process with memory-$1$:
\begin{equation}
\Pr \left( Y_n | \bY_1^{n-1}  = \by_1^{n-1} \right) = \Pr \left( Y_n | Y_{n-1} = y_{n-1} \right) ,
\label{eq:temp2}
\end{equation}
for any $n \in \bbZ^+$.
\item For any $n \in \bbZ^+$,
\begin{equation}
\Pr \left( Y_n \neq \varnothing | Y_{n-1} \neq \varnothing \right) = \frac{1}{2}.
\label{eq:temp3}
\end{equation}
\end{itemize}
Hence, for any $\epsilon >0$ we have
\begin{eqnarray}
\Pr \left[ \wedge_{l=0}^{poly \left( L \right)} Y_{n+l} \neq \varnothing \right] & = &
\Pr \left( Y_n \neq \varnothing \right) \cdot \prod_{l=1}^{poly \left( L \right)}
\Pr \left( Y_{n+l} \neq \varnothing | \wedge_{k=0}^{l-1} Y_{n+k} \neq \varnothing \right) , \label{eq:temp4} \\
& = & \Pr \left( Y_n \neq \varnothing \right) \cdot \prod_{l=1}^{poly \left( L \right)}
\Pr \left( Y_{n+l} \neq \varnothing | Y_{n+l-1} \neq \varnothing \right) , \label{eq:temp5} \\
& = & \left[ \frac{2}{3} - \frac{2}{3} \left( - \frac{1}{2} \right)^n \right] \cdot \left( \frac{1}{2} \right)^{poly \left( L \right) - 1} ,
\label{eq:temp6} \\
& \leq & \epsilon \label{eq:temp7}
\end{eqnarray}
where (\ref{eq:temp4}) follows from Bayes rule, (\ref{eq:temp5}) follows from (\ref{eq:temp2}),
(\ref{eq:temp6}) follows from (\ref{eq:temp1}) and (\ref{eq:temp3}), (\ref{eq:temp7}) follows from the fact that
the first term in (\ref{eq:temp6}) is constant in $L$ and the second term is exponentially decaying in $L$ whence
$\epsilon$ can be made arbitrarily small for sufficiently large $L$.
\end{proof}
Now, since $Y_n \in \left\{ 0 , 1, \varnothing \right\}$ (i.e., there are constant possibilities for $Y_n$), w.l.o.g. 
we assume that $Y_n$ is known. Since we are also given $\bX_{n+1}^{n+L}$ for some $n \in \bbZ^+$,
this also means we know $\bY_n^{n+L}$ (via successively applying $\mathcal{M} \left( Y_{n+l-1} , X_{n+l} \right)$ for
$l=1,2,\ldots , L$). Next, consider the following situations:
\begin{enumerate}
\item \underline{$Y_n = Y_{n+L} = \varnothing$}: \\
In this case, w.l.o.g. we choose $h_{i-1} = n$ for some $i$. Next,
let $K$ denote the number of $\varnothing$'s within the sequence
$\bY_n^{n+L}$ (which is necessarily $\geq 2$ per assumption) and
assign $\theta = K-1$. Next, let $h_{i+j-2}$ denote the index of the
$j$-th $\varnothing$ within the sequence $\bY_n^{n+L}$, where $1
\leq j \leq K = \theta+1$ (implying $h_{i+K-2} = h_{i+\theta-1} =
n+L$). Accordingly, assign $q_j = h_j - h_{j-1} - 2$ for all $j \in
\left\{ i , i+1, \ldots , i+\theta - 1 \right\}$. Note that, all
these $\left\{ h_j \right\}$ (equivalently $\left\{ q_j \right\}$)
are known since $\bY_n^{n+L}$ is known. Consequently, this means we
have identified $\bQ_i^{i+\theta-1} = \bq_i^{i+\theta-1}$ such that
\[
\sum_{j=i}^{\theta+i-1} \left( q_j + 2 \right) = \sum_{j=i}^{\theta+i-1} \left( h_j - h_{j-1} \right) = h_{\theta+i-1} - h_i  = L ,
\]
satisfying the constraint (\ref{eq:pbm-equivalence-constraint}). Further, note that the operations performed within this
procedure constitute an algorithm, which is in deterministic polynomial time (implying it is also in probabilistic polynomial time).
\item \underline{$Y_n = \varnothing$ and $Y_{n+L} \neq \varnothing$}: \\
In this case, since $Y_{n+L} \neq \varnothing$, we aim to identify some $Y_{n+L+L'} = \varnothing$ for $L'>0$
with high probability in polynomial time.
To achieve this task, we consider the sequence $\left\{ Y_{n+L+k} \right\}$ for $k>0$.
Now, note that as we increment $k$, after $poly \left( L \right)$ steps we necessarily need to come across
a $\varnothing$ with high probability (the probability of {\em not} coming across a $\varnothing$ is exponentially
small in $L$ per Lemma~\ref{lem:cirkin}). Thus, we have $Y_n = Y_{n+L''} = \varnothing$ where $L'' = L+L' > L$.
Next, applying algorithmic steps analogous to the ones in Situation 1 (i.e., beginning from  $Y_n = Y_{n+L''} = \varnothing$),
we identify $\bQ_i^{i+\theta-1} = \bq_i^{i+\theta-1}$  such that
\[
\sum_{j=i}^{\theta+i-1} \left( q_j + 2 \right) = \sum_{j=i}^{\theta+i-1} \left( h_j - h_{j-1} \right) = h_{\theta+i-1} - h_i  = L''>L ,
\]
satisfying the constraint (\ref{eq:pbm-equivalence-constraint}).
Further, note that the operations performed within this
procedure constitute an algorithm, which is in probabilistic polynomial time.
\item \underline{$Y_n \neq \varnothing$ and $Y_{n+L} = \varnothing$}: \\
Our overall goal is to identify (via using an algorithm, which is in probabilistic polynomial time) $Y_{n+L} = Y_{n+L'''} =
\varnothing$ such that $L''' - L > L$. In that case, we would be able to apply algorithmic steps analogous to the ones
in Situation 1 (i.e., beginning from $Y_{n+L} = Y_{n+L'''} = \varnothing$) and identify $\bQ_i^{i+\theta-1} =
\bq_i^{i+\theta-1}$  such that
\[
\sum_{j=i}^{\theta+i-1} \left( q_j + 2 \right) = \sum_{j=i}^{\theta+i-1} \left( h_j - h_{j-1} \right) = h_{\theta+i-1} - h_i  = L'''-L>L ,
\]
satisfying the constraint (\ref{eq:pbm-equivalence-constraint}).
Next, we show that, beginning from $Y_{n+L}$, we are able to find
some $Y_{n+L'''} = \varnothing$ such that $L''' > 2L$ via a
probabilistic polynomial time algorithm. To see this, first consider
the sequence $\left\{ Y_{n+L+k} \right\}$ for $k>0$ (as we did in
Situation 2). Following Lemma~\ref{lem:cirkin} and using similar
arguments to Situation 2, we see that as we increment $k$ by $poly
\left( L \right)$, we necessarily come across a $\varnothing$ with
high probability.  Next, we apply this step $L/2$ times; at each
step, we increment $k$ by $poly \left( L \right)$ and at each step,
we see a $\varnothing$ with probability $1 - \epsilon$ where
$\epsilon$ is exponentially small in $L$ per Lemma~\ref{lem:cirkin}.
Thus, as a result of incrementing $k$ by a total of $\frac{L}{2}
\cdot poly \left( L \right)$ (which is again $poly \left( L
\right)$), we observe $L/2$ $\varnothing$'s with sufficiently high
probability, which makes this procedure an algorithm in
probabilistic polynomial time. On the other hand, observing $L/2$
$\varnothing$'s guarantee us to identify some $L'''$ such that $L'''
> 2 L$ since the gap between two $\varnothing$'s is at least $2$ due
to the definition of the ABSG algorithm. As a result, we see that we
can identify $Y_{n+L'''} = \varnothing$ such that $L''' - L > L$ via
an algorithm which is in probabilistic polynomial time, which was
our initial goal.
\item \underline{$Y_n \neq \varnothing$ and $Y_{n+L}  \neq  \varnothing$}: \\
This is straightforward via applying an approach analogous to the Situation 3 above. Again, we begin from $Y_{n+L}$,
consider the sequence $Y_{n+L+k}$ for $k>0$, increment $k$ in blocks of length $poly \left( L \right)$; the only difference
is that this time we use $\frac{L}{2} + 1$ blocks (each of which is $poly \left( L \right)$) instead of $\frac{L}{2}$. As a result,
we are guaranteed to identify $Y_{n+L''''} = Y_{n+L'''''} = \varnothing$ such that $L''''' - L'''' > L$ via an algorithm which is
in probabilistic polynomial time; the rest is obvious.
\end{enumerate}
Thus, the proof of the (probabilistic polynomial time) reduction of the second problem to the third one is completed.
Hence the proof of Theorem~\ref{theorem:attack-equivalence}.
\qed

\section{Proof of Theorem~\ref{thrm:achievability-exhaustive}}
\label{app-3}
\setcounter{equation}{0}
\renewcommand{\theequation}{III-\arabic{equation}}

For the sake of clarity, throughout this section we use the notation
$G_k \left( i_k, \theta_k, \left( \bq_{i_k}^{\theta_k+i_k-1} \right)_k \right)$
(instead of $G_k \left( i_k, \theta_k,  \bq_{i_k}^{\theta_k+i_k-1}  \right)$)
to denote a particular guess $G_k$.

Choosing $n \eqdef L / 3$, first we define the typical set
$A_{\epsilon}^{\left( n \right)}$ with respect to $p \left( q \right)$ (given by (\ref{eq:Q-pmf})):
\begin{equation}
A_\epsilon^{\left( n\right)} \eqdef \left\{ \bq_1^n \, : \, \left| - \frac{1}{n}
\log p \left( \bq_1^n \right) - H \left( Q \right) \right| \leq \epsilon \right\} ,
\label{eq:typicalsetdefinition}
\end{equation}
where (using logarithm with base-$2$)
\[
H \left( Q \right) = - \sum_{q=0}^\infty p \left( q \right) \log p \left( q \right)
= 2.
\]
At this point, we also recall two fundamental results regarding typical sets \cite{cov:91}:
\begin{eqnarray}
\left( 1 - \epsilon \right) 2^{n \left( H \left( Q \right) - \epsilon \right)} \, \leq \,
\left| A_\epsilon^{\left(n\right)} \right| \, \leq \,
2^{n \left( H \left( Q \right) + \epsilon \right)}
& & \label{eq:typicalset-cardinality}\\
\Pr \left( \bq_1^n \in  A_\epsilon^{\left(n\right)}  \right) > 1 - \epsilon
& & \label{eq:typicalset-probability}
\end{eqnarray}
for any $\epsilon >0$, for sufficiently large $n$.

Next, we propose the following construction for the attack $\fA_{ach,opt}^E$: \\
{\bf 1.} Index all $\bq_1^n \in A_\epsilon^{\left( n \right)}$ and accordingly
let $\left( \bq_1^n \right)_k$ denote the $k$-th element where
$k \in \left\{ 1 , 2 , \ldots , \left| A_\epsilon^{\left( n \right)} \right| \right\}$.
Let $q_{i,k}$ denote the $i$-th element of $\left( \bq_1^n \right)_k$ for
$i \in \left\{ 1 , 2, \ldots , n \right\}$. \\
{\bf 2.} At each $k$-th step of the QuBaR attack, choose
$G_k = \left( i_k = 1 , \theta_k = n = \frac{L}{3} , \left( \bq_1^n \right)_k \right)$;
$k \in \left\{ 1 , 2 , \ldots , \left| A_\epsilon^{\left( n \right)} \right| \right\}$.

Note that, this attack qualifies as a ``QuBaR attack against ABSG'' only if
all of the aforementioned guesses satisfy the constraint
(\ref{eq:pbm-equivalence-constraint}). To see that this is satisfied for arbitrarily
small $\epsilon$, we observe (noting that $\beta_k = \sum_{i=1}^n q_{i,k}$)
\begin{equation}
\left| -\frac{1}{n} \log p \left( \left( \bq_1^n \right)_k \right) - H \left( Q \right) \right|
= \left| \left( 1 + \frac{\beta_k}{\theta_k} \right)- 2 \right| \leq \epsilon
\label{eq:temp1-proof1}
\end{equation}
where the equality follows from (\ref{eq:Q-pmf}), the definition of $\beta_k$ and using $\theta_k = n$,
the inequality follows from (\ref{eq:typicalsetdefinition}). Furthermore,
using $\theta_k = n  = L / 3$, after straightforward algebra (\ref{eq:temp1-proof1}) can be shown to be
equivalent to
\begin{equation}
L \left( 1 - \frac{\epsilon}{3} \right) \leq 2 \theta_k + \beta_k \leq L \left( 1 + \frac{\epsilon}{3} \right) .
\nonumber
\end{equation}
Since we can choose $\epsilon$ arbitrarily small, the aforementioned attack qualifies
as a QuBaR attack against ABSG as $\epsilon \rightarrow 0$.

Next, (\ref{eq:typicalset-probability}) implies that for large $n$
(equivalently for large $L$) $\mbox{Pr}_{succ} \left(
\mathfrak{A}_{ach,opt}^E \right) = \Pr \left( \bq_1^n \in
A_\epsilon^{\left(n\right)} \right)$ can be made arbitrarily close
to $1$ since we can choose $\epsilon$ arbitrarily small. Thus,
$\Pr_{succ} \left( \fA_{ach,opt}^E \right) \rightarrow 1$ as $L
\rightarrow \infty$ and $\epsilon \rightarrow 0$. Furthermore, for
large $L$ the algorithmic complexity is at most $\left|
A_\epsilon^{\left( n \right)} \right|$ which can be made arbitrarily
close to $2^{2L/3}$ per (\ref{eq:typicalset-cardinality}) since
$n=L/3$, $H \left( Q \right) = 2$ and we can choose $\epsilon$
arbitrarily small. Thus, the algorithmic complexity is at most $2^{2
L/3}$ as $L \rightarrow \infty$, $\epsilon \rightarrow 0$. Recalling
that for sufficiently small $\epsilon$, all elements of
$A_{\epsilon}^{(n)}$ are equiprobable (since $\beta_k \in \bbN,
\theta_k \in \bbZ^+$) with probability $\left. 2^{-(\theta_k + \beta_k)}
\right|_{\theta_k = \beta_k = L/3}$, we immediately see that the
expected algorithmic complexity is $\frac{1}{2} \left( 2^{2 L / 3} +
1 \right)$. Note that in the proposed attack, $i_k = 1$ and
$\theta_k = n = L /3$ for all $k$ which implies that the
corresponding data complexity is $L/3$. \qed


\section{Proof of Theorem~\ref{thrm:converse-exhaustive}}
\label{app-4}
\setcounter{definition}{0}
\renewcommand{\thedefinition}{IV-\arabic{definition}}
\setcounter{equation}{0}
\renewcommand{\theequation}{IV-\arabic{equation}}
\setcounter{lemma}{0}
\renewcommand{\thelemma}{IV-\arabic{lemma}}

First of all, since $L$ is sufficiently large (per assumption A4), we assume w.l.o.g.
$L$ is divisible by $6$. Our fundamental goal is to characterize the
algorithmic complexity of the optimal attacks subject to a lower
bound on the success probability of the attack. Thus,
we aim to analytically identify
\begin{equation}
\fA_{opt}^E \eqdef \arg \min_{\fA^E \in \cS_p^E} \cC \left( \fA^E
\right) , \label{eq:app4-1}
\end{equation}
where
\begin{equation}
\cS_p^E \eqdef \left\{ \fA^E \; : \; \fA^E \in \cS^E \, \mbox{and}
\, \Pr\left( \vee_{k=1}^{\cC \left( \fA^E \right)} \left[ \cT \left(
G_k \right) =1 \right] \right) > \frac{1}{2}  \right\} \subseteq
\cS^E, \nonumber
\end{equation}
i.e., $\cS_p^E$ is a ``probabilistically-constrained'' subset of $\cS^E$ for which the success probability
is strictly bounded away from $1/2$.
In our terminology, we denote the quantity of $\Pr\left( \vee_{k=1}^{\cC \left( \fA \right)} \left[ \cT \left( G_k \right) =1 \right] \right)$
as the {\em success probability of algorithm $\fA$}.
Our problem is to characterize
\[
\cC_{min}^E \eqdef \cC \left( \fA_{opt}^E \right) ,
\]
in particular, we aim to achieve this goal via quantifying a lower bound on it.

Our proof approach can be summarized as follows: Since it is not a
straightforward task to solve the optimization problem
(\ref{eq:app4-1}), we proceed with a simpler problem. We define a
set $\tcS_p^E$, such that $\cS_p^E \subseteq \tcS_p^E \subseteq
\cS^E$, and accordingly proceed with minimizing $\cC \left( \fA^E
\right)$ over all $\fA^E \in \tcS_p^E$. 
The set $\tcS_p^E$ is defined in such a way that 
minimizing $\cC \left( \fA^E \right)$ over this set (i.e., over all $\fA^E \in \tcS_p^E$)
is tractable. 
At the last
step, we conclude the proof via deriving a lower bound on the
minimum algorithmic complexity over $\tcS_p^E$, which also forms a
lower bound on $\cC_{min}^E$ since $\cS_p^E \subseteq \tcS_p^E$.

We proceed with defining the set
\begin{equation}
\tcS_p^E \eqdef \left\{ \fA^E \; : \; \fA^E \in \cS^E \, \mbox{and}
\, \sum_{k=1}^{\cC \left( \fA^E \right)} \Pr \left( \cT \left( G_k
\right) =1 \right) > \frac{1}{2}  \right\} \nonumber
\end{equation}
In our terminology, we denote the quantity of $\sum_{k=1}^{\cC \left( \fA \right)} \Pr \left( \cT \left( G_k \right) =1 \right)$
as the {\em cumulative success probability of algorithm $\fA$}.
Note that, success probability is always upper-bounded by cumulative success probability
for any algorithm $\fA$; i.e., we have
\[
\Pr\left( \vee_{k=1}^{\cC \left( \fA \right)} \left[ \cT \left( G_k \right) =1 \right] \right)
\leq
\sum_{k=1}^{\cC \left( \fA \right)} \Pr \left( \cT \left( G_k \right) =1 \right)
\]
due to the union bound, which implies $\cS_p^E \subseteq \tcS_p^E \subseteq \cS^E$.
Next, we define the optimization problem (which is ``alternate'' to (\ref{eq:app4-1}))
\begin{equation}
\tfA_{opt}^E \eqdef \arg \min_{\fA^E \in \tcS_p^E} \cC \left( \fA^E
\right) , \label{eq:app4-alternate-pbm}
\end{equation}
and accordingly
\[
\tcC_{min}^E \eqdef \cC \left( \tfA_{opt}^E \right).
\]

In order to quantify the solution of (\ref{eq:app4-alternate-pbm}),
for the sake of convenience we define
\begin{equation}
\cG \left( \theta , \alpha \right) \eqdef \left\{ \bq_1^\theta \; :
\; \forall i, \, q_i \geq 0, \, \theta \in \bbZ^+, \, \alpha \in
\bbN, \, \sum_{i=1}^\theta q_i = \beta = L - 2 \theta + \alpha
\right\} \label{eq:app4-Gset-def}
\end{equation}
for any given $\theta \in \bbZ^+$ and $\alpha \in \bbN$. Observe
that  $\left\{ \cG \left( \theta , \alpha \right) \right\}$ are clearly disjoint
for different pairs of $\left\{ \left( \theta , \alpha \right) \right\}$.
Further, note that, by construction,
$\bq_1^\theta \in \cG \left( \theta , \alpha \right)$ for some $\theta \in
\bbZ^+$, $\alpha \in \bbN$ implies
(\ref{eq:pbm-equivalence-constraint}) since $2 \theta + \beta = L +
\alpha \geq L$; thus, any guess $G  = \left( 1 , \theta ,
\bq_1^\theta \right)$ where $\bq_1^\theta \in \cG \left( \theta ,
\alpha \right)$ for some $\theta \in \bbZ^+$, $\alpha \in \bbN$ is a
valid ABSG-guess. Furthermore, any valid guess $G$ necessarily
corresponds to a $\bq_1^\theta \in \cG \left( \theta , \alpha \right)$ for some unique pair
$\left( \theta , \alpha \right)$. Next, using \eqref{eq:pmf-I} observe that
\begin{equation}
p\left( \bq_1^{\theta}\right) = \left. 2^{-\left( \theta + \beta \right)} \right|_{\beta = L - 2  \theta + \alpha }
 = 2^{- \left( L - \theta + \alpha \right)},
\label{eq:app4-Gset-prob}
\end{equation}
for any $\bq_1^{\theta} \in \cG \left( \theta, \alpha \right)$;
i.e., given a pair $\left( \theta , \alpha \right)$, all elements of $\cG \left( \theta , \alpha \right)$
are equally likely with probability $2^{- \left( L - \theta + \alpha \right)}$.

Going back to \eqref{eq:app4-alternate-pbm}, since we are trying to
achieve a cumulative success probability strictly greater than $1/2$
using elements from {\em disjoint} sets $\left\{ \cG \left( \theta ,
\alpha \right) \right\}$, the optimal strategy is clearly to use the
{\em sorted} elements $\bq_1^{\theta} \in \cG \left( \theta , \alpha
\right)$ with respect to their success probabilities, specified in
\eqref{eq:app4-Gset-prob} \footnote{This problem is trivially
equivalent to the problem of obtaining a pre-specified amount of
cake with minimum number of slices, where the slice sizes are fixed,
but not necessarily uniform.}. Thus, algorithmically the optimal
solution consists of trying the guess with largest marginal success
probability first, and then the most probable guess in the remaining
ones, and so on.

Next, we aim to characterize the aforementioned sorting process and
analyze the minimum number of elements needed to achieve a
cumulative success probability strictly greater than $1/2$. Since
all elements of $\cG \left( \theta, \alpha \right)$ are equally
likely (cf. \eqref{eq:app4-Gset-prob}), the problem of sorting
individual sequences reduces to the problem of sorting the sets
$\left\{ \cG \left( \theta, \alpha \right) \right\}$ in
non-increasing order with respect to \eqref{eq:app4-Gset-prob}. The
total number of elements in these sorted sets of $\left\{ \cG \left(
\theta , \alpha \right) \right\}$ such that the total probability
exceeds $1/2$ amounts to the sought result $\tcC_{min}^E$. As a
result, we should solve the following sorting problem:

\textbf{Sorting Problem I:} Sort over $\left( \theta, \alpha
\right)$, with respect to the cost function $L - \theta + \alpha$,
in non-decreasing order, such that
\begin{equation}
\left( \theta, \alpha \right) \in \cS_{E,F} \eqdef \left\{ \left( \theta, \alpha
\right)\; : \; \theta \in \bbZ^+, \; \alpha \in \bbN, \; \beta = L-2 \theta
+ \alpha \geq 0\right\}.
\label{eq:app4-sorting-pbm1}
\end{equation}
Since this sorting needs to be done over $\left( \theta, \alpha
\right)$, our next task is to characterize the feasible set $\cS_{E,F}$ over which
the sorting will be carried out.

First of all, notice that from the definition of $\cG \left( \theta,
\alpha\right)$ (cf. \eqref{eq:app4-Gset-def}), we have
\begin{equation}
2 \theta - \alpha \leq L,
\label{eq:app4-temp1}
\end{equation}
since $\beta = L -2 \theta + \alpha \geq 0$.
Next, we define
\begin{equation}
B \eqdef L - \theta + \alpha,
\label{eq:app4-B}
\end{equation}
as our cost function in the aforementioned Sorting Problem I.
Note that, for any $\bq \in \cG \left( \theta, \alpha \right)$, $\Pr \left( \bQ = \bq \right) =
2^{-\left( \theta + \beta \right)} = 2^{-B}$; i.e., for any guess $G \left( i , \theta , \bq \right)$,
its success probability is equal to $2^{-B}$ where $B$ is computed via \eqref{eq:app4-B}
using the corresponding $\theta$ and $\alpha$. This means that, for any given guess $G \left( \cdot \right)$,
its marginal success probability, $ \Pr \left( \cT \left( G \right) = 1 \right)$ is directly determined
by the corresponding value of $B$.

Next, our goal is to find an alternate re-parameterized expression for (\ref{eq:app4-sorting-pbm1})
in terms of $B$ and $L$ since $B$ is our cost function in Sorting Problem I.
Now, using \eqref{eq:app4-B} in \eqref{eq:app4-temp1}  and noting that $\alpha \in \bbN$ yields
\begin{equation}
\alpha \in \left\{ 0, 1, \ldots, 2B - L\right\} .
\label{eq:app4-2}
\end{equation}
which also implies that $B \geq L/2$ since $\alpha \geq 0$.
As a side result, this accordingly implies the following upper
bound on the marginal success probability of any valid guess:
\begin{equation}
\Pr \left[ \cT \left( G \left( i , \theta , \bq \right) \right) = 1 \right]
= \left. 2^{-B} \right|_{B = \theta + \beta} \leq 2^{- L /2} \quad
\mbox{for any $G \left( i , \theta, \bq \right) \in \cG \left( \theta , \alpha \right)$
for some $\alpha \in \bbN$}.
\label{eq:upperboundonsuccess}
\end{equation}
The result \eqref{eq:upperboundonsuccess} will be useful in proving Theorem~\ref{thrm:converse-general}
of Section~\ref{sec:general-case}.

Next, per (\ref{eq:app4-B}),
each value of $\alpha$ uniquely determines $\theta$ in terms of $B$ via
\begin{equation}
\theta = L - B + \alpha.
\label{eq:app4-thetaBalpha}
\end{equation}
Using \eqref{eq:app4-thetaBalpha} in (\ref{eq:app4-2}), we have
\begin{equation}
\theta \in \left\{ L-B, L-B+1, \ldots, B \right\} ,
\label{eq:app4-3}
\end{equation}
which also implies that $B \leq L -1$ since $\theta \geq 1$. Combining these observations,
we find out the following equivalent expression to (\ref{eq:app4-sorting-pbm1}):
\begin{equation}
\left( \theta, \alpha \right) \in \cS_{E,F} =
\bigcup_{B = \frac{L}{2}}^{L-1}
\left\{ \left(  L - B  , 0 \right) , \left(  L - B + 1 , 1\right) ,
\left( L - B + 2 , 2\right) ,
\ldots ,
\left( B-1 , 2B-L-1\right) ,
\left( B , 2 B - L  \right) \right\} ,
\label{eq:app4-sorting-pbm2}
\end{equation}
where we effectively did a re-parameterization using $B$.
Note that, this re-parameterization allows us to see that,
given a fixed $B$, all $\left\{ \cG \left( \theta , \alpha \right) \right\}$
such that
\[
\left( \theta , \alpha \right) \in
\left\{ \left(  L - B  , 0 \right) , \left(  L - B + 1 , 1\right) ,
\left( L - B + 2 , 2\right) ,
\ldots ,
\left( B-1 , 2B-L-1\right) ,
\left( B , 2 B - L  \right) \right\} ,
\]
are equivalent to each other in terms of their success probabilities, $2^{-B}$.
Using this observation and \eqref{eq:app4-sorting-pbm2}, we conclude that Sorting Problem I is
equivalent to to the following one:

\textbf{Sorting Problem II:} Sort over $\left( B, \alpha \right)$
with respect to $B$ in non-decreasing order, such that
\begin{equation}
\left( B, \alpha\right) \in \left\{ (B,\alpha) \; : \;
\alpha \in \{0, 1, \ldots, 2B - L\}, \; B \in \{ L/2,
\ldots, L-1\}\right\}.
\label{eq:app4-sorting-pbm3}
\end{equation}

Note that, the corresponding values of $\theta$ in \eqref{eq:app4-sorting-pbm3} are
given by \eqref{eq:app4-thetaBalpha}.

Following is one of the solutions to Sorting Problem II:
\begin{equation}
\left\{ \left( B, \alpha \right)\right\} = \left\{ \left( L/2 , 0 \right), \,
\left( L/2 + 1 , 0 \right), \, \left( L / 2 + 1 , 1 \right) , \, \left( L / 2 + 1 , 2 \right) , \,
\left( L / 2 + 2 , 0 \right) \, \ldots, \left( L - 1 , L - 2 \right) \right\}.
\label{eq:app4-sorting-ex}
\end{equation}
Note that all solutions to Sorting Problem II are equivalent to each other
in terms of the resulting complexity. In particular, for a given $B$, we follow the strategy of
varying $\alpha$ in increasing order, beginning from $0$, ending in
$2 B - L$ as illustrated in \eqref{eq:app4-sorting-ex}.

Next, we concentrate on the range of $L/2 \leq B < 2 L / 3$ and analyze
the corresponding cumulative success probability (denoted by $P_1$) of the aforementioned strategy (cf.
\eqref{eq:app4-sorting-ex}), i.e.,
\begin{equation}
P_1 \eqdef \sum_{B=\frac{L}{2}}^{\frac{2L}{3}-1} \sum_{\alpha=0}^{2B-L}
\Pr\left( \left. \cG\left( \theta, \alpha \right) \right|_{\theta = L-B+\alpha} \right).
\label{eq:app4-succ-prob}
\end{equation}
Next, we derive an upper bound on $P_1$ which will be used in the subsequent computations.

\begin{lemma}
The cumulative success probability in the range of $L/2 \leq B < 2 L /3$ (i.e., $P_1$) is upper-bounded by
\begin{equation}
P_1 \leq \sum_{\theta = \frac{L}{3}+1}^{\frac{2L}{3}-1}
\sum_{\alpha=0}^{\theta - \frac{L}{3}-1} \Pr\left( \cG \left(
\theta, \alpha\right) \right).
\label{eq:app4-P1upperbound}
\end{equation}
\label{lem:app4-P1upperbound}
\end{lemma}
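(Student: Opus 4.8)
The plan is to prove the inequality by a change of summation variables followed by a set-containment argument, exploiting the fact that every summand $\Pr\left( \cG\left( \theta, \alpha\right)\right)$ is nonnegative. The left-hand side $P_1$ is a sum of such nonnegative terms indexed by pairs $\left( B, \alpha \right)$, while the right-hand side is a sum of the same type of terms indexed by pairs $\left( \theta, \alpha \right)$. Hence it suffices to show that the collection of $\left( \theta, \alpha \right)$ pairs implicitly appearing in $P_1$ (through $\theta = L - B + \alpha$) is a \emph{subset} of the collection appearing on the right-hand side.

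First I would re-index $P_1$. Using the relation $\theta = L - B + \alpha$ (equivalently $B = L - \theta + \alpha$, cf.~\eqref{eq:app4-thetaBalpha}), for each fixed $\alpha$ the map $B \mapsto \theta$ is a bijection, and the full map $\left( B, \alpha \right) \mapsto \left( L - B + \alpha, \alpha \right)$ is injective (the second coordinate is preserved, so equal images force $\alpha = \alpha'$ and then $B = B'$). Thus $P_1$ may be rewritten verbatim as a sum of $\Pr\left( \cG\left( \theta, \alpha\right)\right)$ over an index set $S_1$ of distinct $\left( \theta, \alpha \right)$ pairs, with no term double-counted. I would then translate the defining constraints of the $P_1$ summation, namely $L/2 \leq B \leq 2L/3 - 1$ and $0 \leq \alpha \leq 2B - L$, into constraints on $\left( \theta, \alpha \right)$: the bound $B \leq 2L/3 - 1$ becomes $\theta \geq L/3 + 1 + \alpha$ (hence $\alpha \leq \theta - L/3 - 1$), while $\alpha \geq 0$ is retained directly.

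The crux is the remaining bound $\theta \leq 2L/3 - 1$ needed for membership in the target set. This follows cleanly from the constraint $\alpha \leq 2B - L$: indeed $\theta = L - B + \alpha \leq L - B + \left( 2B - L \right) = B \leq 2L/3 - 1$. Combined with $\theta \geq L/3 + 1 + \alpha \geq L/3 + 1$ and $0 \leq \alpha \leq \theta - L/3 - 1$, this shows every $\left( \theta, \alpha \right) \in S_1$ satisfies $L/3 + 1 \leq \theta \leq 2L/3 - 1$ and $0 \leq \alpha \leq \theta - L/3 - 1$, i.e.\ $S_1$ is contained in the index set of the right-hand side. Since all summands are nonnegative and the re-indexing is injective, I conclude $P_1 \leq \sum_{\theta = L/3 + 1}^{2L/3 - 1} \sum_{\alpha = 0}^{\theta - L/3 - 1} \Pr\left( \cG\left( \theta, \alpha\right)\right)$.

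The main obstacle I anticipate is purely bookkeeping: correctly converting the two-sided constraints on $\left( B, \alpha \right)$ into the triangular region in $\left( \theta, \alpha \right)$ and verifying the one nontrivial containment $\theta \leq 2L/3 - 1$. I would also note that the containment is in fact \emph{strict} --- for instance $\left( \theta, \alpha \right) = \left( 2L/3 - 1, 0 \right)$ lies in the right-hand index set but corresponds to $B = L/3 + 1 < L/2$, hence is absent from $S_1$ --- which is precisely why the statement is an inequality rather than an equality.
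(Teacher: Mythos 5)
Your proof is correct and takes essentially the same route as the paper's: re-parameterize the $P_1$ sum via $\theta = L - B + \alpha$ and show that the resulting index set is contained in the triangular region indexing the right-hand side (the containment $\Lambda \subseteq \tilde{\Lambda}$ in the paper's Appendix IV), with nonnegativity of the summands finishing the argument. Your derivation of $\theta \leq 2L/3 - 1$ via $\theta \leq L - B + \left( 2B - L \right) = B$ is in fact slightly more direct than the paper's chain through $\alpha \leq L/3 - 2$ and $\theta \leq L/2 + \alpha/2$, and your explicit injectivity and strictness observations are sound refinements of the same argument.
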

\begin{proof}
From \eqref{eq:app4-succ-prob}, we see that $P_1$ is defined in the $\left( B , \alpha \right)$ space (where $B = L - \theta + \alpha$), over the
set
\begin{equation}
\Lambda \eqdef \left\{ \left( B , \alpha \right) \; : \: \frac{L}{2} \leq B \leq \frac{2 L}{3} - 1 , \,
0 \leq \alpha \leq 2 B - L \right\} ,
\label{eq:app4-lemP1upperboundLambda}
\end{equation}
i.e., $P_1 = \sum_{ \left( B , \alpha \right) \in \Lambda} \left.
\Pr\left( \cG \left( \theta, \alpha\right) \right) \right|_{\theta =
L-B+\alpha}$. Next, we proceed with defining a set
$\tilde{\Lambda}$. The purpose of using this set is to transform the
summation indexes to corresponding $\left( \theta, \alpha \right)$
for each $\left( B, \alpha \right) \in \tilde{\Lambda}$. Now we show
that $\tilde{\Lambda}$ is a superset of $\Lambda$.This is done in
four steps.
\begin{enumerate}
\item First, recall that
\begin{equation}
\alpha \geq 0 .
\label{eq:app4-lemP1upperbound1}
\end{equation}
\item Second, observe that
\begin{eqnarray}
\left[ B \leq \frac{2 L}{3} - 1 \right] & \Longrightarrow &
\left[ L - \theta + \alpha \leq \frac{2 L}{3} - 1 \right] \nonumber \\
& \Longrightarrow & \left[ \alpha \leq \theta - \frac{L}{3} - 1 \right]
\label{eq:app4-lemP1upperbound2}
\end{eqnarray}
\item Third, note that \eqref{eq:app4-lemP1upperbound2} is equivalent to
\begin{equation}
\theta \geq \frac{L}{3} + \alpha + 1
\label{eq:app4-lemP1upperbound3}
\end{equation}
Using \eqref{eq:app4-lemP1upperbound1} in \eqref{eq:app4-lemP1upperbound3} implies
\begin{equation}
\theta \geq \frac{L}{3} + 1
\label{eq:app4-lemP1upperbound4}
\end{equation}
\item Fourth, using $B \leq \frac{2 L}{3} - 1$ in $\alpha \leq 2 B  - L$
(cf. \eqref{eq:app4-lemP1upperboundLambda}) implies
\begin{equation}
\alpha \leq \frac{L}{3} - 2 .
\label{eq:app4-lemP1upperbound5}
\end{equation}
Also, using \eqref{eq:app4-B} we have
\begin{equation}
\left[ \alpha \leq 2 B - L = L - 2 \theta + 2 \alpha \right]
\quad \Longrightarrow \quad
\left[ \theta \leq \frac{L}{2} + \frac{\alpha}{2} \right] .
\label{eq:app4-lemP1upperbound6}
\end{equation}
Using \eqref{eq:app4-lemP1upperbound5} in \eqref{eq:app4-lemP1upperbound6} yields
\begin{equation}
\theta \leq \frac{2 L}{3} - 1.
\label{eq:app4-lemP1upperbound7}
\end{equation}
\end{enumerate}
Now, defining
\[
\tilde{\Lambda} \eqdef \left\{ \left( B, \alpha \right) \; : \;
\frac{L}{3} + 1 \leq \theta \leq \frac{2 L}{3} - 1 , \, 0 \leq
\alpha \leq \theta - \frac{L}{3} - 1, \; \mbox{where} \; B = L - \theta + \alpha  \right\} ,
\]
and using \eqref{eq:app4-lemP1upperbound1}, \eqref{eq:app4-lemP1upperbound2},
\eqref{eq:app4-lemP1upperbound4}, \eqref{eq:app4-lemP1upperbound7}, we conclude that
$\Lambda \subseteq \tilde{\Lambda}$, which implies \eqref{eq:app4-P1upperbound}.
\end{proof}

Next, we proceed with providing an upper bound on the right hand side of \eqref{eq:app4-P1upperbound},
which will be shown to be $\cO \left(L^{-1} \right)$, i.e., diminishing in $L$, the length of the
generator polynomial of the LFSR\footnote{This result, in turn, implies that
an optimal QuBaR attack which uses the solution to the Sorting Problem II for
$\theta > L /3$ has a negligible cumulative success probability, i.e., negligible
success probability.}.
In order to achieve this task, we heavily use the concept of ``typical set''
(cf. \eqref{eq:typicalsetdefinition}).
Note that, using \eqref{eq:app4-Gset-prob} and $H(Q)=2$,
\eqref{eq:typicalsetdefinition} can be shown to be equivalent to
\begin{equation}
A_{\epsilon}^{(\theta)} = \left\{ \bq_1^{\theta} \; : \; 1 -
\epsilon \leq \frac{\beta}{\theta} \leq 1 + \epsilon \right\}.
\label{eq:app4-typ-set-1}
\end{equation}
In the following lemma, we show that all guesses $\left\{ \cG \left( \theta , \alpha \right) \right\}$
included in the summation of the right hand side of \eqref{eq:app4-P1upperbound}
are necessarily ``atypical'' (i.e., belong to the complement of the corresponding typical set).

\begin{lemma}
For any $\theta \in \bbZ^+$, such that $\theta > L/3$, and for all
$\alpha \in \bbN$, such that $0 \leq \alpha \leq \theta -
\frac{L}{3}$, we have $\cG\left( \theta, \alpha \right)
\subseteq [A_{\epsilon}^{(\theta)}]^{(c)}$ for all $\epsilon \in \left( 0,
\frac{2}{\theta}\right)$, where $[A_{\epsilon}^{(\theta)}]^{(c)}$
denotes the complement of the typical set $A_{\epsilon}^{(\theta)}$.
\label{lem:app4-lem1}
\end{lemma}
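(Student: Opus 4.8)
The plan is to exploit the fact that every sequence in $\cG\left( \theta, \alpha \right)$ shares the same value of $\beta \eqdef \sum_{i=1}^{\theta} q_i$, namely $\beta = L - 2\theta + \alpha$ by the definition \eqref{eq:app4-Gset-def}. Since the equivalent characterization of the typical set \eqref{eq:app4-typ-set-1} depends on a candidate sequence $\bq_1^{\theta}$ only through the ratio $\beta/\theta$, this ratio is constant across the whole set $\cG\left( \theta, \alpha \right)$. Consequently, I would reduce the set-containment claim to a single numerical inequality: it suffices to show that this common ratio violates the typicality condition $1 - \epsilon \leq \beta/\theta \leq 1 + \epsilon$, which then places \emph{every} element of $\cG\left( \theta , \alpha \right)$ into $[A_{\epsilon}^{(\theta)}]^{(c)}$ simultaneously. (If $\beta = L - 2\theta + \alpha < 0$ the set is empty and there is nothing to prove, so one may assume $\beta \geq 0$.)

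I expect the violation to occur in the \emph{lower} tail, i.e.\ via $\beta/\theta < 1 - \epsilon$. First I would substitute the hypothesis $\alpha \leq \theta - \frac{L}{3}$ into $\beta = L - 2\theta + \alpha$ to obtain the upper bound $\beta \leq \frac{2L}{3} - \theta$. Next I would invoke $\theta > L/3$ together with integrality: recalling that $L$ is taken divisible by $6$ (so $L/3 \in \bbZ^+$), the strict inequality $\theta > L/3$ upgrades to $\theta \geq \frac{L}{3} + 1$. A one-line computation then gives $\frac{2L}{3} - \theta \leq \theta - 2$, hence $\beta \leq \theta - 2$, equivalently $\frac{\beta}{\theta} \leq 1 - \frac{2}{\theta}$.

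Finally, for any $\epsilon \in \left( 0, \frac{2}{\theta} \right)$ one has $1 - \epsilon > 1 - \frac{2}{\theta} \geq \frac{\beta}{\theta}$, so the typicality lower bound is strictly violated and $\cG\left( \theta, \alpha \right) \subseteq [A_{\epsilon}^{(\theta)}]^{(c)}$ follows. The computation itself is routine substitution; the only genuinely load-bearing point is the integrality step, since it is precisely what produces the margin $2/\theta$ (instead of a vacuous $0$) and thereby makes the conclusion hold exactly over the admissible range $\epsilon < 2/\theta$ stated in the lemma.
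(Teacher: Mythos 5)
Your proposal is correct and follows essentially the same route as the paper's proof: both use $\alpha \leq \theta - \frac{L}{3}$ and the integrality upgrade $\theta \geq \frac{L}{3}+1$ to show the common ratio $\beta/\theta$ falls strictly below $1-\epsilon$ for every $\epsilon \in \left(0, \frac{2}{\theta}\right)$, i.e., a lower-tail violation of \eqref{eq:app4-typ-set-1}. The only cosmetic difference is that the paper bounds $-\frac{1}{\theta}\log p\left(\bq_1^{\theta}\right) - H(Q)$ by $-\frac{6}{L+3}$ and then checks $\frac{2}{\theta} \leq \frac{6}{L+3}$, whereas you obtain the margin $\frac{\beta}{\theta} \leq 1 - \frac{2}{\theta}$ directly, which is the same computation in slightly more streamlined form.
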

\begin{proof}
First of all, note that (cf. \eqref{eq:app4-Gset-def}), we have
\begin{equation}
\left[ \bq_1^{\theta} \in \cG(\theta, \alpha)\right] \Rightarrow
\left[ \frac{\beta}{\theta} = \left( \frac{L}{\theta} - 2 \right) +
\frac{\alpha}{\theta}\right].
\label{eq:app4-lem1-1}
\end{equation}
Hence, for any $\bq_1^{\theta} \in \cG(\theta,\alpha)$ such that
$\theta > L/3$ and $0 \leq \alpha \leq \theta - \frac{L}{3}$ , we
have
\begin{eqnarray}
-\frac{1}{\theta} \log p\left( \bq_1^{\theta}\right) - H(Q) & = &
\frac{\theta + \beta}{\theta} - 2 , \label{eq:app4-lem1-1.5} \\
& = & \frac{L-\theta+\alpha}{\theta} - 2 = \frac{L+\alpha}{\theta} - 3 , \label{eq:app4-lem1-2} \\
& \leq & \frac{2 L}{3 \theta} - 2 , \label{eq:app4-lem1-3} \\
 & \leq & \frac{2L}{L+3} - 2 = -\frac{6}{L+3} < 0 , \label{eq:app4-lem1-4}
\end{eqnarray}
where
\eqref{eq:app4-lem1-1.5} follows from the fact that $p \left( \bq_1^\theta \right)
 = 2^{- \left( \theta + \beta \right)}$ and $H \left( Q \right) = 2$,
\eqref{eq:app4-lem1-2} follows using \eqref{eq:app4-lem1-1} in \eqref{eq:app4-lem1-1.5},
\eqref{eq:app4-lem1-3} follows since $\alpha \leq \theta - L/3$,
\eqref{eq:app4-lem1-4} follows since $\theta \geq \frac{L}{3} + 1$.
Note that \eqref{eq:app4-lem1-4} implies
\begin{equation}
\left| -\frac{1}{\theta} \log p\left( \bq_1^{\theta} \right) - H(Q)
\right| \geq \frac{6}{L+3}.
\label{eq:app4-lem1-6}
\end{equation}
Now, since $\theta \geq \frac{L}{3} + 1$ (equivalently $\frac{2}{\theta} \leq \frac{6}{L+3}$),
we have $\epsilon < \frac{6}{L+3}$ for all $\epsilon \in \left( 0,
\frac{2}{\theta}\right)$. Using this in \eqref{eq:app4-lem1-6}, the claim follows.
\end{proof}

Next, we provide an upper bound on the right hand side of \eqref{eq:app4-P1upperbound}
using Lemma~\ref{lem:app4-lem1}. For all $\epsilon_\theta \in \left( 0,
\frac{2}{\theta}\right)$, we have
\begin{eqnarray}
\sum_{\theta = \frac{L}{3}+1}^{\frac{2L}{3}-1} \sum_{\alpha=0}^{\theta - \frac{L}{3}-1} \Pr\left( \cG \left( \theta, \alpha\right) \right)
& \leq & \sum_{\theta = \frac{L}{3}+1}^{\frac{2L}{3}-1} \Pr\left(\left[ A_{\epsilon_{\theta}}^{\theta} \right]^{(c)}\right), \label{eq:app4-7} \\
& \leq & \sum_{\theta = \frac{L}{3}+1}^{\frac{2L}{3}-1} \epsilon_{\theta}, \label{eq:app4-8}\\
& \leq & \left( \frac{L}{3}-1 \right)\left( \max_{\frac{L}{3}+1 \leq
\theta \leq \frac{2L}{3}-1} \epsilon_{\theta}\right),
\label{eq:app4-9}
\end{eqnarray}
where \eqref{eq:app4-7} follows from Lemma~\ref{lem:app4-lem1} and the fact that,
for any given $\theta$, $\left\{ \cG \left( \theta , \alpha \right) \right\}$ are disjoint by construction,
\eqref{eq:app4-8} follows from \eqref{eq:typicalset-probability}.
Now, choosing $\epsilon_\theta = \frac{1}{\theta^2}$ for all $\theta$, and using \eqref{eq:app4-9}
in \eqref{eq:app4-P1upperbound}, we have
\begin{equation}
P_1 \leq \left( \frac{L}{3}-1 \right)\left( \max_{\frac{L}{3}+1 \leq
\theta \leq \frac{2L}{3}-1} \frac{1}{\theta^2} \right)
= \frac{L/3-1}{\left(L/3+1\right)^2} < \frac{3}{L}.
\label{eq:app4-succ-prob-2}
\end{equation}
Thus, for any $\delta_1>0$, there exists a sufficiently large $L$ (per assumption A4), where
\begin{equation}
P_1 < \delta_1.
\label{eq:app4-succ-prob-2.5}
\end{equation}

Note that, for the optimal strategy, which uses the ordering
mentioned in \eqref{eq:app4-sorting-ex}, \eqref{eq:app4-succ-prob-2}
and \eqref{eq:app4-succ-prob-2.5}
imply that the range of $\frac{L}{2} \leq B \leq \frac{2 L}{3} - 1$
is not sufficient to achieve every given
cumulative success probability strictly greater than $1/2$,
since $\delta_1$ can be made arbitrarily small.
Therefore, we necessarily need to include guesses with $B=2L/3$ in the optimal
structure to achieve a cumulative success probability strictly greater than $1/2$.

Next, we proceed with quantifying the contribution to the cumulative success probability
for the case of $B = 2 L /3$. In this case, for the optimal strategy, since $\theta = L - B + \alpha$
and $0 \leq \alpha \leq 2 B - L$ for a given value of $B$, the corresponding
$\left( \theta , \alpha \right)$ pairs are of the form $\left\{ \left( \frac{L}{3} + \alpha , \alpha \right)
\right\}_{0 \leq \alpha \leq L/3}$. Thus, for the case of $B = 2 L /3$, the total contribution
to the cumulative success probability is given by
\begin{equation}
\Pr\left( \cG(L/3,0)\right) + \sum_{\alpha=1}^{L/3}
\Pr\left(\cG(L/3+\alpha, \alpha)\right). \label{eq:app4-10}
\end{equation}
Note that, the right hand side of \eqref{eq:app4-10}
is ``atypical'' per Lemma~\ref{lem:app4-lem1}; accordingly, we will show that
the only significant contribution to the cumulative success probability is due to the
left hand side of \eqref{eq:app4-10} since it includes terms within the corresponding
typical set.

Next, we provide an upper bound on the right hand side of \eqref{eq:app4-10}.
Defining $P_2 \eqdef \sum_{\alpha=1}^{L/3} \Pr \left(
\cG(L/3+\alpha, \alpha) \right)$, for all $\epsilon_\theta \in \left( 0 , \frac{2}{\theta} \right)$,
we have
\begin{eqnarray}
P_2 & = & \sum_{\theta = \frac{L}{3}+1}^{2L/3} \Pr\left( \cG(\theta, \theta-L/3)\right),\label{eq:app4-10.5}\\
 & \leq & \sum_{\theta = \frac{L}{3}+1}^{2L/3} \Pr\left( \left[ A_{\epsilon_{\theta}}^{\theta}\right]^{c}\right), \label{eq:app4-11} \\
 & \leq & \left( \frac{L}{3}\right) \left( \max_{L/3+1 \leq \theta \leq 2L/3}\epsilon_{\theta}\right),\label{eq:app4-12}
\end{eqnarray}
where \eqref{eq:app4-10.5} follows from using $\theta = \left. \left( L - B + \alpha \right) \right|_{B =  2L /3}$,
\eqref{eq:app4-11} follows from  Lemma~\ref{lem:app4-lem1},
\eqref{eq:app4-12} follows using \eqref{eq:typicalset-probability}.
Choosing $\epsilon_\theta = \frac{1}{\theta^2}$ for all $\theta$ in \eqref{eq:app4-12}, we have
\begin{equation}
P_2 \leq \frac{L/3}{\left( L/3+1\right)^2} < \frac{3}{L}.
\label{eq:app4-succ-prob-3}
\end{equation}
Thus, for any $\delta_2>0$, there exists a sufficiently large $L$ (per assumption A4), where
\begin{equation}
P_2 < \delta_2.
\label{eq:app4-succ-prob-3.5}
\end{equation}
Since $\delta_1$ (resp. $\delta_2$) in \eqref{eq:app4-succ-prob-2.5} (resp.
\eqref{eq:app4-succ-prob-3.5})  can be made arbitrarily small,
we necessarily need to use guesses from the set $\cG \left( \frac{L}{3} , 0 \right)$
in order to achieve a cumulative success probability strictly greater than $1/2$.

Next, consider the case of $\left( \theta, \alpha \right) = \left(
\frac{L}{3}, 0 \right)$: Note that, for any $\bq_1^{L/3} \in
\cG\left(\frac{L}{3},0\right)$, we have
\begin{equation}
p\left( \bq_1^{L/3}\right) = 2^{-(2L/3)}. \label{eq:app4-12.5}
\end{equation}
Per \eqref{eq:app4-typ-set-1}, \eqref{eq:app4-12.5} implies that
$\cG \left( \frac{L}{3},0 \right) \subseteq A_{\epsilon}^{\left( L /
3 \right)}$ for any $\epsilon > 0$. Furthermore, after some straightforward algebraic manipulations,
it can be shown that, for $0 < \epsilon <
\frac{3}{L}$, we have $A_{\epsilon}^{\left( L / 3 \right)} \subseteq
\cG \left( \frac{L}{3},0 \right)$; therefore we have
\begin{equation}
\cG \left( \frac{L}{3},0 \right) = A_{\epsilon}^{\left( L / 3 \right)} \;  \mbox{for} \; 0
< \epsilon < \frac{3}{L}.
\label{eq:converse-fund}
\end{equation}
In fact, \eqref{eq:converse-fund} constitutes the fundamental crux
of the converse proof. This observation implies that, using
sufficiently many guesses from the set $\cG \left( \frac{L}{3} , 0
\right)$ is both necessary (since $\delta_1$ and $\delta_2$ may be
arbitrarily small) and sufficient (since for $0 < \epsilon<
\frac{3}{L}$, we have $\Pr\left( \cG\left( \frac{L}{3},0 \right)
\right) = \Pr\left( A_\epsilon^{(L/3)}\right) > 1 - \epsilon $) to
achieve a cumulative success probability strictly greater than $1/2$
for large $L$ (per Assumption A4).

Now, let
\begin{equation}
P_1 + P_2 + P_3 > 1/2,
\label{eq:app4-cumulativetotal}
\end{equation}
denote the cumulative
success probability of optimal attack in the set $\tilde{\cS}_p^E$,
where $P_3$ denotes the contribution to the cumulative success probability
by the guesses from $\cG\left( \frac{L}{3}, 0\right)$\footnote{Note
that, w.l.o.g. we assume that, at step $B = 2 L / 3$, the proposed optimal attack
uses guesses from the
set $\cG\left( \frac{L}{3}, 0 \right)$ {\em in the end} (i.e., after applying guesses
from the sets $\left\{ \cG \left( \frac{L}{3} + \alpha , \alpha \right) \right\}_{\alpha=1}^{L/3}$
of which contributions to the cumulative success probability is denoted by $P_2$).
Since our strategy is to ``lower-bound'' the number of
guesses from the set $\cG \left( \frac{L}{3} , 0 \right)$ and declare the resulting value as a
lower bound on the overall complexity, $\tilde{\cC}_{min}$, this approach maintains the
validity of our result.}.
Using \eqref{eq:app4-succ-prob-2} and
\eqref{eq:app4-succ-prob-3} in \eqref{eq:app4-cumulativetotal}, we have
\begin{equation}
P_3 > \frac{1}{2} - \frac{6}{L}. \label{eq:app4-13}
\end{equation}

Next, let $\cC'$ denote the number of sequences used from the set
$\cG\left( \frac{L}{3}, 0\right)$. Using \eqref{eq:app4-12.5},
we have
\begin{equation}
\cC' = P_3 / 2^{-2L/3}\label{eq:app4-14}.
\end{equation}

Combining \eqref{eq:app4-13} and \eqref{eq:app4-14} yields
\begin{equation*}
\left[ \cC' > 2^{2L/3}\left( \frac{1}{2} - \frac{6}{L}\right)
\right] \quad \Longrightarrow \left[ \cC\left(
\tilde{\fA}_{opt}^E\right) > 2^{2L/3}\left( \frac{1}{2} -
\frac{6}{L}\right) \right],
\end{equation*}
since $\cC\left( \tilde{\fA}_{opt}^E\right) > \cC'$. Next, using
$\cS^E_p \subseteq \tilde{\cS}^E_p$ yields
\begin{equation}
\cC_{min}^E = \cC\left( \fA_{opt}^E \right) \geq  \tcC_{min}^E =
\cC\left( \tilde{\fA}_{opt}^E \right) > 2^{2L/3}\left( \frac{1}{2} -
\frac{6}{L}\right), \label{eq:app4-bound-2}
\end{equation}
where $\fA_{opt}^E$ and $\tilde{\fA}_{opt}^E$ have been defined in
\eqref{eq:app4-1} and \eqref{eq:app4-2}, respectively. Hence, the
claim finally follows. \qed

\section{Proof of Theorem~\ref{thrm:optimality-exhaustive}}
\label{app-5}
\setcounter{equation}{0}
\renewcommand{\theequation}{V-\arabic{equation}}

For the sake of clarity, we use the notation $G_k\left( i_k = 1,
\theta_k, \left(\bq_{i_k}^{\theta_k+i_k-1}\right)_k \right)$
(instead of $G_k \left( i_k = 1, \theta_k,  \bq_{i_k}^{\theta_k+i_k-1}  \right)$)
throughout the proof in this section.
\begin{itemize}
\item[(i)] First of all, note that  letting $\fA_{opt}^E =
\{G_k\}_{k=1}^{\mathcal{C}\left( \fA_{opt}^E \right)}$ denote the
optimal exhaustive-search type QuBaR attack against ABSG with
success probability $\Pr_{succ}\left( \fA_{opt}^E \right)$, the claim
is equivalent to the following statement: For any $i \neq j; i,j \in
\{ 1,\ldots, \mathcal{C}\left( \fA_{opt}^E \right) \}$ (assuming
$\theta_j > \theta_i$ w.l.o.g.), we have $\left( \bq_{1}^{\theta_i}
\right)_i \neq \left( \bq_{1}^{\theta_i} \right)_j$. Suppose to the
contrary, we have $\left( \bq_{1}^{\theta_i} \right)_i = \left(
\bq_{1}^{\theta_i} \right)_j$ for some $i \neq j; i,j \in \{
1,\ldots, \mathcal{C}\left( \fA_{opt}^E \right) \}$ where w.l.o.g.
$\theta_j > \theta_i$. Given $\fA_{opt}^E$, we construct an
exhaustive-search type QuBaR attack $\tilde{\fA}^E$ via eliminating
$G_j$ from $\fA_{opt}^E$, i.e., $\tilde{\fA}^E \eqdef \{
\tilde{G}_k\}_{k=1}^{\mathcal{C}\left( \fA_{opt}^E \right)-1}$ where
$\tilde{G}_k = G_k$ for $k \in \{ 1, \ldots, j-1\}$ and $\tilde{G}_k
= G_{k+1}$ for $k \in \{  j, \ldots, \mathcal{C}\left( \fA_{opt}^E
\right)-1\}$. Next, note that
\begin{equation}
\left[ \left( \cT\left( G_j \right)= 1 \right)  \; \Rightarrow \;
\left( \cT\left(G_i \right) = 1 \right) \right] \quad
\Longrightarrow \quad \left[ \Pr \left( \vee_{k=1}^{\mathcal C\left(
\fA_{opt}^E \right)} \left[ \cT\left(G_k \right) = 1 \right] \right) = \Pr
\left( \vee_{1 \leq k \leq {\mathcal C}\left( \fA_{opt}^E \right), \, k \neq
j} \left[ \cT\left(G_k \right) = 1 \right]
 \right) \right] .
\label{eq:practicalcode-lem1}
\end{equation}
If $G_j$ is a correct guess, then all $\left( \bq_1^{\theta_j}
\right)_j$ are correct, which implies $\left( \bq_1^{\theta_i}
\right)_j$ are necessarily correct as well since $\theta_i <
\theta_j$. Further, this implies that $\left( \bq_1^{\theta_i}
\right)_i$ are correct as well per the contradiction assumption.
Hence, this proves the left hand side of
(\ref{eq:practicalcode-lem1}); thus, the right hand side of
(\ref{eq:practicalcode-lem1}) is true as well. This, in turn, is
equivalent to $\Pr_{succ}\left(\fA_{opt}^E\right) =
\Pr_{succ}(\tilde{\fA}^E)$ which yields the promised contradiction ($\cC \left( \tilde{\fA}^E \right) = \cC \left( \fA_{opt}^E \right) - 1$)
since $\fA_{opt}^E$ is an optimal exhaustive-search type QuBaR attack
for the given success probability
$\Pr_{succ}\left(\fA_{opt}^E\right)$; hence the proof the first
statement of Theorem~\ref{thrm:optimality-exhaustive}.

\item[(ii)] Suppose not; then this means that there exists some $i,j \in \left\{
1 , 2 , \ldots , \mathcal{C}\left( \fA_{opt}^E \right) \right\}$ , $i \neq
j$, such that $\Pr \left[ \left( \cT\left( G_i \right) =1 \right) \cap \right.$ $\left. \left(
\cT\left( G_j \right) =1 \right) \right] > 0$. This implies that
there is some realization $\tilde{\bq}$ of $\bQ$ with non-zero
probability such that the events of $\left( \cT\left( G_i \right) =
1 \right)$ and $\left( \cT\left( G_j \right) = 1 \right)$ both occur
at the same time. In other words, there exists some $\tilde{\bq}$
with $\Pr \left( \bQ = \tilde{\bq} \right) > 0$ such that $\left(
\bq_1^{\theta_i} \right)_i = \tilde{\bq}_1^{\theta_i}$ and $\left(
\bq_1^{\theta_j} \right)_j = \tilde{\bq}_1^{\theta_j}$. However,
this implies that $\left( \bq_1^{\theta_i} \right)_i$ is a prefix of
$\left( \bq_1^{\theta_j} \right)_j$  (assuming w.l.o.g. $\theta_i <
\theta_j$). Hence contradiction (per the first statement of
Theorem~\ref{thrm:optimality-exhaustive}) and the proof of the
second statement of Theorem~\ref{thrm:optimality-exhaustive}.

\item[(iii)] This statement is the direct consequence of the first
and second statements of the theorem.

\item[(iv)] First recall that, at optimality $\cC\left( \fA_{opt}^E \right)$ is the smallest possible value (given the success
probability $\mbox{Pr}_{succ}\left(\fA_{opt}^E \right)$). This
observation and (\ref{eq:practicalcode-unionboundoptimality})
clearly imply that the optimal strategy consists of ``sorted''
guesses (in descending order) with respect to the probabilities
$\left\{ \Pr \left( \cT\left( G_k \right) =1 \right) \right\}$ of
the corresponding success events $\left\{ \left(\cT\left( G_k
\right) =1 \right) \right\}$ since the success probability
$\mbox{Pr}_{succ}\left(\fA_{opt}^E\right)$ is fixed.

\end{itemize}
\qed

\section{Proof of Theorem~\ref{thrm:achievability-general}}
\label{app-6}
\setcounter{equation}{0}
\renewcommand{\theequation}{VI-\arabic{equation}}

The attack mentioned in the statement of the theorem is the ``most
probable case attack'' given in \cite{ABSG, gou-sib:06}, which consists of simply
``trying'' a guess of the all zero sequence of $\left\{ Q_i \right\}$ (of length $L/2$)
for non-overlapping windows of output; here we assume w.l.o.g. that $L$ is sufficiently large and even
per assumption A4 of Sec.~\ref{ssec:assumptions}. Formally, this attack can be defined as follows:
\[
\mathfrak{A}_{ach,opt} = \{G_k\}_{k=1}^{\mathcal{C}\left(
\fA_{ach,opt} \right)}, \textrm{ s.t. for each guess $G_k = \left( i_k , \theta_k , \bq_{i_k}^{i_k+\theta_k-1} \right)$,}  \; i_k =
(k-1) \frac{L}{2} + 1 , \, \theta_k = \frac{L}{2} , \,
\beta_k = 0 ,
\]
where we recall that, for each $k$, $\beta_k = \sum_{j=0}^{\theta_k-1} q_{i_k+j}$ and the probability of $G_k$'s being
correct is $\Pr \left[ \cT \left( G_k \right) = 1 \right] = 2^{-\left( \theta_k + \beta_k \right)}$.
Since $\beta_k = 0$  and $\theta_k = \frac{L}{2}$ for each guess $G_k$ of the proposed attack, we have,
\begin{equation}
\Pr \left[ \cT \left( G_k \right) = 1 \right] = 2^{-L/2} , \quad 1 \leq k \leq \cC \left( \fA_{ach,opt} \right) .
\label{eq:app6-1}
\end{equation}
Hence, we have
\begin{eqnarray}
\mbox{Pr}_{succ}(\mathfrak{A}_{ach,opt})
& = & \Pr\left( \vee_{k=1}^{\mathcal{C}\left( \fA_{ach,opt} \right)} \left[ \cT\left( G_k \right) = 1 \right]\right), \nonumber \\
& = & 1 - \Pr\left( \wedge_{k=1}^{\mathcal{C}\left( \fA_{ach,opt} \right)} \left[ \cT\left( G_k \right) = 0 \right]\right), \nonumber \\
 & = & 1 - \prod_{k=1}^{\mathcal{C}\left( \fA_{ach,opt} \right)}\Pr\left[ \cT\left( G_k \right) = 0 \right], \label{eq:thrm-achievability-general-1} \\
 & = & 1 - \left( 1 - 2^{-L/2} \right)^{\mathcal{C}\left( \fA_{ach,opt} \right)}, \label{eq:thrm-achievability-general-2}
\end{eqnarray}
where \eqref{eq:thrm-achievability-general-1} follows from the fact
that the events of $\left\{ \cT \left( G_k \right) = 0 \right\}$ are independent (since they correspond to sequences
of non-overlapping windows of $\left\{ Q_i \right\}$, which are i.i.d.)
and \eqref{eq:thrm-achievability-general-2} follows from \eqref{eq:app6-1}. Now, recall that
\begin{equation}
\lim_{x \rightarrow 0} \left( 1 - x \right)^{1/x} = 1/e
\label{eq:app6-2}
\end{equation}
Next, choosing $\mathcal{C}\left( \fA_{ach,opt} \right)=2^{L/2}$, for large $L$ we have
\[
\lim_{L \rightarrow \infty} \mbox{Pr}_{succ}(\mathfrak{A}_{ach,opt})
 = \lim_{L \rightarrow \infty} \left[ 1 - \left( 1 - 2^{-L/2} \right)^{ \left( 2^{L/2} \right)} \right] = 1 - \frac{1}{e} > \frac{1}{2} ,
\]
which follows from \eqref{eq:app6-2}. This implies that $\fA_{ach,opt} \in \cS_p$, where $\cC \left( \fA_{ach,opt} \right) = 2^{L/2}$
for sufficiently large $L$ (per assumption A4). Furthermore, note that all guesses $\left\{ G_k \right\}$ of the
proposed attack $\fA_{ach,opt}$ are equally-likely to succeed (cf. \eqref{eq:app6-1}), which subsequently implies that
$\cC_{ave} \left( \fA_{ach,opt} \right) = \frac{1}{2} \left( 2^{L/2} + 1 \right)$. Hence the proof.
\qed

\section{Proof of Theorem~\ref{thrm:converse-general}}
\label{app-7}
\setcounter{equation}{0}
\renewcommand{\theequation}{VII-\arabic{equation}}

Throughout the proof, we assume w.l.o.g. $L$ is even, since it is
sufficiently large per assumption A4. We first recall that we have
\begin{equation}
\forall \, k \in \bbZ^+ , \; \quad \Pr\left( \cT\left(G_k\right) = 1 \right)
\leq \left(\frac{1}{2}\right)^{L/2},
\label{eq:thrm-converse-general-0.5}
\end{equation}
due to \eqref{eq:upperboundonsuccess} of Appendix~\ref{app-4}
\footnote{Note that, in Appendix~\ref{app-4}, we derived
\eqref{eq:upperboundonsuccess} for exhaustive-search attacks, for which
the starting index of the attack is set to unity (cf. \eqref{eq:app4-Gset-def}).
However, after some straightforward algebra, it can be shown that,
following \eqref{eq:app4-Gset-def},
all the subsequent derivations of  Appendix~\ref{app-4}, regarding the ``valid ranges of
fundamental system parameters'', $\theta$, $\beta$, $\alpha$ and $B$,
(including the utilized result \eqref{eq:upperboundonsuccess}) are still valid even if
we relax the aforementioned condition on the starting index, which amounts to the
general case attacks. Thus, \eqref{eq:upperboundonsuccess} can be shown to
hold in the case of general QuBaR attacks.}.

Next, we proceed with a similar approach to the one pursued in the proof of
Theorem~\ref{thrm:converse-exhaustive}. In particular, we begin with
defining a set  $\tcS_p$, which is a superset of $\cS_p$,
the set of successful QuBaR attacks (cf. \eqref{eq:attack-class}):
\begin{equation}
\tilde{\cS}_p \eqdef
\left\{\mathfrak{A}=\{G_k\}_{k=1}^{\mathcal{C}\left( \fA \right)} :
\sum_{k=1}^{\mathcal{C}\left( \fA \right)}\Pr[\cT\left(G_k
\right)=1]>1/2\right\}. \label{eq:app7-1}
\end{equation}
Using the union bound yields
\[
\Pr\left(\vee_{k=1}^{\mathcal{C}\left( \fA \right)}\left[\cT\left(
G_k \right)=1 \right]\right) \leq \sum_{k=1}^{\mathcal{C}\left( \fA
\right)}\Pr\left(\cT\left( G_k \right)=1\right) .
\]
Thus, we have
\begin{equation}
\left[ \fA \in \cS_p \right] \quad \Longrightarrow \quad
\left[ \fA \in \tcS_p \right]  ,
\nonumber
\end{equation}
which implies
\begin{equation}
\cS_p \subseteq \tcS_p.
\label{eq:app7-2}
\end{equation}
Further, for any
$\mathfrak{A} \in \tilde{\mathcal{S}}_p$, we have
\begin{equation}
\frac{1}{2} < \sum_{k=1}^{\mathcal{C}\left( \fA \right)} \Pr\left[ \cT\left(G_k
\right) = 1 \right] \leq \sum_{k=1}^{\mathcal{C}\left( \fA \right)}
(1/2)^{L/2} = 2^{-L/2} \cC \left( \fA \right), \label{eq:thrm-converse-general-1}
\end{equation}
where the first and the second inequalities follow from \eqref{eq:app7-1} and
\eqref{eq:thrm-converse-general-0.5}, respectively.
As a result, we have
\[
\min_{\fA \in \cS_p} \cC \left( \fA \right) \geq \min_{\fA \in \tcS_p} \cC \left( \fA \right)
> 2^{L/2-1} ,
\]
where the first inequality follows from \eqref{eq:app7-2} and the second inequality
follows from the fact that \eqref{eq:thrm-converse-general-1}
holds for any $\fA \in \tcS_p$. Hence the proof.
\qed

\section*{Acknowledgement}
Authors wish to thank Nafiz Polat Ayerden and Mustafa Orhan Dirik of
Bo\u{g}azi\c{c}i University, Turkey for various helpful discussions
and comments.


\begin{thebibliography}{99}
\bibitem{hellman80}
M.~Hellman, ``A cryptanalytic time-memory trade-off,'' \emph{IEEE
Trans. Inf. Theory,} vol.~IT--26, no.~4, pp.~401--406, Jul.~1980.

\bibitem{bir-sha00}
A.~Biryukov and A.~Shamir, ``Cryptanalytic time/memory/data
tradeoffs for stream ciphers,'' \emph{Asiacrypt 2000}, Lecture Notes
in Computer Science, Berlin: Springer-Verlag, 2000, vol.~1976,
pp.~1--13.

\bibitem{sie85}
T.~Siegenthaler, ``Decrypting a Class of Stream Ciphers Using
Ciphertext Only'', \emph{IEEE Trans. Comput.}, vol.~34, no.~1,
pp.~81--85, Jan.~1985.

\bibitem{mei-sta89}
W.~Meier and O.~Staffelbach, ``Fast Correlation Attacks on Certain
Stream Ciphers'', \emph{Journal of Cryptology}, vol.~1,
pp.~159--176, 1989.

\bibitem{cou-mei03}
N.~Courtois and W.~Meier, ``Algebraic Attacks on Stream Ciphers with
Linear Feedback'', \emph{Advances in Cryptology-- EUROCRYPT 2003,}
Lecture Notes in Computer Science, Berlin: Springer-Verlag, 2003,
vol.~2656, pp.~345--359.

\bibitem{cou03}
N.~Courtois, ``Fast Algebraic Attacks on Stream Ciphers with Linear
Feedback'', \emph{Advances in Cryptology-- CRYPTO 2003,} Lecture
Notes in Computer Science, Berlin: Springer-Verlag, 2003, vol.~2729,
pp.~177--194, 2003.

\bibitem{ABSG}
A.~Gouget, H.~Sibert, C.~Berbain, N.~Courtois, B.~Debraize and
C.~Mitchell, ``Analysis of the Bit-Search Generator and Sequence
Compression Techniques'' in \emph{Fast Software Encryption--FSE
2005,} Lecture Notes in Computer Science, Berlin: Springer-Verlag,
2005, vol. 3557, pp. 196--214.

\bibitem{golomb}
S.~W.~Golomb, \emph{Shift Register Sequences}, Revised Edition,
Aegean Park Press, 1982.

\bibitem{BSG}
A.~Gouget and H.~Sibert, ``The Bit-Search Generator'' in \emph{The
State of the Art of Stream Ciphers: Workshop Record, Brugge,
Belgium, October 2004}, pp. 60--68, 2004.

\bibitem{shrinking}
D.~Coppersmith, H.~Krawczyk and Y.~Mansour, ``The Shrinking Generator''
in \emph{Advances in Cryptology--CRYPTO'93,} Lecture Notes in
Computer Science, Berlin: Springer-Verlag, 1993, vol. 773, pp.
22--39.

\bibitem{self_shrinking}
W.~Meier and O.~Staffelbach, ``The Self-Shrinking Generator'' in
\emph{Advances in Cryptology-- EUROCRYPT'94,} Lecture Notes in
Computer Science, Berlin: Springer-Verlag, 1994, vol. 905, pp.
205--214.


\bibitem{most_prob_SSG}
M.~Mihaljevic,  ``A Faster Cryptanalysis of the Self-Shrinking
Generator'' in \emph{First Australasian Conference on Information
Security and Privacy ACISP'96,} Lecture Notes in Computer Science,
Berlin: Springer-Verlag, 1996, vol.~1172, pp. 182--189.

\bibitem{zener_BDD_SSG}
E.~Zenner, M.~Krause, and S.~Lucks, ``Improved Cryptanalysis of the
Self-Shrinking Generator'' in \textit{Australasian Conference on
Information Security and Privacy ACISP'01,} Lecture Notes in
Computer Science, Berlin: Springer-Verlag, 2001, vol.~2119, pp.~21--35.

\bibitem{johansson_SSG}
M.~Hell and T.~Johansson, ``Two New Attacks on the Self-Shrinking
Generator,'' \emph{IEEE Trans. Inf. Theory,} vol.~IT--52, no.~8, pp.~3837--3843,
Aug.~2006.

\bibitem{gou-sib:06}
A.~Gouget and H.~Sibert, ``How to Strengthen Pseudo-random Generators
by Using Compression,'' in \emph{Advances in Cryptology--EUROCRYPT
2006,} Lecture Notes in Computer Science, Berlin: Springer-Verlag,
2006, vol.~4004, pp.~129--146.

\bibitem{Alt08}
Y.~Altu\u{g}, N.~P.~Ayerden, M.~K.~M{\i}h\c{c}ak and E.~Anar{\i}m, ``A Note on
the Periodicity and the Output Rate of Bit Search Type Generators,''
\emph{IEEE Trans. Inf. Theory,}  vol.~IT--54, no.~2, pp.~666--679, Feb.~2008.

\bibitem{Goldreich01}
O.~Goldreich, \emph{Foundations of Cryptography, Volume 1}, Cambridge
University Press, 2001.

\bibitem{sha:48}
C.~E.~Shannon, ``A Mathematical Theory of Communication,''
\emph{Bell Syst. Tech. J.}, vol.~27, pp.~379--423, 623--656, 1948.

\bibitem{cov:91}
T.~M.~Cover and J.~A.~Thomas, \emph{Elements of Information Theory},
New York: Wiley, 1991.
\end{thebibliography}
\end{document}